\newcommand{\Ai}{\mathrm{Ai}}
\newcommand{\gfrac}[2]{\genfrac{}{}{0pt}{}{#1}{#2}}
\newcommand{\MeijerG}[5]{G^{#1}_{#2} \left( \left. \gfrac{#3}{#4} \right\rvert {#5} \right)}
\newcommand{\FoxH}[5]{H^{#1}_{#2} \left[ {#5} \left\rvert \gfrac{#3}{#4} \right. \right]}
\newcommand{\compC}{\mathbb{C}}
\newcommand{\realR}{\mathbb{R}}
\newcommand{\intZ}{\mathbb{Z}}
\newcommand{\halfH}[1][\theta]{\mathbb{H}_{#1}}
\newcommand{\bigO}{\mathcal{O}}
\newcommand{\g}{\widetilde{g}}
\newcommand{\G}{\widetilde{G}}
\newcommand{\Y}{\widetilde{Y}}
\newcommand{\T}{\widetilde{T}}
\newcommand{\transform}{\mathcal{T}}
\newcommand{\J}{\widetilde{J}_c}
\newcommand{\V}{\widetilde{V}}
\newcommand{\Pb}{\widetilde P^{(b)}}
\newcommand{\Vb}{\widetilde V^{(b)}}
\renewcommand{\c}{\widetilde{c}}
\newcommand{\U}{\widetilde{U}}
\renewcommand{\H}{\widetilde{H}}
\newcommand{\f}{\widetilde{f}}
\renewcommand{\P}{\widetilde{P}}
\newcommand{\C}{\widetilde{C}}
\newcommand{\R}{\widetilde{R}}
\newcommand{\Q}{\widetilde{Q}}
\newcommand{\Pinfty}{\widetilde P^{(\infty)}}
\newcommand{\s}{\widetilde{s}}
\newcommand{\n}{\widetilde{n}}
\newcommand{\m}{\widetilde{m}}
\renewcommand{\b}{\widetilde{b}}
\renewcommand{\a}{\widetilde{a}}
\newcommand{\tP}{\mathcal{P}}
\newcommand{\tR}{\mathcal{R}}
\renewcommand{\S}{\widetilde{S}}
\newcommand{\tRtilde}{\widetilde{\mathcal{R}}}
\newcommand{\SigmaR}{\pmb{\mathit{\Sigma}}}
\newcommand{\SigmaRtilde}{\widetilde{\pmb{\mathit{\Sigma}}}}
\newcommand{\Deltatilde}{\widetilde{\Delta}}
\newcommand{\natN}{\mathbb{N}}
\newcommand{\Z}{\widetilde{Z}}
\newcommand{\Dressing}{\mathcal{D}}
\newcommand{\Dressinginv}{\mathcal{E}}
\newcommand{\Dressingtilde}{\widetilde{\mathcal{D}}}
\newcommand{\Dressinginvtilde}{\widetilde{\mathcal{E}}}
\newcommand{\transJ}{\mathcal{J}}
\newcommand{\Inversion}{\mathcal{I}}
\newcommand{\Cauchy}{\mathcal{C}}
\newcommand{\Cauchytilde}{\widetilde{\mathcal{C}}}
\newcommand{\translate}{T}
\newcommand{\rb}{r^{(b)}}
\DeclareMathOperator{\supp}{supp}
\DeclareMathOperator{\dressing}{dressing}
\DeclareMathOperator{\id}{id}
\DeclareMathOperator{\pre}{pre}
\DeclareMathOperator{\model}{model}
\newtheorem{theorem}{Theorem}[section]
\newtheorem{lemma}[theorem]{Lemma}
\newtheorem{prop}[theorem]{Proposition}
\newtheorem{RHP}[theorem]{RH problem}
\theoremstyle{remark}
\newtheorem{rmk}[theorem]{Remark}
\theoremstyle{definition}
\newtheorem{defn}[theorem]{Definition}
\numberwithin{equation}{section}
\begin{document}

\title{Hard edge universality of Muttalib-Borodin ensembles with real parameter $\theta$}

\author{Dong Wang\footnotemark}
\maketitle
\renewcommand{\thefootnote}{\fnsymbol{footnote}}
\footnotetext{School of Mathematical Sciences, University of Chinese Academy of Sciences, Beijing 100049, P.~R.~China. E-mail: wangdong\symbol{'100}wangd-math.xyz}

\begin{abstract}
  We analyse the hard edge limit of the Muttalib-Borodin ensembles with general potential, and show that the limiting correlation kernel found in the ensemble with linear potential is universal. We also prove the Plancherel-Rotach type asymptotics of the biorthogonal polynomials associated to the Muttalib-Borodin ensembles around zero, where the limits are given by Wright's generalized Bessel functions. To accomplish these results, we implement the Deift-Zhou steepest-descent method on the vector Riemann-Hilbert problems for the biorthogonal polynomials, and develop a new method to construct the hard edge local parametrix at zero. The results in this paper are valid for all real parameter $\theta > 0$ in the Muttalib-Borodin ensembles, and this paper generalizes \cite{Wang-Zhang21} that considers only the integer $\theta$ case.
\end{abstract}

\section{Introduction}

\subsection{The model and the goal}

In this paper we are concerned with the particle system that has $n$ particles $x_1< \dotsb < x_n$ distributed on $[0,\infty)$, with the probability density function
\begin{equation} \label{eq:bioLag}
  \frac{1}{\mathcal{Z}_n}\prod_{i < j} (x_i - x_j) (x_i^\theta - x_j^\theta)\prod_{j=1}^n x^{\alpha}_j e^{-nV(x_j)},
\end{equation}
where $\theta > 0$ and $\alpha > -1$ are fixed parameters, $V$ is a potential function over the positive real axis and $\mathcal{Z}_n$ is the normalization constant. For the well-definedness of the particle system, we assume
\begin{equation}\label{eq:loggrowth}
  \lim_{x\to +\infty}\frac{V(x)}{\log x}= + \infty.
\end{equation}
This particle system was first introduced by Muttalib as a toy model in the studies of quasi-$1$ dimensional disordered conductors \cite{Muttalib95}, and was analysed by Borodin when the weight function is the classical Laguerre weight $V(x) = x$ \cite{Borodin99}. Hence, it is named the Muttalib-Borodin ensemble.

The Muttalib-Borodin ensemble has applications in physics \cite{Beenakker97}, \cite{Lueck-Sommers-Zirnbauer06}, \cite{Katori-Takahashi12}, and has relations to the product of random matrices \cite{Akemann-Ipsen15}, \cite{Akemann-Ipsen-Kieburg13}, \cite{Kuijlaars-Zhang14}. In mathematical literature, it has been studied in various aspects, for example, in \cite{Adler-van_Moerbeke-Wang11}, \cite{Betea-Occelli20a}, \cite{Betea-Occelli20}, \cite{Cheliotis14}, \cite{Forrester23}, \cite{Forrester-Liu14}, \cite{Forrester-Wang15}, \cite{Gautie-Le_Doussal-Majumdar-Schehr19}, \cite{Grela-Majumdar-Schehr21}, \cite{Kuijlaars-Stivigny14} and \cite{Zhang15}. In our paper, we view the Muttalib-Borodin ensemble as an archetype of biorthogonal ensembles, and prove its local universality at the hard edge $0$ via the Plancherel-Rotach type asymptotics of the biorthogonal polynomials associated to it.

In the special case that $\theta=1$, the Muttalib-Borodin ensemble becomes the classical Laguerre-type unitary invariant ensemble \cite{Anderson-Guionnet-Zeitouni10}, \cite{Forrester10}, \cite{Vanlessen07}, which features the hard-edge universality at $0$ as follows: For a large class of potential function $V$, the limiting distribution of the left-most particles converge, upon a $V$-dependent scaling factor, to a limiting distribution that is independent of $V$, and the limiting distribution is a determinantal point process defined by the Bessel kernel. This universality stems from the Plancherel-Rotach type limit around $x = 0$ of the orthogonal polynomials associated to the weight function $x^{\alpha} e^{-nV(x)}$. For any parameter $\theta > 0$, the Muttalib-Borodin ensemble is a determinantal point process \cite{Borodin99}, and for general $\theta$, the correlation kernel is expressed by the biorthogonal polynomials with respect to $x^{\alpha} e^{-nV(x)}$, which generalizes the orthogonal polynomials for $\theta = 1$. Hence, the method for $\theta = 1$ may also work for the general $\theta > 0$ case, and we expect the local limiting distribution of the left-most particles to be derived by the Plancherel-Rotach type limit around $x = 0$ of the biorthogonal polynomials. This approach has witnessed partial success:
\begin{itemize}
\item
  If $V(x) = x$, then the biorthogonal polynomials have explicit formulas, from which their Plancherel-Rotach type limits can be computed. Borodin \cite{Borodin99} solved this case, and see also \cite{Forrester-Wang15} and \cite{Zhang15}.
\item
  If $V(x)$ is in a general class of functions that satisfies the one-cut regular condition (see Section \ref{subsec:regularity} below), and $\theta$ is either an integer \cite{Wang-Zhang21} or the reciprocal of an integer \cite{Kuijlaars-Molag19,Molag20}, then the limiting distribution of the left-most particles around $0$ is proved the be the same as the $V(x) = x$ case. Hence, the universality is verified for such special $\theta$.
\end{itemize}

In this paper, we find the Plancherel-Rotach asymptotics of Muttalib-Borodin biorthogonal polynomials for any $V$ that satisfies the one-cut regular condition, and any $\theta > 0$, and prove that the universality holds for the left-most particles in the Muttalib-Borodin ensemble.

\subsection{Basic definitions of biorthogonal polynomials and determinantal point processes}

The Muttalib-Borodin ensemble is an example of biorthogonal ensembles \cite{Borodin99}, \cite{Desrosiers-Forrester08}, which are a subclass of determinantal point processes \cite{Soshnikov00}, \cite{Tracy-Widom98}. This means that there exits a correlation kernel $K_n(x,y)$ such that the density function \eqref{eq:bioLag} can be rewritten in the following determinantal form:
\begin{equation} \label{eq:kernel_intro}
  \frac{1}{n!}\det\left(K_n(x_i,x_j)\right)_{i,j=1}^n,
\end{equation}
and $K_n(x, y)$ encodes all the information of this ensemble. To find the limiting distribution of the left-most particles around $0$, we only need to compute the limit of $K_n(x, y)$ as $n \to \infty$ and $x, y \to 0$ at an appropriate speed. (See \cite{Borodin11} for general properties of determinantal point processes.)

The biorthogonal polynomials are two sequences of monic polynomials $\{p_j(x)=p^{(V)}_{n, j}(x) \}_{j=0}^{\infty}$ and $\{q_k(x)=q^{(V)}_{n, k}(x)\}_{k=0}^\infty$ that satisfy the biorthogonal conditions
\begin{equation}\label{eq:pqbioOP}
  \int^{\infty}_0 p_{j}(x) q_{k}(x^\theta) x^{\alpha} e^{-nV(x)} dx = \kappa_j\delta_{j, k},
\end{equation}
where $\kappa_j=\kappa_{n,j}^{(V)}>0$. The functions $p_n(z)$ and $q_n(z^{\theta})$ can be interpreted as the averages over the Muttalib-Borodin ensemble (\cite[Equation (1.14)]{Wang-Zhang21}). Although we do not need the probability interpretation of $p_n(z)$ and $q_n(z^{\theta})$, we note that it directly implies the existence and uniqueness of $\{p_j(x)=p^{(V)}_{n, j}(x) \}_{j=0}^{\infty}$ and $\{q_k(x)=q^{(V)}_{n, k}(x)\}_{k=0}^\infty$.

With the biorthogonal polynomials given in \eqref{eq:pqbioOP}, it is known that the correlation kernel in \eqref{eq:kernel_intro} can be written as \cite[Equation (2.11)]{Borodin99}
\begin{equation} \label{eq:sum_form_K}
  K_n(x, y) = x^{\alpha} e^{-nV(x)}
  \sum^{n - 1}_{j = 0} k^{(V)}_{n, j}(x, y),
 \quad \text{where} \quad k^{(V)}_{n, j}(x, y) = \frac{p_j(x) q_j(y^{\theta})}{\kappa_j} = \frac{p^{(V)}_{n, j}(x) q^{(V)}_{n, j}(y^{\theta})}{\kappa^{(V)}_{n, j}} .
\end{equation}
Here $p_j = p^{(V)}_{n, j}$, $q_j = q^{(V)}_{n, j}$ and $\kappa_j = \kappa^{(V)}_{n, j}$ depend on $n$ and the potential function $V$. 

\subsection{The one-cut regular condition} \label{subsec:regularity}

Throughout this paper, we assume that $V(x)$ is real analytic on $[0, \infty)$. Since $V$ satisfies \eqref{eq:loggrowth}, the limiting empirical measure of the particles in \eqref{eq:bioLag} exists as $n\to \infty$, and it is the unique probability measure over $[0,+\infty)$ that minimizes the energy functional
\begin{equation} \label{eq:equilibriump}
  I^{(V)}(\nu):=\frac{1}{2}\iint \log \frac{1}{\lvert x-y \rvert} d\nu(x) d \nu(y)+\frac{1}{2}\iint \log \frac{1}{\lvert x^\theta-y^\theta \rvert} d\nu(x) d \nu(y) +\int V(x) d\nu(x);
\end{equation}
see \cite[Theorem 2.1 and Corollary 2.2]{Eichelsbacher-Sommerauer-Stolz11} and \cite[Theorem 1.2 and Corollary 1.4]{Butez17}.
Moreover, the equilibrium measure $\mu=\mu^{(V)}$ is characterized by the following Euler-Lagrange conditions:
\begin{align}
  \int \log \lvert x-y \rvert d\mu(y)+ \int \log \lvert x^\theta-y^\theta \rvert d\mu(y)-V(x)&=\ell, \quad x \in \supp(\mu), \label{eq:EL1}
  \\
  \int \log \lvert x-y \rvert d\mu(y)+ \int \log \lvert x^\theta-y^\theta \rvert d\mu(y)-V(x)&\leq \ell, \quad x \in [0,+\infty), \label{eq:EL2}
\end{align}
where $\ell$ is some real constant.

Following \cite{Kuijlaars-Molag19,Molag20}, we require the potential $V$ to be \emph{one-cut regular}, in the sense that
\begin{itemize}
\item
  the equilibrium measure $\mu$ is supported on one interval $[0, b]$ with a continuous density function $\psi=\psi^{(V)}$ for some $b=b^{(V)}>0$, that is,
  \begin{equation}\label{eq:dmu}
    d\mu(x) = \psi (x) dx,\qquad x \in (0, b);
  \end{equation}
\item
  $\psi(x) > 0$ on $(0, b)$, and there exist two positive numbers $d_1=d^{(V)}_1$ and $d_1=d^{(V)}_1$ such that
  \begin{equation}\label{eq:psiasy}
    \psi(x) = \left\{
      \begin{array}{ll}
        d_1 x^{-\frac{1}{\theta + 1}}(1 + o(1)), & \hbox{$x \to 0_+$,} \\
        d_2(b - x)^{\frac12}(1 + o(1)), & \hbox{$x\to  b_-$;}
      \end{array}
    \right.
  \end{equation}
\item
  the inequality \eqref{eq:EL2} holds strictly for $x\in (b,+\infty)$.
\end{itemize}
An explicit expression of $\psi$ is given in \cite{Claeys-Romano14}; see also \cite[Section 2.1]{Wang-Zhang21}. Given a potential $V$, it is not easy to check if it is one-cut regular. A concrete sufficient condition that implies the one-cut regularity is, by \cite{Kuijlaars-Molag19} and \cite{Molag20}\footnote{\cite[Proposition 3.6]{Kuijlaars-Molag19} claims the result for $\theta = 1/2$, and \cite{Molag20} confirms that the argument in \cite{Kuijlaars-Molag19} works for all rational $\theta > 0$. We note that the argument in \cite{Kuijlaars-Molag19} works for all real $\theta > 0$.},
\begin{equation} \label{eq:one-cut_regular}
  V''(x)x+V'(x)>0, \qquad x > 0.
\end{equation}

\subsection{Statement of main results} \label{subsec:statement_results}

To state the main theorems in our paper, we denote, with $d_1$ is defined in \eqref{eq:psiasy},
\begin{align} \label{eq:defn_rho}
  m_{\theta} = {}& \min \left( 1 + \frac{1}{\theta}, 2 \right), & c = {}& c^{(V)} = b \theta(1+\theta)^{-1-\theta^{-1}}, & \rho = {}& \rho^{(V)} = \left. \theta^{-1} d_1 \pi \middle/ \sin\left(\frac{\pi}{1 + \theta}\right), \right.
\end{align}

\begin{theorem} \label{thm:pqkappa}
  Suppose $V$ is real analytic on $[0,\infty)$ and satisfies the one-cut regular condition. As $n \to \infty$, 
 \begin{align}
    p_n \left( \frac{z}{(\rho n)^{1 + \theta^{-1}}} \right) = {}& (-1)^n C_n \left( J_{\frac{\alpha + 1}{\theta}, \frac{1}{\theta}}(\theta z) + \bigO(n^{\frac{1 - m_{\theta}}{1 + m_{\theta}}}) \right), \label{eq:result_p_n} \\
    q_n \left( \frac{z^{\theta}}{(\rho n)^{\theta + 1}} \right) = {}& (-1)^n \C_n \left( J_{\alpha + 1, \theta}((\theta z)^{\theta}) + \bigO(n^{\frac{1 - m_{\theta}}{1 + m_{\theta}}}) \right), \label{eq:result_q_n}
  \end{align}
  where
  \begin{align}
    C_n = {}& \sqrt{2\pi} c^{\frac{2(\alpha + 1) - \theta}{2(\theta + 1)}} (\rho n)^{\frac{\alpha + 1}{\theta} - \frac{1}{2}} e^{n\Re g(0)}, & \C_n = {}& \sqrt{2\pi} c^{\frac{\alpha + 1/2}{1 + \theta^{-1}}} (\theta \rho n)^{\alpha + 1/2} e^{n\Re \g(0)}.
  \end{align}
  and $J_{a_1, a_2}(x)$ is Wright's generalized Bessel function (also called Bessel-Maitland function) \footnote{Here we use the notational convention in \cite{Borodin99}, which is as remarked there, different from the original notational convention in \cite{Wright35}. The relation between these two notational conventions is $J_{a, b}(z) = \phi(a, b; -z)$. We remark that classical literature like \cite{Erdelyi-Magnus-Oberhettigner-Tricomi81a} follows \cite{Wright35}, while literature in random matrix theory like \cite{Forrester-Wang15} follows \cite{Borodin99}.}
  \begin{equation}
    J_{a_1, a_2}(x) = \sum^{\infty}_{j = 0} \frac{(-x)^j}{j! \Gamma(a_1 + ja_2)}.
  \end{equation}
\end{theorem}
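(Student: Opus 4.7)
The plan is to implement the Deift--Zhou nonlinear steepest-descent method on a vector Riemann--Hilbert problem (RHP) whose solution simultaneously encodes $p_n$ and $q_n$, and then read off the asymptotics \eqref{eq:result_p_n}--\eqref{eq:result_q_n} by unravelling a chain of explicit transformations. For integer $\theta$ such a vector RHP is set up in \cite{Wang-Zhang21}, and for rational $\theta$ in \cite{Kuijlaars-Molag19,Molag20}; the present paper needs a formulation that survives for arbitrary real $\theta>0$, where neither a finite Meijer $G$-function representation nor a bounded-size vector RHP in the classical sense is available.

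The first transformation normalizes the RHP at infinity by absorbing the two logarithmic kernels in \eqref{eq:equilibriump} into a pair of $g$-functions
\[
  g(z)=\int\log(z-s)\,d\mu(s),\qquad \g(z)=\int\log(z^{\theta}-s^{\theta})\,d\mu(s),
\]
with branch cuts along $[0,b]$ and its preimage under $z\mapsto z^{\theta}$. The Euler--Lagrange conditions \eqref{eq:EL1}--\eqref{eq:EL2} then put the jumps on $(0,b)$ into purely oscillatory form and make those on $(b,\infty)$ exponentially close to the identity. Opening lenses around $(0,b)$ factorizes the oscillatory jumps as triangular factors that are exponentially small off the real axis. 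From here I would build a global parametrix from the one-cut geometry (using the square-root behaviour at $b$, the $x^{-1/(\theta+1)}$ behaviour at $0$ supplied by \eqref{eq:psiasy}, and the exponent $\alpha$), and a standard Airy parametrix near the soft edge $b$.

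The genuinely new step, and the main obstacle, is the construction of the hard-edge local parametrix at $0$ for arbitrary real $\theta>0$. I would build it directly from a matrix of Wright's Bessel functions $J_{a,b}$ and their analytic continuations via the Mellin--Barnes representation
\[
  J_{a,b}(x)=\frac{1}{2\pi i}\int_{\mathcal{C}}\frac{\Gamma(-s)}{\Gamma(a+bs)}(-x)^{s}\,ds,
\]
and then verify two things: first, that the discrete monodromy of these functions on their principal sheets reproduces the prescribed jumps on the rays emanating from $0$ for every real $b\in\{\theta,1/\theta\}$; second, that their sectorial large-$|x|$ asymptotics match the global parametrix on the boundary of a shrinking disc around $0$. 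The matching is the costly part: for irrational $\theta$ the exponential prefactors and Stokes-type subdominant corrections have to be extracted by a saddle-point analysis of the Mellin integral, with the saddle moving according to $\theta$ and producing the critical exponent $m_{\theta}=\min(1+\theta^{-1},2)$; a careful control of this saddle yields the matching error $\bigO(n^{(1-m_{\theta})/(1+m_{\theta})})$ that appears in \eqref{eq:result_p_n}--\eqref{eq:result_q_n}.

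Once the hard-edge parametrix is in hand, the comparison with the true RHP is small-norm and is solved by the usual Neumann series. Inverting the transformation chain produces the stated asymptotics: the scaling $z\mapsto z/(\rho n)^{1+\theta^{-1}}$ arises from the local conformal coordinate at $0$, whose leading coefficient is fixed by $d_1$ and $\theta$ and hence reproduces $\rho$ from \eqref{eq:defn_rho}; the prefactors $C_n$ and $\C_n$ collect the exponential contributions $e^{n\Re g(0)}$ and $e^{n\Re \g(0)}$ from the $g$-function step, the $c$- and $\alpha$-dependent algebraic factors from the global parametrix at $0$, and the $\sqrt{2\pi}$ normalization coming from the Mellin integrals. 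Reading appropriate entries of the transformed vector then gives \eqref{eq:result_p_n} for $p_n$ and, via the $\theta\leftrightarrow 1/\theta$ duality built into the biorthogonality \eqref{eq:pqbioOP}, \eqref{eq:result_q_n} for $q_n$.
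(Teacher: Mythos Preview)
Your outline is accurate through the $g$-function normalization, the lens opening, the global parametrix, and the Airy parametrix at $b$; these steps are indeed carried over verbatim from \cite{Wang-Zhang21} and work for all $\theta>0$. The gap is precisely where you locate the ``genuinely new step'': you propose to build the hard-edge parametrix at $0$ as a \emph{matrix} of Wright Bessel functions and then check jumps and matching. This is the approach of \cite{Wang-Zhang21} and \cite{Kuijlaars-Molag19,Molag20}, and it is exactly what fails for irrational $\theta$. When $\theta=p/q$ the jumps of the local model problem close up after $p+q$ sheets and one can assemble a $(p+q)\times(p+q)$ matrix whose columns are the finitely many independent Wright--Bessel solutions; for irrational $\theta$ the monodromy never closes, there is no fixed finite size, and the object you describe simply does not exist as a matrix. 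Your sentence ``the discrete monodromy \dots\ reproduces the prescribed jumps \dots\ for every real $b\in\{\theta,1/\theta\}$'' is where the argument would break: the monodromy relations \eqref{eq:G_split}--\eqref{eq:H_split} do hold for every real $\theta$, but they do not organise the solutions into a finite-dimensional invariant subspace.

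The paper's substitute is to regard the local parametrix not as a matrix but as a pair of mutually inverse \emph{infinite-rank operators} $P^{(0)}$ and $Q^{(0)}$ between a space $H(r)$ of analytic functions and a tailored function space $V^{(n),\dressing}_\alpha(r)$ encoding the correct jumps and blow-up at $0$ (Definitions~\ref{defn:V_alpha} and~\ref{defn:V_dressing}). The building blocks are the families $G^{(\ell),\model}$ and $H^{(\ell),\model}$ of Fox $H$-functions indexed by $\ell\in\mathbb Z$; Lemmas~\ref{lem:GH_orthogonal}--\ref{lem:V_alpha(R)_series_rep} establish that they form a biorthogonal ``basis'' so that every element of $V_\alpha(R)$ has a unique expansion $\sum_\ell c_\ell G^{(\ell),\model}$. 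The operator $Q^{(0)}$ sends $U$ (the transformed RHP data near $0$) to its coefficient series, which is genuinely analytic, and the small-norm step is then run on the scalar function $\tR$ on a fixed contour in the $s$-plane, with the shrinking-disc matching replaced by the estimate~\eqref{eq:G^ell_at_0:5}. When $\theta$ is rational these operators collapse to the finite matrices of \cite{Wang-Zhang21} via \eqref{eq:P^(0)_pre_rational}--\eqref{eq:Q^(0)_pre_rational}, so the construction is a strict generalisation. Finally, the paper does not obtain \eqref{eq:result_q_n} from \eqref{eq:result_p_n} by the $\theta\leftrightarrow 1/\theta$ duality you invoke; it runs the entire analysis a second time on the companion RHP for $\Y$ (Section~\ref{sec:asy_RH_Y_tilde}), with a parallel operator pair $\P^{(0)},\Q^{(0)}$ acting on a space $\V_\alpha(R)$.
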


\begin{lemma} \label{lem:kappa}
  With $V$ under the same condition as in Theorem \ref{thm:pqkappa}, we have
  \begin{equation} \label{eq:limit_kappa_n}
    \kappa_n = 2\pi \theta^{-1/2} c^{\alpha + 1} e^{n\ell}(1 + \bigO(n^{\frac{-m_{\theta}}{m_{\theta} + 1}}).
  \end{equation}
\end{lemma}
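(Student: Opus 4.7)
The plan is to extract $\kappa_n$ from the vector Riemann-Hilbert solution $\Y$ used to prove Theorem~\ref{thm:pqkappa}. In the standard setup for biorthogonal polynomials, $\kappa_n$ can be read off from particular matrix entries of $\Y$ near infinity (or, equivalently, near $0$) via a residue-type identity analogous to the classical orthogonal polynomial formula $h_n = -2\pi i (Y_1)_{21}$; here the role of $Y_1$ is played by a subleading coefficient of $\Y$ in its prescribed asymptotic expansion. I would therefore first record this algebraic identity in the vector setting, tracking exactly which entries of $\Y$ encode $\kappa_n$.

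Next, I would push this identity through the sequence of steepest-descent transformations $\Y \to \T \to \St \to \tR$ already carried out for Theorem~\ref{thm:pqkappa}, where $\T$ is the $g$-function transformation, $\St$ the lens opening, and $\tR$ the small-norm residual obtained after dividing out the global parametrix $\Pinfty$ and the local parametrices at $0$ and $b$. After these transformations the relevant entry factors as (outer parametrix entry)\,$\cdot$\,(exponential $g$-factor)\,$\cdot$\,(correction from $\tR$ at the reference point). The outer parametrix and the exponential factors together produce the leading constant, while the $\tR$ correction yields the error term.

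The constants assemble as follows. The step $\Y \to \T$ produces $e^{ng(\cdot)}$ and $e^{n\g(\cdot)}$; combining these at the appropriate reference point via the Euler-Lagrange equality~\eqref{eq:EL1} collapses them into the clean factor $e^{n\ell}$. The outer parametrix contributes $c^{\alpha+1}$, reflecting the local behavior of $\Pinfty$ at the hard edge together with the constant $c = b\theta(1+\theta)^{-1-\theta^{-1}}$ from~\eqref{eq:defn_rho}. The factor $2\pi\theta^{-1/2}$ comes from the hard-edge parametrix built out of Wright generalized Bessel functions: the $\sqrt{2\pi}$ is the standard saddle/contour normalization (also visible inside $C_n$ and $\C_n$ in Theorem~\ref{thm:pqkappa}), while $\theta^{-1/2}$ arises from the Jacobian of the change of variables $x \mapsto x^\theta$ used to pair $p_n(x)$ against $q_n(x^\theta)$ in~\eqref{eq:pqbioOP}. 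The error $\bigO(n^{-m_\theta/(m_\theta+1)})$ is inherited directly from the uniform bound on $\tR - I$ obtained for Theorem~\ref{thm:pqkappa}, whose rate is governed by the matching discrepancy between the hard-edge parametrix and $\Pinfty$ on the boundary of the shrinking disk around $0$.

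The main obstacle is bookkeeping: threading all algebraic prefactors through the vector-valued transformations without a sign or exponent error, and identifying unambiguously which entry of $\tR$ contributes. Conceptually no new analytic input is needed beyond what Theorem~\ref{thm:pqkappa} already supplies, so Lemma~\ref{lem:kappa} is essentially a byproduct of the same steepest-descent analysis, with the same parametrix-matching error controlling its subleading correction.
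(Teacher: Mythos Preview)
Your overall strategy---read $\kappa_n$ from the subleading coefficient of the Cauchy-type component at infinity and track it through the steepest-descent transformations---is exactly what the paper does. The paper works with $Y$ (specifically $Y_2(z) = Cp_n(z)$ and the expansion~\eqref{eq:Y_2_asy_infty}) rather than $\Y$, but by symmetry either route is fine. Also note that in this vector framework $\tR$ is scalar-valued, so the residual is $\tR-1$, not $\tR-I$; and the constants $2\pi$, $\theta^{-1/2}$, $c^{\alpha+1}$ all come from the behaviour of $P^{(\infty)}_2$, $\g$, and the Cauchy normalization \emph{at infinity}, not from the hard-edge parametrix.

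There is, however, a genuine gap in your error analysis. The uniform bound on the residual used to prove Theorem~\ref{thm:pqkappa} is $\tR(s)-1=\bigO(n^{(1-m_\theta)/(1+m_\theta)})$ (Lemma~\ref{lem:tR_est}\ref{enu:lem:tR_est:1} and its $\widetilde{\ }$ analogue), \emph{not} $\bigO(n^{-m_\theta/(m_\theta+1)})$. If you simply quote that uniform bound, you get only $\kappa_n = 2\pi\theta^{-1/2}c^{\alpha+1}e^{n\ell}(1+\bigO(n^{(1-m_\theta)/(1+m_\theta)}))$, which is strictly weaker than the lemma's claim. The sharper rate $\bigO(n^{-m_\theta/(m_\theta+1)})$ requires a \emph{pointwise} estimate of $\tR(s)-1$ at the specific point $s\to 0$ corresponding to $z\to\infty$ (Lemma~\ref{lem:tR_est}\ref{enu:lem:tR_est:2}). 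The improvement is possible because $s=0$ lies at positive distance from the shrinking circle $C^R(r_n)$ where the dominant jump lives; in~\eqref{eq:R_in_f^R} the $L^2$ norm of $(w-s)^{-1}$ over $\SigmaR\setminus\SigmaR^{(2)}_2$ stays $\bigO(1)$ rather than blowing up like $r_n^{-1/2}$, while the piece $\SigmaR^{(2)}_2$ touching $0$ carries only an exponentially small jump. The paper singles this out explicitly, noting that this is where the present argument improves on \cite{Wang-Zhang21}. You need to add this pointwise step; otherwise the stated error bound is not obtained.
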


From the Plancherel-Rotach type asymptotics of $p_n$ and $q_n$ in Theorem \ref{thm:pqkappa}, the limit of $\kappa_n$ in \eqref{eq:limit_kappa_n}, and the summation formula \eqref{eq:sum_form_K}, we derive the following limit formula of the correlation kernel for the Muttalib-Borodin ensemble:
\begin{theorem} \label{thm:kernel}
  For real analytic $V$ that satisfies the conditions \eqref{eq:loggrowth} and \eqref{eq:one-cut_regular}, we have, for $\theta\in \mathbb{N}$,
  \begin{equation} \label{eq:thm:kernel}
    \lim_{n \to \infty} \theta^{-1} (\rho n)^{-(1 + \frac{1}{\theta})} K_n \left( \frac{x}{\theta (\rho n)^{1 + 1/\theta}}, \frac{y}{\theta (\rho n)^{1 + 1/\theta}} \right) = \theta \int^1_0 J_{\frac{\alpha + 1}{\theta}, \frac{1}{\theta}}(xu) J_{\alpha + 1, \theta}((yu)^{\theta}) u^{\alpha} du.
  \end{equation}
  uniformly for $x, y$ in compact subsets of $(0,\infty)$, where $\rho$ is given in \eqref{eq:defn_rho}.
\end{theorem}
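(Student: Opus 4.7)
The starting point is the bilinear formula \eqref{eq:sum_form_K}, into which I would substitute the Plancherel--Rotach asymptotics of Theorem~\ref{thm:pqkappa} and the $\kappa_n$ asymptotics of Lemma~\ref{lem:kappa} term by term, then recognise the resulting Riemann sum over $j = 0,\dots,n-1$ as the integral on the right of \eqref{eq:thm:kernel}. The key algebraic reduction is that \eqref{eq:pqbioOP} depends on $V$ only through $nV$: setting $s := j/n \in (0,1]$ and $V_s := V/s$, we have $p^{(V)}_{n,j} = p^{(V_s)}_{j,j}$, $q^{(V)}_{n,j} = q^{(V_s)}_{j,j}$, and $\kappa^{(V)}_{n,j} = \kappa^{(V_s)}_{j,j}$, so that each summand falls into the ``top-degree'' case of Theorem~\ref{thm:pqkappa} for the potential $V_s$. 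One checks that $V_s$ inherits one-cut regularity from $V$ (the sufficient condition \eqref{eq:one-cut_regular} is positively homogeneous in $V$), and the constants $b^{(V_s)}, c^{(V_s)}, \rho^{(V_s)}, \ell^{(V_s)}$ vary smoothly in $s$ on compact subintervals of $(0,1]$.

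Substituting the scaled arguments $x_n := x/(\theta(\rho n)^{1+1/\theta})$ and $y_n := y/(\theta(\rho n)^{1+1/\theta})$ and using the identity $1+\theta = \theta(1+1/\theta)$, a short computation converts the Bessel-function arguments produced by Theorem~\ref{thm:pqkappa} into $xu$ and $(yu)^{\theta}$, where $u = u(s) := (\rho^{(V_s)} s/\rho)^{1+1/\theta}$ is the variable matching the integrand in \eqref{eq:thm:kernel}. The exponential part of $C_{j,s} \widetilde C_{j,s}/\kappa^{(V_s)}_{j,j}$ is $\exp\{j[\Re g^{(V_s)}(0) + \Re \widetilde g^{(V_s)}(0) - \ell^{(V_s)}]\}$, which cancels against $e^{-nV(x_n)}$ in \eqref{eq:sum_form_K} as $n \to \infty$, by the Euler--Lagrange identity \eqref{eq:EL1} applied at $x = 0_+$ for $V_s$ together with the relation $nV = jV_s$. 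The remaining algebraic factors in $c^{(V_s)}, \rho^{(V_s)}, j$, combined with $x_n^\alpha$ and the external $\theta^{-1}(\rho n)^{-(1+1/\theta)}$, reduce via $j = sn$ to a sum of the shape $(\rho n)^{-1} \sum_j F(s_j;x,y)$ whose Riemann limit is $\int_0^1 F(s;x,y)\,ds$; the monotone change of variable $s \mapsto u(s)$, with $u(1)=1$ and $u(0_+) = 0$, then produces the weight $\theta u^\alpha\,du$ displayed on the right of \eqref{eq:thm:kernel}.

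The principal technical obstacle is uniformity in $s$. Theorem~\ref{thm:pqkappa} is stated for fixed $V$, but here the potential $V_s$ varies with the summation index, and the error $\bigO(n^{(1-m_\theta)/(1+m_\theta)})$ must be uniform over $s$ in compact subintervals $[\delta,1] \subset (0,1]$ to justify termwise substitution. I expect this to follow from the Deift--Zhou steepest-descent analysis underlying Theorem~\ref{thm:pqkappa}, since the equilibrium measure $\mu^{(V_s)}$, the associated $g$-functions, and the global and hard-edge local parametrices all depend analytically on $s$ in this range, forcing the implicit constants in the error estimates to be continuous in $s$. The complementary regime $j \le \delta n$, where the rescaled potentials $V_s$ degenerate as $s \downarrow 0$, has to be handled by crude a priori bounds on $|p_{n,j}(x_n) q_{n,j}(y_n^{\theta})|/\kappa_{n,j}$ that force its contribution to be $o(1)$ as $\delta \to 0$; this boundary-layer estimate, rather than the algebraic bookkeeping of the prefactors, is the delicate step of the argument.
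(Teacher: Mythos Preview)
Your strategy is the same as the paper's: reduce each summand $k^{(V)}_{n,j}$ to a top-degree quantity $k^{(\widetilde V)}_{j,j}$ for a one-parameter family of potentials, feed in Theorem~\ref{thm:pqkappa} and Lemma~\ref{lem:kappa} uniformly in the parameter, and read off the Riemann sum. The difference is in the choice of family. You take $V_s = V/s$, which indeed gives $p^{(V)}_{n,j}=p^{(V_s)}_{j,j}$ verbatim; the paper instead normalises $V(0)=0$, writes $V(x)=x^r+\bigO(x^{r+1})$, and uses
\[
V_\tau(x)=\tau^{-1}V(\tau^{1/r}x),\qquad \tau\in[0,1],
\]
which is your $V_s$ composed with the rescaling $x\mapsto \tau^{1/r}x$. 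The point of the extra rescaling is that $V_\tau\to x^r$ as $\tau\to 0$, so $\{V_\tau\}_{\tau\in[0,1]}$ is a \emph{compact} family of one-cut regular potentials, and the steepest-descent error bounds of Theorem~\ref{thm:pqkappa} are uniform on the closed interval (this is the content of Lemma~\ref{prop:generalized}). With that in hand, the paper can split the sum at a \emph{fixed} index $N$: the tail $j\le N$ contributes $0$ trivially, and the Riemann sum over $N<j\le n$ covers $\tau\in((N+1)/n,1]\to(0,1]$ with uniform control. The remaining step, turning $\int_0^1 f(t)\,dt$ into the integral in \eqref{eq:thm:kernel}, is a nontrivial identity about how $c^{(V_\tau)}$ and $\rho^{(V_\tau)}$ vary with $\tau$; the paper imports it from \cite[Section~5.2]{Wang-Zhang21}.

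By contrast, your family $V_s=V/s$ blows up as $s\downarrow 0$, so uniformity of Theorem~\ref{thm:pqkappa} can only be claimed on $[\delta,1]$, and you are forced to cut at $j=\delta n$ rather than at a fixed $N$. This leaves $\bigO(n)$ terms in the boundary layer, and the ``crude a~priori bounds'' you invoke are neither specified nor obviously available: you would need a quantitative estimate on $|p^{(V)}_{n,j}(x_n)q^{(V)}_{n,j}(y_n^\theta)|/\kappa^{(V)}_{n,j}$ that is uniform in $j\in[0,\delta n]$ and integrable in $j/n$, which is essentially the same hard-edge analysis you are trying to avoid. Similarly, your change of variable $s\mapsto u(s)$ and the endpoint claim $u(0_+)=0$ presuppose control of $\rho^{(V_s)}s$ as $s\to 0$, i.e.\ of the equilibrium measure of a diverging potential, which is not immediate. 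The paper's rescaled family $V_\tau$ is designed precisely to absorb both of these issues into the single uniform statement of Lemma~\ref{prop:generalized}; adopting it would close the gap in your argument without altering its structure.
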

The limit kernel on the right-hand side of \eqref{eq:thm:kernel} is the same as \cite[Equation (3.6)]{Borodin99}. It also has a double contour integral expression, see \cite[Corollary 5.2]{Forrester-Wang15} and \cite[Theorem 1.1]{Zhang15}.

 In this paper we only consider the local statistics of the Muttalib-Borodin ensemble at the hard edge $0$. We expect that $p_n(x)$ and $q_n(x^{\theta})$ have a sinusoidal limiting behaviour on $(0, b)$ and have the limit behaviour expressed by the Airy function at $b$. From these limiting results, we expect the local statistics to be given by the Airy kernel at the soft edge $b$ and by the Sine kernel in the bulk $(0, b)$. We do not prove these claims, but refer the interested reader to \cite{Claeys-Wang11} and \cite{Wang23} that prove limiting results for biorthogonal polynomials for different biorthogonal ensembles, and \cite{Claeys-Wang22} that carries out the proof of the Airy and Sine universality results for a the biorthogonal ensemble considered in \cite{Claeys-Wang11}.

\subsection{Related results}

\paragraph{Specializations of the limiting kernel $K^{(\alpha, \theta)}$}

When $\theta = 1$, the limiting kernel $K^{(\alpha, 1)}$ becomes the classical Bessel kernel \cite{Forrester93}, \cite{Tracy-Widom94b}, \cite{Vanlessen07}.

When $\theta = p/q$ is a rational number, then the limit functions $J_{\frac{\alpha + 1}{\theta}, \frac{1}{\theta}}$ and $J_{\alpha + 1, \theta}$ are Meijer G functions \cite[Equation (13)]{Gorenflo-Luchko-Mainardi00}, \cite[Equation (A.4)]{Zhang17}
\begin{align}
  J_{\frac{\alpha + 1}{\theta}, \frac{1}{\theta}}(z) = {}& (2\pi)^{\frac{q - p}{2}} p^{\frac{1}{2}} q^{-\frac{(\alpha + 1)q}{p} + \frac{1}{2}} \MeijerG{p, 0}{0, p + q}{-}{0, \frac{1}{p}, \dotsc, \frac{p - 1}{p},  \frac{1}{q} - \frac{\alpha + 1}{p}, \dotsc, 1 - \frac{\alpha + 1}{p}}{\frac{z^p}{p^p q^q}}, \\
  J_{\alpha + 1, \theta}(z) = {}& (2\pi)^{\frac{p - q}{2}} p^{-\alpha - \frac{1}{2}} q^{\frac{1}{2}} \MeijerG{q, 0}{0, p + q}{-}{0, \frac{1}{q}, \dotsc, \frac{q - 1}{q}, \frac{1}{p} - \frac{\alpha + 1}{p}, \dotsc, 1 - \frac{\alpha + 1}{p}}{\frac{z^q}{p^p q^q}}.
\end{align}
When $\theta$ or $1/\theta$ is a positive integer, $K^{(\alpha, \theta)}$ reduces to the Meijer G-kernel appearing in various random matrix models, like the products of random matrices and Cauchy two-matrix models (cf. \cite{Akemann-Strahov16}, \cite{Bertola-Bothner14}, \cite{Bertola-Gekhtman-Szmigielski14}, \cite{Kuijlaars-Zhang14}, \cite{Silva-Zhang20}) as observed in \cite{Kuijlaars-Stivigny14}.

\paragraph{Various results related to Muttalib-Borodin ensemble}

The corresponding gap probabilities for the limiting distribution defined by $K^{(\alpha, \theta)}$ are investigated in \cite{Charlier-Claeys20}, \cite{Charlier-Lenells-Mauersberger19}, \cite{Claeys-Girotti-Stivigny19}, \cite{Zhang17} and the local limits of $K_n$ away from the origin for the classical weights are given in \cite{Forrester-Wang15} and \cite{Zhang15}. The limiting mean distribution of the particles in Muttalib-Borodin ensemble is formulated as the minimizer of a (vector) equilibrium problem in \cite{Claeys-Romano14,Kuijlaars16}, and the large deviation results can be found in \cite{Bloom-Levenberg-Totik-Wielonsky17}, \cite{Butez17}, \cite{Credner-Eichelsbacher15}, \cite{Eichelsbacher-Sommerauer-Stolz11}; see also \cite{Charlier21}, \cite{Forrester23}, \cite{Lambert18}, \cite{Alam-Muttalib-Wang-Yadav20} for other investigations and extensions of the Muttalib-Borodin ensemble.

\subsection{Novelty of the paper}

Currently, some other particle systems whose joint probability density functions are in the form
\begin{equation} \label{eq:general_biorthogonal}
  \prod_{1 \leq i < j \leq n} (x_i - x_j) (f(x_i) - f(x_j)) \prod_{1 \leq i \leq n} w(x_i)
\end{equation}
have been solved, in the sense that the Plancherel-Rotach type asymptotics of the biorthogonal polynomials and the limiting local correlation kernel are computed \cite{Claeys-Wang11}, \cite{Wang23}, \cite{Claeys-Wang22}. These results do not overshadow the study of Muttalib-Borodin ensemble, because they only yield local universal limits that also occur in models only involving orthogonal polynomials, like the Airy kernel at the soft edge, Bessel kernel at the hard edge, and the sine kernel in the bulk. The hard edge  limiting correlation kernel $K^{(\alpha, \theta)}$ of the Muttalib-Borodin ensemble, to the contrary, has not been observed in models involving orthogonal polynomials. In this sense, the Muttalib-Borodin ensemble is an archetype of biorthogonal ensembles that shows limiting features not seen in orthogonal polynomial ensembles (i.e.~ensembles defined by \eqref{eq:general_biorthogonal} with $f(x) = x$).

To tackle the new limiting correlation kernel, new techniques need to be developed. It has been known that the $2 \times 2$ matrix-valued Riemann-Hilbert (RH) problem is a powerful method to study the asymptotics of orthogonal polynomials, and also the limiting correlation kernels associated to them \cite{Deift99}. Larger size matrix-valued Riemann-Hilbert problems have been introduced to solve problems involving variations of orthogonal polynomials, like multiple orthogonal polynomials (see \cite{Kuijlaars10} and references therein). For the Muttalib-Borodin ensemble, in the special case $\theta = 1/m$, $m \in \intZ_+$, the $(m + 1) \times (m + 1)$ matrix-valued Riemann-Hilbert problem was successfully applied in \cite{Kuijlaars-Molag19}, and it was suggested a similar approach may be applied for rational $\theta = p/q$, and the size of the matrix-valued Riemann-Hilbert problem, together with the technical difficulty, would increase as $p, q$ become larger. It seems to be hopeless to apply this method for irrational $\theta$.

Claeys and Romano \cite{Claeys-Romano14} found a pair of $1 \times 2$ vector-valued Riemann-Hilbert problem for the biorthogonal polynomials associated to the Muttalib-Borodin ensemble. Based on them, in \cite{Wang-Zhang21}, a new approach was introduced to find the Plancherel-Rotach asymptotics of the biorthogonal polynomials. The advantage of this approach is that the framework does not depend on the value of $\theta$, and it in principle works for all real positive $\theta$. Practically, all the arguments in \cite{Wang-Zhang21}, except for the construction of the local parametrices at $0$, are applicable for all real positive $\theta$. In \cite{Wang-Zhang21} the local parametrices at $0$ were constructed for $\theta = m$, $m \in \intZ_+$, and they were in the form of $(m + 1) \times (m + 1)$ matrices. Thus, \cite{Wang-Zhang21} localizes the Riemann-Hilbert analysis of the biorthogonal polynomials to a local parametrix construction problem, and it calls for new ideas to construct the local parametrices at $0$ for general $\theta$.

Although the matrix-valued local parametrix construction seems not to be generalizable to irrational $\theta$, if the matrix parametrices are regarded as operators, then they are finite rank operators, and it is natural to seek infinite rank operators to be the local parametrices at $0$ for general $\theta$, which degenerates to finite rank when $\theta$ is rational. \emph{Regarding the local parametrix as an operator} is a new idea in Riemann-Hilbert problems, to the author's limited knowledge. Implementing this new idea to obtain a nontrivial universality result is the main novelty of this paper. We expect this approach to find applications in other biorthogonal ensembles and hope that it helps in other Riemann-Hilbert problems.

\subsection{Organization of this paper}

In Section \ref{sec:preliminary} we set up notation and review results in \cite{Wang-Zhang21}. In Section \ref{sec:V_alpha_R} we introduce the operators to be used in the construction of the local parametrices at $0$. In Sections \ref{sec:asy_RH_Y} and \ref{sec:asy_RH_Y_tilde} we perform the transforms of the Riemann-Hilbert problems. In Section \ref{sec:proof_main} we prove the results in Section \ref{subsec:statement_results}.

\subsection*{Acknowledgements}

The author thanks Lun Zhang for the discussions at the early stage of this project. The author is partially supported by the National Natural Science Foundation of China under grant numbers 12271502 and 11871425, and the University of Chinese Academy of Sciences start-up grant 118900M043.

\section{Preliminary definitions} \label{sec:preliminary}

Throughout this paper, $\natN = \{ 0, 1, 2, \dotsc \}$. The following notations are frequently used. We denote the upper/lower half complex plane $\compC_{\pm} := \{z\in\compC \mid \pm \Im z>0\}$, open disk $D(z_0,\delta) := \{z\in \mathbb{C} : \lvert z - z_0 \rvert < \delta \}$ and punctured open disk $D^*(z_0,\delta) := D(z_0, \delta) \setminus \{ z_0 \}$. By an abuse of notation, we sometimes understand $D(0, \delta)$ as the subset $\{ 0 \} \cup \{ r e^{it} : 0 < r < \delta \text{ and } -\theta < t < \theta \}$ of $\halfH$ defined below, and $D^*(0, \delta)$ in the same way. Like \cite[Equation (1.12)]{Wang-Zhang21}, we denote
\begin{equation}\label{def:Htheta}
  \mathbb{H}_\theta := \left\{ z\in \mathbb{C} \mid z = 0 \text{ or } -\frac{\pi}{\theta} < \arg z <  \frac{\pi}{\theta} \right\},
\end{equation}
and when $\theta < 1$ we understand $\halfH$ as a sheet of Riemann surface without overlapping, rather than a subset of $\compC$, and understand $\halfH \cap \compC$ as $\left\{ z\in \mathbb{C} \mid z = 0 \text{ or } -\pi < \arg z <  \pi \right\}$. We denote the function space
\begin{equation} \label{eq:func_space_H}
  H(R) = \{ \text{analytic functions on $D(0, R)$} \}.
\end{equation}

As in \cite[Equation (2.1)]{Wang-Zhang21}, we define the analytic function
\begin{equation}\label{def:Jcs}
  J_c(s)=c(s + 1) \left( \frac{s + 1}{s} \right)^{1/\theta}, \qquad s\in\mathbb{C} \setminus [-1,0].
\end{equation}
where $c$ is defined in \eqref{eq:defn_rho}, the branch of the $1/\theta$ power function is chosen such that $J_c(s) \sim c s$ as $s \to \infty$. It has two critical points $-1$ and $s_b=1/\theta$, which are mapped respectively to $0$ and $b$, respectively.

Note that $J_{c}(s)$ is real for $s\in (-\infty,-1]\cup [s_b, +\infty)$, and there is a unique curve $\gamma_1 \subseteq \compC_+$ that connects $-1$ and $s_b$, such that $J_c(\gamma_1) = (0, b)$ and the mapping is bijective (see \cite[Proof of Lemma 4.1]{Claeys-Romano14}). Then $\gamma_2 := \overline{\gamma_1} \subseteq \compC_-$, a curve going from $-1$ to $s_b$, are mapped to the interval $[0,b]$. Let $\gamma := (-\gamma_1) \cup \gamma_2$ and denote by $D$ the region bounded by $\gamma$. We have from \cite[Lemma 4.1]{Claeys-Romano14} that $J_c$ maps $\mathbb{C} \setminus \overline{D}$ bijectively to $\mathbb{C} \setminus [0,b]$, and maps $D\setminus [-1,0]$ bijectively to $\mathbb{H}_\theta \setminus [0,b]$, where $\halfH$ is defined in \eqref{def:Htheta}. Let $I_1$ and $I_2$ be the inverses of two branches of the mapping $J_c$ satisfying
\begin{align}
  I_1(J_c(s)) = {}& s, & s \in {}& \mathbb{C}\setminus \overline{D}, \label{eq:inverse1} \\
  I_2(J_c(s)) = {}& s, & s \in {}& D \setminus [-1,0]. \label{eq:inverse2}
\end{align}

\subsection{Special functions used in construction of local parametrices}

For $a \in \realR$ and $z \in \compC \setminus (-\infty, 0)$, let
\begin{align}
  I^{(1)}_{\theta, a}(z) = {}& \frac{1}{2\pi i} \int_L \frac{\Gamma \left( \frac{1}{2} - a - \frac{\theta}{\theta + 1} v \right)}{\Gamma \left( 1 - a + \frac{1}{\theta + 1} v \right)} \left[ \left( \frac{\theta}{\theta + 1} \right)^{\frac{\theta}{\theta + 1}} \left( \frac{1}{\theta + 1} \right)^{\frac{1}{\theta + 1}} z \right]^v dv, \label{eq:defn_I_theta_a} \\
  I^{(2)}_{\theta, a}(z) = {}& \frac{1}{2\pi i} \int_L \Gamma \left( \frac{1}{2} - a - \frac{\theta}{\theta + 1} v \right) \Gamma \left( a - \frac{1}{\theta + 1} v \right) \left[ \left( \frac{\theta}{\theta + 1} \right)^{\frac{\theta}{\theta + 1}} \left( \frac{1}{\theta + 1} \right)^{\frac{1}{\theta + 1}} z \right]^v dv, \label{eq:defn_J_theta_a} \\
  I^{(3)}_{\theta, a}(z) = {}& \frac{1}{2\pi i} \int_L \frac{\Gamma \left( a - \frac{1}{\theta + 1} v \right)}{\Gamma \left( \frac{1}{2} + a + \frac{\theta}{\theta + 1} v \right)} \left[ \left( \frac{\theta}{\theta + 1} \right)^{\frac{\theta}{\theta + 1}} \left( \frac{1}{\theta + 1} \right)^{\frac{1}{\theta + 1}} z \right]^v dv = I^{(1)}_{\frac{1}{\theta}, \frac{1}{2} - a}(z), \label{eq:defn_K_theta_a}
\end{align}
where the contour $L$ is of Mellin-Barnes type, beginning and ending at $+\infty$, and encircling all poles exactly once in the negative direction. They are special cases of the Fox H-functions (see \cite[Chapter 1, Section 1.5, (15)]{Srivastava-Manocha84}), such that (with $u = \left( \frac{\theta}{\theta + 1} \right)^{\frac{\theta}{\theta + 1}} \left( \frac{1}{\theta + 1} \right)^{\frac{1}{\theta + 1}} z$)
\begin{align}
  I^{(1)}_{\theta, a}(z) = {}& \FoxH{1, 0}{0, 2}{-}{(\frac{1}{2} - a, \frac{\theta}{\theta + 1}), (a, \frac{1}{\theta + 1})}{u}, \\
  I^{(2)}_{\theta, a}(z) = {}& \FoxH{2, 0}{0, 2}{-}{(\frac{1}{2} - a, \frac{\theta}{\theta + 1}), (a, \frac{1}{\theta + 1})}{u}, \\
  I^{(3)}_{\theta, a}(z) = {}& \FoxH{1, 0}{0, 2}{-}{(a, \frac{1}{\theta + 1}), (\frac{1}{2} - a, \frac{\theta}{\theta + 1})}{u}.
\end{align}
$I^{(1)}_{\theta, a}$ and $I^{(3)}_{\theta, a}$ can also be expressed by Wright's generalized Bessel function
\begin{align}
  I^{(1)}_{\theta, a}(z) = {}& (1 + \theta^{-1}) u^{(1 + \theta^{-1})(\frac{1}{2} - a)} J_{\frac{\frac{1}{2} - a}{\theta} + 1 - a, \frac{1}{\theta}}(u^{1 + \theta^{-1}}), \\
  I^{(3)}_{\theta, a}(z) = {}& (1 + \theta) u^{(\theta + 1)a} J_{\theta a + \frac{1}{2} + a, \theta}(u^{\theta + 1}).
\end{align}

As $z \to \infty$, we have the following limit results. Let $\epsilon > 0$ be a fixed small constant, then
\begin{align}
  \frac{\theta^{a}}{\sqrt{2\pi(\theta + 1)}} I^{(2)}_{\theta, a}(z) = {}& e^{-z}(1 + \bigO(z^{-1})), \quad \arg z \in (-\pi + \epsilon, \pi - \epsilon), \label{eq:asy_J_theta_a_infty} \\
  \frac{\sqrt{2\pi} \theta^{a}}{\sqrt{\theta + 1}} I^{(1)}_{\theta, a}(z) = {}&
                                                                          \begin{cases}
                                                                            e^{\frac{1}{2} - a \pi i} e^{-z e^{\frac{\pi i}{\theta + 1}}} (1 + \bigO(z^{-1}), & \arg z \in (\epsilon, \frac{\pi}{1 + \theta^{-1}}), \\
                                                                            e^{-\frac{1}{2} - a \pi i} e^{-z e^{-\frac{\pi i}{\theta + 1}}} (1 + \bigO(z^{-1}), & \arg z \in (-\frac{\pi}{1 + \theta^{-1}}, -\epsilon).
                                                                          \end{cases} \label{eq:asy_I_theta_a_infty} \\
  \frac{\sqrt{2\pi} \theta^{a}}{\sqrt{\theta + 1}} I^{(3)}_{\theta, a}(z) = {}&
                                                                          \begin{cases}
                                                                            e^{a \pi i} e^{-z e^{\frac{\pi i}{1 + \theta^{-1}}}} (1 + \bigO(z^{-1}), & \arg z \in (\epsilon, \frac{\pi}{\theta + 1}), \\
                                                                            e^{-a \pi i} e^{-z e^{-\frac{\pi i}{1 + \theta^{-1}}}} (1 + \bigO(z^{-1}), & \arg z \in (-\frac{\pi}{\theta + 1}, -\epsilon).
                                                                          \end{cases} \label{eq:asy_K_theta_a_infty}
\end{align}
Here the asymptotics of $I^{(1)}_{\theta, a}(z)$ and $I^{(3)}_{\theta, a}(z)$ are from \cite[Theorem 2]{Wright35}, the asymptotics of $I^{(2)}_{\theta, a}(z)$ with $\arg z \in (-\pi/2 + \epsilon, \pi/2 - \epsilon)$ are from \cite[Equation (7.7)]{Braaksma64}, the asymptotics of $I^{(2)}_{\theta, a}(z)$ with $\arg z \in (-\pi/2 - \epsilon, -\pi/2 + \epsilon]$ are from \cite[Theorem 8]{Braaksma64}, the asymptotics of $I^{(2)}_{\theta, a}(z)$ with $\arg z \in [-\pi, -\pi/2 - \epsilon]$ are from \cite[Theorem 5]{Braaksma64}, and the remaining part is from $I^{(2)}_{\theta, a}(\bar{z}) = \overline{I^{(2)}_{\theta, a}(z)}$.

By writing $I^{(1)}_{\theta, a}(z), I^{(2)}_{\theta, a}(z), I^{(3)}_{\theta, a}(z)$ into power function series (see \cite[Theorem 1]{Braaksma64}), we obtain the estimate as $z \to 0$
\begin{align} \label{eq:asy_IJK_at_0}
  I^{(1)}_{\theta, a}(z) = {}& \bigO(z^{(1 + \theta^{-1})(\frac{1}{2} - a)}), & I^{(3)}_{\theta, a}(z) = {}& \bigO(z^{(1 + \theta)a}), & I^{(2)}_{\theta, a}(z) = {}&
                                                                                                                                                \begin{cases}
                                                                                                                                                  \bigO(z^{(1 + \theta^{-1})(\frac{1}{2} - a)}), & a > \frac{1}{2(\theta + 1)}, \\
                                                                                                                                                  \bigO(z^{1/2} \log z), & a = \frac{1}{2(\theta + 1)}, \\
                                                                                                                                                  \bigO(z^{(1 + \theta)a}), & a < \frac{1}{2(\theta + 1)}, 
                                                                                                                                                \end{cases}
\end{align}
and also the relation
\begin{align}
  e^{-(a - 1/2)\pi i} I^{(2)}_{\theta, a}(z e^{\frac{\pi i}{\theta + 1}}) + e^{(a - 1/2)\pi i} I^{(2)}_{\theta, a}(z e^{-\frac{\pi i}{\theta + 1}}) = {}& 2\pi I^{(1)}_{\theta, a}(z), \label{eq:G_split} \\
  e^{a\pi i} I^{(2)}_{\theta, a}(z e^{\frac{\pi i}{1 + \theta^{-1}}}) + e^{-a\pi i} I^{(2)}_{\theta, a}(z e^{-\frac{\pi i}{1 + \theta^{-1}}}) = {}& 2\pi I^{(3)}_{\theta, a}(z). \label{eq:H_split}
\end{align}

\subsection{Properties of the $g$-functions recalled from \cite{Wang-Zhang21}} \label{sec:equmeasure}

From the equilibrium measure $d\mu(x) = \psi(x)dx$, we define the $g$-functions (cf.~\cite[Equations (4.4)--(4.5)]{Claeys-Romano14}) as
\begin{align}
  g(z) := {}& \int^b_0 \log(z - y) \psi(y) dy, & z \in {}& \compC \setminus (-\infty, b], \label{def:g} \\
 \g(z) := {}& \int^b_0 \log(z^{\theta} - y^{\theta}) \psi(y) dy, & z \in {}& \mathbb{H}_{\theta} \setminus [0, b], \label{def:tildeg} \\
  \intertext{and also define \cite[Equation (3.5)]{Wang-Zhang21}}
  \phi(z) := {}& g(z) + \g(z) - V(z) - \ell, & z\in {}& (\compC \cap \halfH) \setminus (-\infty, b]. \label{def:phi}
\end{align}

\begin{prop} \cite[Proposition 2.1]{Wang-Zhang21} \label{prop:propg}
  The $g$-functions defined in \eqref{def:g} and \eqref{def:tildeg} have the following properties.
  \begin{enumerate}[label=\emph{(\alph*)}, ref=(\alph*)]
  \item \label{enu:prop:propg_a}
    $g(z)$ and $\g(z)$ are analytic in $\compC \setminus (-\infty, b]$ and $\mathbb{H}_{\theta} \setminus [0, b]$, respectively. Furthermore, as $z\to \infty$, we have
    \begin{align}
      g(z) = {}& \log z+ \bigO(z^{-1}), & z \in {}& \compC, \\
      \g(z)= {}& \theta \log z+ \bigO(z^{-\theta}), & z\in {}& \mathbb{H}_{\theta}. \label{eq:gtilde_asy}
    \end{align}
  \item
    The $g$-functions satisfy the following boundary conditions:
    \begin{align}
      \g(e^{-\frac{\pi i}{\theta}}x) & =\g(e^{\frac{\pi i}{\theta}}x)-2 \pi i, & x > {}& 0, \label{eq:tildgpm} \\
      g_+(x) & =g_-(x)+2 \pi i, & x < {}& 0. \label{eq:gpm}
    \end{align}
  \item
    For $x\in(0,b)$, we have
    \begin{equation}\label{eq:psiandg}
      \psi(x)=-\frac{1}{2\pi i}(g_+'(x)-g_-'(x))=-\frac{1}{2\pi i}(\g_+'(x)-\g_-'(x)).
    \end{equation}
  \item \label{enu:prop:propg_d}
    With the constant $\ell$ depending on $V$ (see \cite[Equations (1.6) and (1.7)]{Wang-Zhang21}, we have
    \begin{align}
      g_{\pm}(x)+\g_{\mp}(x)-V(x)-\ell = {}& 0, & x \in {}& (0,b], \label{eq:gequal}      \\
      \phi(x) = g(x)+\g(x)-V(x)-\ell < {}& 0, & x > {}& b. \label{eq:gequal2}
    \end{align}
  \item
    If $\theta>1$, we have, as $z \to 0$,
    \begin{align}
      g(z) - g_+(0) = {}&
                          \begin{cases}
                            g^{\pre}(z) + \bigO(z), & \arg z \in (0, \pi), \\
                            g^{\pre}(z) - 2\pi i + \bigO(z), & \arg z \in (-\pi, 0),
                          \end{cases} \label{eq:g_error} \\
      \g(z) - \g_+(0) = {}&
                            \begin{cases}
                              \g^{\pre}(z) + \bigO(z^{\frac{2\theta}{1+\theta}}), & \arg z \in (0, \pi/\theta), \\
                              \g^{\pre}(z) - 2\pi i + \bigO(z^{\frac{2\theta}{1+\theta}}), & \arg z \in (-\pi/\theta, 0),
                            \end{cases} \label{eq:g_tilde_error}
    \end{align}
    where $g_+(0) = \lim_{z\to 0,\ 0 < \arg z < \pi} g(z)$, $\g_+(0) = \lim_{z\to 0,\ 0 < \arg z < \pi/\theta} \g(z)$, and, with $d_1$ given in \eqref{eq:psiasy},
    \begin{align}
      g^{\pre}(z) = {}&
      \begin{cases}
        \frac{(1+\theta)d_1\pi}{\theta} \frac{e^{\frac{2+\theta}{1+\theta}\pi i}}{\sin (\frac{\pi}{1+\theta})}z^{\frac{\theta}{1+\theta}}, & \arg z \in (0,\pi),
        \\
        \frac{(1+\theta)d_1\pi}{\theta} \frac{e^{\frac{\theta}{1+\theta}\pi i}}{\sin (\frac{\pi}{1+\theta})}z^{\frac{\theta}{1+\theta}}, & \arg z \in (-\pi,0),
      \end{cases} \label{eq:asy_formula_for_g} \\
      \g^{\pre}(z) = {}&
      \begin{cases}
        \frac{(1+\theta)d_1\pi}{\theta} \frac{e^{\frac{1+2\theta}{1+\theta}\pi i}}{\sin (\frac{\pi}{1+\theta} )}z^{\frac{\theta}{1+\theta}}, & \arg z \in (0,\frac{\pi}{\theta}), \\
        \frac{(1+\theta)d_1\pi}{\theta}  \frac{e^{\frac{3+2\theta}{1+\theta}\pi i}}{\sin (\frac{\pi}{1+\theta})}z^{\frac{\theta}{1+\theta}}, & \arg z \in (-\frac{\pi}{\theta},0).
      \end{cases} \label{eq:asy_formula_for_g_tilde}
    \end{align}
  \end{enumerate}
\end{prop}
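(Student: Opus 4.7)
The plan is to verify parts (a)--(e) one by one, using standard facts from logarithmic potential theory adapted to the presence of the second kernel $\log|x^\theta-y^\theta|$ and to the sector $\halfH$.

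For (a), analyticity of $g$ on $\compC\setminus(-\infty,b]$ and of $\g$ on $\halfH\setminus[0,b]$ is immediate from \eqref{def:g} and \eqref{def:tildeg}, since the integrands are analytic in $z$ off the respective branch cuts and $[0,b]$ is compact. The asymptotics at infinity follow by expanding $\log(z-y)=\log z-y/z+\bigO(z^{-2})$ and using $\int_0^b\psi=1$; for $\g$ the analogous expansion $\log(z^\theta-y^\theta)=\theta\log z+\bigO(z^{-\theta})$ is valid throughout $\halfH$ because $(y/z)^\theta\to 0$ uniformly there. The jump relations (b) are just the discontinuities of $\log$: crossing $(-\infty,0)$ each $\log(z-y)$ picks up $2\pi i$, and integrating against $\psi$ yields \eqref{eq:gpm}. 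For $\g$, the boundary $\arg z=\pm\pi/\theta$ of $\halfH$ is sent by $z\mapsto z^\theta$ to the negative real axis, so the same mechanism gives \eqref{eq:tildgpm}.

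For (c) I apply Sokhotski--Plemelj: for $x\in(0,b)$ a direct computation gives $g_\pm(x)=\int_0^b\log|x-y|\psi(y)\,dy\pm i\pi\int_x^b\psi(y)\,dy$, whence $g_+'(x)-g_-'(x)=-2\pi i\psi(x)$; the identical computation for $\g$ works because $z\mapsto z^\theta$ is a local diffeomorphism near $x\in(0,b)$ and carries the cut $(0,b)$ to $(0,b^\theta)$. For (d), computing $g_\pm(x)+\g_\mp(x)$ shows the real parts combine to $\int\log|x-y|\,d\mu+\int\log|x^\theta-y^\theta|\,d\mu$ while the imaginary parts $\pm i\pi\int_x^b\psi$ cancel exactly (hence the swapped subscripts in \eqref{eq:gequal}); the Euler--Lagrange equality \eqref{eq:EL1} then finishes (d), with the strict inequality \eqref{eq:gequal2} being precisely the strict form of \eqref{eq:EL2} guaranteed by one-cut regularity.

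Part (e) is the main technical obstacle. Writing
\begin{equation*}
g(z)-g_+(0)=\int_0^b\log\bigl(1-z/y\bigr)\psi(y)\,dy
\end{equation*}
with branches chosen consistently with $\arg z$, I split at an intermediate scale: the contribution from $y$ bounded away from $0$ is analytic in $z$ and is $\bigO(z)$ by Taylor expansion, while for small $y$ I substitute $y=|z|u$ and use $\psi(y)=d_1 y^{-1/(1+\theta)}(1+o(1))$. The resulting Mellin-type integral $\int_0^\infty\log(1-e^{i\arg z}/u)\,u^{-1/(1+\theta)}\,du$ evaluates through the reflection formula for $\Gamma$ to a multiple of $\frac{(1+\theta)\pi}{\theta\sin(\pi/(1+\theta))}$, times an explicit phase that depends on whether $\arg z\in(0,\pi)$ or $(-\pi,0)$, and the Jacobian of the substitution contributes the factor $z^{\theta/(1+\theta)}$; this yields \eqref{eq:asy_formula_for_g}. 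The computation for $\g$ is parallel, with $z^\theta-y^\theta=-y^\theta(1-(z/y)^\theta)$ used throughout, and the improved remainder $\bigO(z^{2\theta/(1+\theta)})$ comes from the second term in the Mellin expansion of $\psi$ available through the explicit representation in \cite{Claeys-Romano14}. The two main difficulties are tracking the correct phases on the sheet of $\halfH$ lying outside $\compC$ when $\theta<1$, and ensuring the remainder is sharply $\bigO(z)$ (for $g$) respectively $\bigO(z^{2\theta/(1+\theta)})$ (for $\g$) uniformly up to the boundary of the respective sector.
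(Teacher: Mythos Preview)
The paper does not actually prove this proposition: it is quoted verbatim from \cite[Proposition~2.1]{Wang-Zhang21} and recalled in Section~\ref{sec:equmeasure} without argument, so there is no ``paper's own proof'' to compare against. Your sketch is therefore an independent verification rather than a re-derivation of something done here.

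That said, your outline for (a)--(d) is the standard and correct one, and the Sokhotski--Plemelj/Euler--Lagrange argument you give for (c)--(d) is exactly what underlies the cited result. For (e), the Mellin-splitting strategy is the right idea, but two remarks are in order. First, part (e) is stated only under the hypothesis $\theta>1$, so your stated difficulty about ``the sheet of $\halfH$ lying outside $\compC$ when $\theta<1$'' is irrelevant here; for $\theta>1$ one has $\halfH\subset\compC$ and the phase bookkeeping is straightforward. Second, the sharp remainders $\bigO(z)$ and $\bigO(z^{2\theta/(1+\theta)})$ cannot be obtained from the bare hypothesis $\psi(y)=d_1 y^{-1/(1+\theta)}(1+o(1))$ in \eqref{eq:psiasy}: the $o(1)$ would only yield $o(z^{\theta/(1+\theta)})$. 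You correctly point to the explicit density formula from \cite{Claeys-Romano14} as the remedy, but this is not an afterthought---it is the essential input, since one needs the next term of $\psi$ near $0$ (which is of order $y^{(\theta-1)/(1+\theta)}$ from that explicit representation) to close the estimate. With that ingredient your Mellin computation goes through.
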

From the properties of $g$ and $\g$, we have 
\begin{equation} \label{eq:ineq:psi}
  \Re \phi_+(x) = \Re \phi_-(x) = 0, \quad -\Im \phi'_+(x) = \Im \phi'_-(x) = 2\pi \psi(x) > 0, \quad \text{for all $x \in (0, b)$}.
\end{equation}

Let $\rb$ be a small positive constant, and define \cite[Equation (3.35)]{Wang-Zhang21}
\begin{equation}\label{def:fb}
  f_b(z):=\left(-\frac34 \phi(z)\right)^{\frac23}
\end{equation}
for $z \in D(b, \rb)$. Here $f_b(z)$ is a conformal mapping satisfying $f_b(b)=0$ and $f_b'(b)>0$. 

\subsection{RH characterization of the biorthogonal polynomials and auxiliary functions}\label{sec:RHP}

The proofs of our results rely on $1 \times 2$ vector-valued RH problems characterizing the biorthogonal polynomials $p_j$  and $q_k$ in \eqref{eq:pqbioOP}, as shown in \cite[Section 3]{Claeys-Romano14} and \cite[Section 2.2]{Wang-Zhang21}. Let
\begin{equation} \label{def:Y}
  Y (z) = (p_n(z), Cp_n(z)),
\end{equation}
where for any polynomial $p$,
\begin{equation} \label{eq:defn_of_Cp_n}
  Cp(z) = \frac{1}{2 \pi i}\int^{\infty}_0 \frac{p(x)}{x^\theta - z^\theta}w(x) dx=\frac{1}{2 \pi i}\int^{\infty}_0 \frac{p(x)}{x^\theta - z^\theta}x^{\alpha}e^{-nV(x)} dx
\end{equation}
is the modified Cauchy transform of the polynomial $p$. We will consider $Cp_n(z)$ as a function defined in $\mathbb{H}_{\theta} \setminus [0, +\infty)$. By \cite[Theorem 1.4]{Claeys-Romano14}, $Y$ is the unique solution of the following RH problem (see also the proof of Proposition \ref{prop_uniqueness}).

\begin{RHP} \cite[RH problem 2.2]{Wang-Zhang21} \label{RHP:original_p}
\hfill
  \begin{enumerate}[label=\emph{(\arabic*)}, ref=(\arabic*)]
  \item \label{enu:RHP:original_p:1}
    $Y = (Y_1, Y_2)$ is analytic in $(\compC, \mathbb{H}_{\theta} \setminus [0, +\infty))$.
  \item \label{enu:RHP:original_p:2}
    $Y$ has continuous boundary values $Y_{\pm}$ when approaching $(0, +\infty)$ from above $(+)$ and below $(-)$, and satisfies
    \begin{equation} \label{eq:jump_Y_realR}
      Y_+(x) = Y_-(x)
      \begin{pmatrix}
        1 & \frac{1}{\theta x^{\theta-1}} x^{\alpha} e^{-nV(x)} \\
        0 & 1
      \end{pmatrix},
      \qquad x>0.
    \end{equation}
  \item \label{enu:RHP:original_p:3}
    As $z \to \infty$ in $\compC$, $Y_1$ behaves as $Y_1(z) = z^n + \bigO(z^{n - 1})$.
  \item \label{enu:RHP:original_p:4}
    As $z \to \infty$ in $\mathbb{H}_{\theta}$, $Y_2$ behaves as $Y_2(z) = \bigO(z^{-(n + 1)\theta})$.
  \item \label{enu:RHP:original_p:5}
    As $z \to 0$ in $\compC$, we have
    \begin{equation}
    Y_1(z)=\bigO(1).
    \end{equation}
  \item \label{enu:RHP:original_p:6}
    As $z \to 0$ in $\mathbb{H}_{\theta}$,
    we have
    \begin{equation} \label{eq:Y_2_limit_at_0}
      Y_2(z) =
      \begin{cases}
        \bigO(1), & \text{$\alpha+1-\theta > 0$,} \\
         \bigO(\log z), & \text{$\alpha+1-\theta = 0$,} \\
       \bigO(z^{\alpha+1-\theta}), & \text{$\alpha+1-\theta < 0$.}
      \end{cases}
    \end{equation}
  \item \label{enu:RHP:original_p:7}
    $Y_2(z)$ has continuous boundary values $Y_2(e^{\frac{\pi i}{\theta} }x)$ and $Y_2(e^{-\frac{\pi i}{\theta}}x)$ for $x > 0$, and it satisfies the boundary condition $Y_2(e^{\frac{\pi i}{\theta} }x) = Y_2(e^{-\frac{\pi i}{\theta}}x)$.
  \end{enumerate}
\end{RHP}

The polynomials $q_j$ are also characterized by a similar RH problem. By setting
\begin{equation}\label{eq:defn_of_Cq_n}
\widetilde C q_n(z):=\frac{1}{2\pi i}\int_0^{+\infty} \frac{q_n(x^{\theta})}{x-z}w(x)dx
=\frac{1}{2\pi i}\int_0^{+\infty} \frac{q_n(x^{\theta})}{x-z}x^{\alpha}e^{-nV(x)}dx,\quad z\in \compC \setminus [0,+\infty),
\end{equation}
we have that
\begin{equation} \label{def:Ytilde}
\Y(z) = (q_n(z^{\theta}), \widetilde C q_n(z))
\end{equation}
is the unique solution of the following RH problem; see \cite[Theorem 1.4]{Claeys-Romano14}.

\begin{RHP} \cite[RH problem 2.3]{Wang-Zhang21} \label{RHP:original_q} \hfill
  \begin{enumerate}[label=\emph{(\arabic*)}, ref=(\arabic*)]
  \item
    $\Y = (\Y_1, \Y_2)$ is analytic in $(\mathbb{H}_{\theta}, \compC \setminus [0, +\infty))$.
  \item $\Y$ has continuous boundary values $\Y_{\pm}$ when approaching $(0, +\infty)$ from above $(+)$ and below $(-)$, and satisfies
    \begin{equation}
      \Y_+(x) =\Y_-(x)
      \begin{pmatrix}
        1 & x^{\alpha} e^{-nV(x)} \\
        0 & 1
      \end{pmatrix},
      \qquad x>0.
    \end{equation}
  \item
    As $z \to \infty$ in $\mathbb{H}_{\theta}$, $\Y_1$ behaves as $\Y_1(z) = z^{n\theta} + \bigO(z^{(n - 1)\theta})$.
  \item \label{enu:RHP:original_q:4}
    As $z \to \infty$ in $\compC$, $\Y_2$ behaves as $\Y_2(z) = \bigO(z^{-(n + 1)})$.
  \item
    As $z \to 0$ in $\mathbb{H}_{\theta}$, we have
    \begin{equation}
    \Y_1(z)=\bigO(1).
    \end{equation}
   \item
    As $z \to 0$ in $\compC$, we have
    \begin{equation}
    \Y_2(z) = \left\{
               \begin{array}{ll}
                 \bigO(1), & \hbox{ $\alpha > 0$,} \\
                 \bigO(\log z), & \hbox{ $\alpha = 0$,} \\
                 \bigO(z^{\alpha}), & \hbox{ $\alpha < 0$.}
               \end{array}
             \right.
        \end{equation}

  \item
    $\Y_1(z)$ has continuous boundary values $\Y_1(e^{\frac{\pi i}{\theta} }x)$ and $\Y_1(e^{-\frac{\pi i}{\theta}}x)$ for $x > 0$, and it satisfies the boundary condition $\Y_1(e^{\frac{\pi i}{\theta} }x) = \Y_1(e^{-\frac{\pi i}{\theta}}x)$.
  \end{enumerate}
\end{RHP}

We remark that with the constant $\kappa_n$ defined in \eqref{eq:pqbioOP}, Item \ref{enu:RHP:original_p:4} of RH problem \ref{RHP:original_p} and Item \ref{enu:RHP:original_q:4} of RH problem \ref{RHP:original_q} can be refined to
\begin{align} \label{eq:Y_2_asy_infty}
  Y_2(z) = {}& \frac{\kappa_n z^{-(n + 1)\theta}}{2\pi i} (1 + \bigO(z^{-\theta})), &
  \Y_2(z) = {}& \frac{\kappa_n z^{-(n + 1)}}{2\pi i} (1 + \bigO(z^{-1})), && \text{as $z \to \infty$ in $\halfH$ or $\compC$}.
\end{align}

The transforms of RH problems \ref{RHP:original_p} and \ref{RHP:original_q} require the global parametrices and Airy parametrix. The former is constructed in \cite{Wang-Zhang21} and the latter is commonly used in Riemann-Hilbert literature. We present the results here for the reader's convenience.

\paragraph{Global parametrices}

With $I_1, I_2$ defined in \eqref{eq:inverse1} and \eqref{eq:inverse2}, we define \cite[Equations (3.19), (3.20), (4.15) and (4.16)]{Wang-Zhang21}
\begin{align}
  P_1^{(\infty)}(z) = {}& \tP(I_1(z)), & \widetilde P_2^{(\infty)}(z)&=\widetilde \tP(I_1(z)), & z \in {}& \mathbb{C}\setminus [0,b], \label{eq:P1} \\
  P_2^{(\infty)}(z) = {}& \tP(I_2(z)), & \widetilde P_1^{(\infty)}(z)&=\widetilde \tP(I_2(z)), & z \in {}& \mathbb{H}_\theta \setminus [0,b], \label{eq:P2}
\end{align}
where $\tP$ and $\widetilde \tP$ a given by \cite[Equations (3.18) and (4.14)]{Wang-Zhang21}
\begin{align} \label{eq:Fs}
  \tP(s)= {}&
              \begin{cases}
                \frac{s}{\sqrt{(s + 1)(s - s_b)}}  \left( \frac{s+1}{s} \right)^{\frac{\theta-\alpha-1}{\theta}}, & s \in \compC \setminus \overline{D}, \\
                \frac{c^{\alpha+1-\theta}s(s+1)^{\alpha+1-\theta}}{\theta\sqrt{(s+1)(s-s_b)}}, & s\in D,
              \end{cases}
  &
    \widetilde \tP(s)= {}&
                           \begin{cases}
                             \frac{c^{\alpha}\sqrt{s_b}i}{\sqrt{(s + 1)(s - s_b)}} \left( \frac{s+1}{s} \right)^{\frac{\alpha}{\theta}}, & \text{$s \in \compC \setminus \overline{D}$,} \\
                             \frac{\sqrt{s_b}i }{\sqrt{(s + 1)(s - s_b)}} (s+1)^{-\alpha}, & \text{$s\in D$,}
                           \end{cases}
\end{align}
where $s_b=1/\theta$, the branch cuts of $\sqrt{(s+1)(s - s_b)}$, $\left(\frac{s+1}{s}\right)^{\frac{\theta-\alpha-1}{\theta}}$ and $(s+1)^{\alpha+1-\theta}$ for $\tP$ are taken along $\gamma_1$, $[-1,0]$ and $(-\infty,-1]$, respectively, and the branch cuts of $\sqrt{(s+1)(s - s_b)}$, $\left(\frac{s+1}{s}\right)^{\frac{\alpha}{\theta}}$ and $(s+1)^{-\alpha}$ for $\widetilde{\tP}$ are taken along $\gamma_2$, $[-1,0]$ and $(-\infty,-1]$, respectively. 

We let
\begin{align}
  P^{(\infty)} = {}&(P^{(\infty)}_1, P^{(\infty)}_2), & \P^{(\infty)} = (\P^{(\infty)}_1, \P^{(\infty)}_2).
\end{align}
Then we have for $x \in (0,b)$ \cite[Equations (3.14) and (4.10)]{Wang-Zhang21},
\begin{align} \label{eq:P^infty_jump}
  P^{(\infty)}_+(x) = {}& P^{(\infty)}_-(x)
  \begin{pmatrix}
    0 & \frac{x^{\alpha+1-\theta}}{\theta} \\
    -\frac{\theta}{x^{\alpha+1-\theta}} & 0
  \end{pmatrix},
  & \Pinfty_+(x) = {}& \Pinfty_-(x)
                       \begin{pmatrix}
                         0 & x^{\alpha}
                         \\
                         -x^{-\alpha} & 0
                       \end{pmatrix},
\end{align}
and by \cite[Proposition 3.6 and 4.6]{Wang-Zhang21}, we have as $z \to b$,
\begin{align} \label{eq:Pinfty_asy_b}
  P^{(\infty)}_i(z)) = {}& \bigO((z - b)^{-1/4}), & \Pinfty_i(z)) = {}& \bigO((z - b)^{-1/4}), & i = {}& 1, 2,
\end{align}
and as $z\to 0$, 
\begin{align}
  P^{(\infty)}_1(z) = {}& P^{(\infty), \pre}_1(z) (1 + \bigO(z^{\frac{\theta}{1+\theta}})), & P^{(\infty)}_2(z) = P^{(\infty), \pre}_2(z)  (1 + \bigO(z^{\frac{\theta}{1+\theta}})), \label{eq:P_and_P^pre} \\
  \Pinfty_1(z) = {}& \P^{(\infty), \pre}_1(z) (1 + \bigO(z^{\frac{\theta}{1+\theta}})), & \Pinfty_2(z) = \P^{(\infty), \pre}_2(z)  (1 + \bigO(z^{\frac{\theta}{1+\theta}})), \label{eq:P_and_P^pre_tilde}
\end{align}
where
\begin{align}
  P^{(\infty), \pre}_1(z) = {}&
                          \begin{cases}
                            \sqrt{\frac{\theta}{1+\theta}}c^{\frac{2(\alpha+1)-\theta}{2(1+\theta)}}e^{\frac{2(\alpha+1)-\theta}{2(1+\theta)}\pi i}z^{\frac{\theta-2(\alpha+1)}{2(1+\theta)}}, & \arg z \in (0, \pi),
                            \\
                            \sqrt{\frac{\theta}{1+\theta}}c^{\frac{2(\alpha+1)-\theta}{2(1+\theta)}}e^{\frac{\theta-2(\alpha+1)}{2(1+\theta)}\pi i}z^{\frac{\theta-2(\alpha+1)}{2(1+\theta)}}, & \arg z \in (-\pi, 0),
                          \end{cases} \label{eq:asy_P^infty_1} \\
  P^{(\infty), \pre}_2(z) = {}&
                          \begin{cases}
                            \frac{c^{\frac{2(\alpha+1)-\theta}{2(1+\theta)}}}{\sqrt{\theta(1+\theta)}}e^{\frac{\theta-2(\alpha+1)}{2(1+\theta)}\pi i} z^{\frac{(\alpha+\frac12 -\theta)\theta}{1+\theta}}, & \arg z \in (0, \frac{\pi}{\theta}), \\
                            \frac{c^{\frac{2(\alpha+1)-\theta}{2(1+\theta)}}}{\sqrt{\theta(1+\theta)}}e^{\frac{2\alpha-3\theta}{2(1+\theta)}\pi i} z^{\frac{(\alpha+\frac12 -\theta)\theta}{1+\theta}}, & \arg z \in (-\frac{\pi}{\theta}, 0),
                          \end{cases} \label{eq:asy_P^infty_2} \\
  \P^{(\infty), \pre}_1(z) = {}& \left\{
                             \begin{array}{ll}
                               \frac{c^{\frac{(\alpha+1/2)\theta}{1+\theta}}}{\sqrt{1+\theta}}e^{\frac{\alpha+1/2}{1+\theta}\pi i}z^{-\frac{(\alpha+1/2)\theta}{1+\theta}}, & \arg z \in (0, \frac{\pi}{\theta}),
                               \\
                               \frac{c^{\frac{(\alpha+1/2)\theta}{1+\theta}}}{\sqrt{1+\theta}}e^{-\frac{\alpha+1/2}{1+\theta}\pi i}z^{-\frac{(\alpha+1/2)\theta}{1+\theta}}, & \arg z \in (-\frac{\pi}{\theta}, 0),
                             \end{array}
                             \right. \label{eq:asy0tildeP1infty} \\
  \P^{(\infty), \pre}_2(z) = {}&
                             \begin{cases}
                               \frac{c^{\frac{(\alpha+1/2)\theta}{1+\theta}}}{\sqrt{1+\theta}}e^{-\frac{\alpha+1/2}{1+\theta}\pi i} z^{\frac{2\alpha -\theta}{2(1+\theta)}}, & \arg z \in (0, \pi), \\
                               -\frac{c^{\frac{(\alpha+1/2)\theta}{1+\theta}}}{\sqrt{1+\theta}}e^{\frac{\alpha+1/2}{1+\theta}\pi i} z^{\frac{2\alpha -\theta}{2(1+\theta)}}, & \arg z \in (-\pi, 0),
                             \end{cases}
                             \label{eq:asy0tildeP1infty2}
\end{align}

\paragraph{Airy parametrix}

The Airy parametrix $\Psi^{(\Ai)}$ is the following a $2\times 2$ matrix-valued function
\begin{align} \label{eq:defn_Psi_Airy}
  \Psi^{(\Ai)}(\zeta) = {}&
                            \begin{cases}
                              \begin{pmatrix}
                                y_0(\zeta) &  -y_2(\zeta) \\
                                y_0'(\zeta) & -y_2'(\zeta)
                              \end{pmatrix}, & \text{$\arg \zeta \in (0,\frac{2\pi}{3})$,} \\
                              \begin{pmatrix}
                                -y_1(\zeta) &  -y_2(\zeta) \\
                                -y_1'(\zeta) & -y_2'(\zeta)
                              \end{pmatrix}, & \text{$\arg \zeta \in (\frac{2\pi}{3},\pi)$,} \\
                              \begin{pmatrix}
                                -y_2(\zeta) &  y_1(\zeta) \\
                                -y_2'(\zeta) & y_1'(\zeta)
                              \end{pmatrix}, & \text{$\arg \zeta \in (-\pi,-\frac{2\pi}{3})$,} \\
                              \begin{pmatrix}
                                y_0(\zeta) &  y_1(\zeta) \\
                                y_0'(\zeta) & y_1'(\zeta)
                              \end{pmatrix}, & \text{$\arg \zeta \in  (-\frac{2\pi}{3},0)$,}
                            \end{cases}
  &&
     \begin{aligned}
       & \text{with} \\
         y_0(\zeta) = {}& \sqrt{2\pi}e^{-\frac{\pi i}{4}} \Ai(\zeta), \\
       y_1(\zeta) = {}& \sqrt{2\pi}e^{-\frac{\pi i}{4}} \omega\Ai(\omega \zeta), \\
       y_2(\zeta) = {}& \sqrt{2\pi}e^{-\frac{\pi i}{4}} \omega^2\Ai(\omega^2 \zeta),                    
     \end{aligned}
\end{align}
where $\Ai$ is the usual Airy function (cf. \cite[Chapter 9]{Boisvert-Clark-Lozier-Olver10}) and $\omega=e^{2\pi i/3}$. See, for example, \cite[Appendix B]{Wang-Zhang21} for properties of $\Psi^{(\Ai)}$.

\section{Function spaces $V_{\alpha}(R)$ and $\V_{\alpha}(R)$} \label{sec:V_alpha_R}

\subsection{Definition and series representation of $V_{\alpha}(R)$ and $\V_{\alpha}(R)$}

Let $R$ be a positive constant or $+\infty$, and $\gamma$ be a small positive constant. We define the function spaces $V_{\alpha}(R)$ and $\V_{\alpha}(R)$ as follows.
\begin{defn} \label{defn:V_alpha}
  $V_{\alpha}(R)$ consists of functions $f(z)$ on $z \in D^*(0, R) \setminus \{ \arg z = \pm \frac{\theta^{-1} \pi + \gamma}{1 + \theta^{-1}} \}$, such that
  \begin{enumerate}
  \item
    $f(z)$ is analytic in the sector $\arg (-z) \in (\frac{-\pi + \gamma}{1 + \theta^{-1}}, \frac{\pi - \gamma}{1 + \theta^{-1}})$ and the sector $\arg z \in (\frac{-\theta^{-1} \pi - \gamma}{1 + \theta^{-1}}, \frac{\theta^{-1} \pi + \gamma}{1 + \theta^{-1}})$ separately, and $f(z)$ is continuous up to the boundary on the two rays $\{ \arg z = \pm \frac{\theta^{-1} \pi + \gamma}{1 + \theta^{-1}} \}$.
  \item
    Let the two rays be oriented from $0$ to $\infty$. The boundary values of $f$ on the sides of the two rays satisfy
    \begin{equation} \label{eq:jump_V_alpha(R)}
      f _+(z) - f_-(z) =
      \begin{cases}
        -e^{-\frac{2\alpha + 3}{\theta + 1} \pi i} f(z e^{-\frac{2\pi}{\theta + 1} i}), & \arg z = \frac{\theta^{-1} \pi + \gamma}{1 + \theta^{-1}}, \\
        e^{\frac{2\alpha + 3}{\theta + 1} \pi i} f(z e^{\frac{2\pi}{\theta + 1} i}), & \arg z = \frac{-\theta^{-1} \pi - \gamma}{1 + \theta^{-1}}.
      \end{cases}
    \end{equation}
  \item
    As $z \to 0$, $f(z)$ has the limit behaviour depending on $\alpha$ and $\theta$ as follows:
    \begin{enumerate}[label=\emph{(\alph*)}, ref=(\arabic*)]
    \item If $\alpha > \theta - 1$, then
      \begin{equation} \label{eq:V_alpha(R)_at_0:1}
        f(z) =
        \begin{cases}
          \bigO (z^{-1/2 + (\alpha + 1)/\theta}), & \arg (-z) \in (\frac{-\pi + \gamma}{1 + \theta^{-1}}, \frac{\pi - \gamma}{1 + \theta^{-1}}), \\
          \bigO (z^{\theta - \alpha - 1/2}), & \arg z \in (\frac{-\theta^{-1} \pi - \gamma}{1 + \theta^{-1}}, \frac{\theta^{-1} \pi + \gamma}{1 + \theta^{-1}}).
        \end{cases}
      \end{equation}
    \item 
      If $\alpha = \theta - 1$, then
      \begin{equation} \label{eq:V_alpha(R)_at_0:2}
        f(z) =
        \begin{cases}
          \bigO(z^{1/2}), & \arg (-z) \in (\frac{-\pi + \gamma}{1 + \theta^{-1}}, \frac{\pi - \gamma}{1 + \theta^{-1}}), \\
          \bigO (z^{1/2} \log z), & \arg z \in (\frac{-\theta^{-1} \pi - \gamma}{1 + \theta^{-1}}, \frac{\theta^{-1} \pi + \gamma}{1 + \theta^{-1}}).
        \end{cases}
      \end{equation}
    \item
      If $-1 < \alpha < \theta - 1$, then in both sectors
      \begin{equation} \label{eq:V_alpha(R)_at_0:3}
        f(z) = \bigO (z^{-1/2 + (\alpha + 1)/\theta}).
      \end{equation}
    \end{enumerate}
  \end{enumerate}
\end{defn}

\begin{defn} \label{defn:V_alpha_tilde}
  $\V_{\alpha}(R)$ consists of functions $\f(z)$ on $z \in D^*(0, R) \setminus \{ \arg z = \pm \frac{\theta^{-1} \pi - \gamma}{1 + \theta^{-1}} \}$, such that
  \begin{enumerate}
  \item
    $\f(z)$ is analytic in the sector $\arg (-z) \in (\frac{-\pi + \gamma}{1 + \theta^{-1}}, \frac{\pi + \gamma}{1 + \theta^{-1}})$ and the sector $\arg z \in (\frac{-\theta^{-1} \pi + \gamma}{1 + \theta^{-1}}, \frac{\theta^{-1} \pi - \gamma}{1 + \theta^{-1}})$ separately, and $\f(z)$ is continuous up to the boundary on the two rays $\{ \arg z = \pm \frac{\theta^{-1} \pi - \gamma}{1 + \theta^{-1}} \}$.
  \item
    Let the two rays be oriented from $0$ to $\infty$. The boundary values of $\f$ on the sides of the two rays satisfy
    \begin{equation} \label{eq:jump_V_alpha(R)_tilde}
      \f _+(z) - \f_-(z) =
      \begin{cases}
        -e^{\frac{2\alpha + 1}{\theta + 1} \pi i} \f(z e^{-\frac{2\pi}{\theta + 1} i}), & \arg z = \frac{\theta^{-1} \pi - \gamma}{1 + \theta^{-1}}, \\
        e^{-\frac{2\alpha + 1}{\theta + 1} \pi i} \f(z e^{\frac{2\pi}{\theta + 1} i}), & \arg z = \frac{-\theta^{-1} \pi + \gamma}{1 + \theta^{-1}}.
      \end{cases}
    \end{equation}
  \item
    As $z \to 0$, $\f(z)$ has the limit behaviour depending on $\alpha$ and $\theta$ as follows:
    \begin{enumerate}[label=\emph{(\alph*)}, ref=(\arabic*)]
    \item
      If $\alpha > 0$, then
      \begin{equation} \label{eq:V_alpha(R)_at_0:1_tilde}
        \f(z) =
        \begin{cases}
          \bigO (z^{1/2 - \alpha/\theta}), & \arg (-z) \in (\frac{-\pi - \gamma}{1 + \theta^{-1}}, \frac{\pi + \gamma}{1 + \theta^{-1}}), \\
          \bigO (z^{\alpha + 1/2}), & \arg z \in (\frac{-\theta^{-1} \pi + \gamma}{1 + \theta^{-1}}, \frac{\theta^{-1} \pi - \gamma}{1 + \theta^{-1}}).
        \end{cases}
      \end{equation}
    \item 
      If $\alpha = 0$, then
      \begin{equation} \label{eq:V_alpha(R)_at_0:2_tilde}
        \f(z) =
        \begin{cases}
          \bigO(z^{1/2} \log z), & \arg (-z) \in (\frac{-\pi - \gamma}{1 + \theta^{-1}}, \frac{\pi + \gamma}{1 + \theta^{-1}}), \\
          \bigO (z^{1/2}), & \arg z \in (\frac{-\theta^{-1} \pi + \gamma}{1 + \theta^{-1}}, \frac{\theta^{-1} \pi - \gamma}{1 + \theta^{-1}}).
        \end{cases}
      \end{equation}
    \item
      If $-1 < \alpha < 0$, then in both sectors
      \begin{equation} \label{eq:V_alpha(R)_at_0:3_tilde}
        \f(z) = \bigO (z^{\alpha + 1/2}).
      \end{equation}
    \end{enumerate}
  \end{enumerate}
\end{defn}


Let $f(z) \in V_{\alpha}(R)$. We denote the branch of $f(z)$ in the sector $\arg (-z) \in (\frac{-\pi + \gamma}{1 + \theta^{-1}}, \frac{\pi - \gamma}{1 + \theta^{-1}})$ as $f^{(1)}(z)$, and the branch of $f(z)$ in the sector $\arg z \in (\frac{-\theta^{-1} \pi - \gamma}{1 + \theta^{-1}}, \frac{\theta^{-1} \pi + \gamma}{1 + \theta^{-1}})$ as $f^{(R)}(z)$. In the sector $\arg z \in (\frac{\theta^{-1} \pi - \gamma}{1 + \theta^{-1}}, \frac{\theta^{-1} \pi + \gamma}{1 + \theta^{-1}})$, the function
\begin{equation} \label{eq:f^(1)_upper_ext}
  f^{(R)}(z) - e^{-\frac{2\alpha + 3}{\theta + 1} \pi i} f^{(R)}(z e^{-\frac{2\pi}{\theta + 1} i})
\end{equation}
is analytic and continuous on the two boundary rays, and by \eqref{eq:jump_V_alpha(R)} its boundary value along the $-$ side of ray $\{ \arg z = \frac{\theta^{-1} \pi + \gamma}{1 + \theta^{-1}} \}$ is equal to that of $f^{(1)}(z)$ on the $+$ side of the ray. Similarly, in the sector $\arg z \in (\frac{-\theta^{-1} \pi - \gamma}{1 + \theta^{-1}}, \frac{-\theta^{-1} \pi + \gamma}{1 + \theta^{-1}})$, the function
\begin{equation} \label{eq:f^(1)_lower_ext}
  f^{(R)}(z) - e^{\frac{2\alpha + 3}{\theta + 1} \pi i} f^{(R)}(z e^{\frac{2\pi}{\theta + 1} i})
\end{equation}
is analytic and continuous on the two boundary rays, and by \eqref{eq:jump_V_alpha(R)} its boundary value along the $+$ side of ray $\{ \arg z = \frac{-\theta^{-1} \pi - \gamma}{1 + \theta^{-1}} \}$ is equal to that of $f^{(1)}(z)$ on the $-$ side of the ray. Then we see that $f^{(1)}(s)$ can be analytically extended to the sector $\arg (-z) \in (\frac{-\pi - \gamma}{1 + \theta^{-1}}, \frac{\pi + \gamma}{1 + \theta^{-1}})$ by \eqref{eq:f^(1)_upper_ext} and \eqref{eq:f^(1)_lower_ext}. It is clear that for $\arg z \in (\frac{\theta^{-1} \pi - \gamma}{1 + \theta^{-1}}, \frac{\theta^{-1} \pi + \gamma}{1 + \theta^{-1}})$, $f^{(1)}(z)$ satisfies
\begin{equation}
  f^{(1)}(z) = -e^{-\frac{2\alpha + 3}{\theta + 1} \pi i} f^{(1)}(z e^{-\frac{2\pi}{\theta + 1} i}).
\end{equation}
Hence, we find that the function
\begin{equation}
  (-z)^{\frac{1}{2} - \frac{\alpha + 3/2}{\theta + 1}} f^{(1)}((-z)^{\frac{\theta}{\theta + 1}}), \quad \arg (-z) \in (-\pi, \pi),
\end{equation}
can be extended to a holomorphic function on $D^*(0, R^{1 + \theta^{-1}})$, and by \eqref{eq:V_alpha(R)_at_0:1}, \eqref{eq:V_alpha(R)_at_0:2} and \eqref{eq:V_alpha(R)_at_0:3}, as $z \to 0$ in the sector $\arg (-z) \in (-\pi + \gamma, \pi - \gamma)$
\begin{equation}
  (-z)^{\frac{1}{2} - \frac{\alpha + 3/2}{\theta + 1}} f^{(1)}((-z)^{\frac{\theta}{\theta + 1}}) = \bigO(1),
\end{equation}
and it has at most polynomial growth in $z^{-1}$ from any direction. We conclude that $f^{(1)}(z)$ has the power series representation
\begin{equation} \label{eq:f^(1)_in_series}
  f^{(1)}(z) = (-z)^{-\frac{1}{2} + \frac{\alpha + 1}{\theta}} \sum^{\infty}_{k = 0} a_k (-z)^{\frac{\theta + 1}{\theta} k}, \quad \limsup_{k \to \infty} \lvert a_k \rvert^{\frac{1}{k}} \leq R^{-\frac{\theta + 1}{\theta}}.
\end{equation}
Next, with $a_k$ given in \eqref{eq:f^(1)_in_series}, we consider the function ($\log z$ takes the principal branch)
\begin{equation} \label{eq:f^R_f^2}
  f^{(2)}(z)  = f^{(R)}(z) - z^{-\frac{1}{2} + \frac{\alpha + 1}{\theta}} \sum^{\infty}_{k = 0} (-1)^k a_k z^{\frac{\theta + 1}{\theta} k}
  \times
  \begin{cases}
    \frac{1}{2\sin \pi \left( \frac{k + \alpha + 1}{\theta} \right)}, & \frac{k + \alpha + 1}{\theta} \notin \intZ, \\
    \frac{\theta + 1}{2\pi} (-1)^{\frac{k + \alpha + 1}{\theta}} \log z, & \frac{k + \alpha + 1}{\theta} \in \intZ.
  \end{cases}
\end{equation}
We see that $f^{(2)}(z)$ is analytic in the sector $\arg z \in (\frac{-\theta^{-1} \pi - \gamma}{1 + \theta^{-1}}, \frac{\theta^{-1} \pi + \gamma}{1 + \theta^{-1}})$ and continuous on the two boundary rays. $f^{(2)}(z)$ satisfies
\begin{equation}
  f^{(2)}(z) = e^{-\frac{2\alpha + 3}{\theta + 1} \pi i} f^{(2)}(z e^{-\frac{2\pi}{1 + \theta} i}), \quad \arg z \in (\frac{\theta^{-1} \pi - \gamma}{1 + \theta^{-1}}, \frac{\theta^{-1} \pi + \gamma}{1 + \theta^{-1}}),
\end{equation}
due to the expression \eqref{eq:f^(1)_upper_ext} of $f^{(1)}(z)$ in this sector and the series formula \eqref{eq:f^(1)_in_series} of $f^{(1)}(z)$. Hence, we find that the function
\begin{equation}
  z^{\frac{\alpha + 1/2 - \theta}{\theta + 1}} f^{(2)}(z^{\frac{1}{\theta + 1}}), \quad \arg z \in (-\pi, \pi)
\end{equation}
can be extended to a holomorphic function on $D^*(0, R^{\theta + 1})$, and by \eqref{eq:V_alpha(R)_at_0:1}, \eqref{eq:V_alpha(R)_at_0:2} and \eqref{eq:V_alpha(R)_at_0:3}, as $z \to 0$
\begin{equation}
  z^{\frac{\alpha + 1/2 - \theta}{\theta + 1}} f^{(2)}(z^{\frac{1}{\theta + 1}}) =
  \begin{cases}
    \bigO(1), & \alpha > \theta - 1, \\
    \bigO(\log z), & \alpha = \theta - 1, \\
    \bigO(z^{\frac{\alpha + 1}{\theta} - 1}), & -1 < \alpha < \theta - 1,
  \end{cases}
\end{equation}
and then this function is analytic at $0$. Like \eqref{eq:f^(1)_in_series}, we conclude that $f^{(2)}(z)$ has the power series representation
\begin{equation} \label{eq:f^(2)_in_series}
  f^{(2)}(z) = z^{\theta - \alpha - \frac{1}{2}} \sum^{\infty}_{k = 0} b_k z^{(\theta + 1)k}, \quad \limsup_{k \to \infty} \lvert b_k \rvert^{\frac{1}{k}} \leq R^{-(\theta + 1)}.
\end{equation}

By the arguments above, we have the power series representation of $V_{\alpha}(R)$:
\begin{lemma} \label{lem:power_series_repr_V_alpha(R)}
  Each $f(z) \in V_{\alpha}(R)$ has a unique power series representation such that its branch $f^{(1)}(z)$ in sector $\arg (-z) \in (\frac{-\pi + \gamma}{1 + \theta^{-1}}, \frac{\pi - \gamma}{1 + \theta^{-1}})$ is expressed by \eqref{eq:f^(1)_in_series} and its branch $f^{(R)}(z)$ in the sector $\arg z \in (\frac{-\theta^{-1} \pi - \gamma}{1 + \theta^{-1}}, \frac{\theta^{-1} \pi + \gamma}{1 + \theta^{-1}})$ is expressed by \eqref{eq:f^R_f^2} and \eqref{eq:f^(2)_in_series}. Conversely, any pair of power series \eqref{eq:f^(1)_in_series} and \eqref{eq:f^(2)_in_series} define a function $f(z) \in V_{\alpha}(R)$.
\end{lemma}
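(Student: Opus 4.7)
The forward direction (uniqueness of the power series) is essentially established in the paragraphs preceding the lemma statement; the plan is to formalize that argument into a proof. The starting observation is that the jump relation \eqref{eq:jump_V_alpha(R)} allows $f^{(1)}$ to be analytically continued across each of the two rays by using the formulas \eqref{eq:f^(1)_upper_ext}--\eqref{eq:f^(1)_lower_ext} in the two narrow transition sectors. The resulting analytic extension on a neighborhood of $0$ in $\{\arg(-z)\in(-\pi-\gamma',\pi+\gamma')\}$ satisfies the quasi-periodicity $f^{(1)}(z) = -e^{-(2\alpha+3)\pi i/(\theta+1)} f^{(1)}(z e^{-2\pi i/(\theta+1)})$ on the overlap. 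Pulling back by the conformal change of variable $z \mapsto (-z)^{\theta/(\theta+1)}$, this quasi-periodicity is absorbed into the prefactor $(-z)^{-1/2+(\alpha+1)/\theta}$, so that the resulting function becomes single-valued on a punctured disk $D^*(0,R^{1+\theta^{-1}})$, and the boundary estimates \eqref{eq:V_alpha(R)_at_0:1}--\eqref{eq:V_alpha(R)_at_0:3} combined with Riemann's removable singularity theorem make it holomorphic at $0$. The usual Taylor expansion then gives \eqref{eq:f^(1)_in_series} with the stated radius-of-convergence bound.

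Next, to obtain the series \eqref{eq:f^(2)_in_series} for $f^{(R)}$, the plan is to subtract off an explicit correction whose sole purpose is to absorb the (now known) jump induced by $f^{(1)}$. The correction is precisely the sum in \eqref{eq:f^R_f^2}; the coefficients are dictated by the identity $f^{(R)}_+ - f^{(R)}_- = \pm e^{\pm(2\alpha+3)\pi i/(\theta+1)} f^{(R)}(z e^{\mp 2\pi i/(\theta+1)})$ combined with the explicit form of $f^{(1)}$ determined in the previous step. A short calculation term-by-term shows $f^{(2)}$ extends analytically across both rays in the right sector and satisfies the simpler quasi-periodicity $f^{(2)}(z) = e^{-(2\alpha+3)\pi i/(\theta+1)} f^{(2)}(z e^{-2\pi i/(1+\theta)})$. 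Pulling back by $z \mapsto z^{1/(\theta+1)}$ and using the case-by-case bounds in Definition~\ref{defn:V_alpha} (together with removable singularity for the logarithmic case $\alpha = \theta - 1$, where one verifies the coefficient of the log cancels against the $\bigO(z^{1/2}\log z)$ in \eqref{eq:V_alpha(R)_at_0:2}), one concludes that $f^{(2)}$ is given by \eqref{eq:f^(2)_in_series}. Uniqueness is immediate from the construction: $f^{(1)}$ is uniquely determined by $f$, and $f^{(2)}$ is then uniquely determined by $f^{(R)}$.

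For the converse, given any two convergent power series of the form \eqref{eq:f^(1)_in_series} and \eqref{eq:f^(2)_in_series} with radii of convergence at least $R^{1+\theta^{-1}}$ and $R^{\theta+1}$ respectively, define $f^{(1)}$ on the left sector by \eqref{eq:f^(1)_in_series}, define $f^{(2)}$ on the right sector by \eqref{eq:f^(2)_in_series}, and then define $f^{(R)}$ via \eqref{eq:f^R_f^2}. Analyticity on each sector is automatic from the convergence, continuity on the boundary rays follows from uniform convergence on compact subsets, and the jump relation \eqref{eq:jump_V_alpha(R)} is verified by a direct term-by-term comparison using the sine-identity $\sin\pi(x+1) = -\sin\pi x$ and, in the integer resonance case, using that $\log(z e^{\pm 2\pi i/(\theta+1)}) - \log z = \pm 2\pi i/(\theta+1)$. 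The boundary behavior at $0$ in each of the three cases of Definition~\ref{defn:V_alpha} falls out from the leading-order terms of the two series.

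The main delicate point I expect is the resonance case $\alpha = \theta - 1$ (and more generally the case where some index $k$ makes $(k+\alpha+1)/\theta$ an integer): the piecewise definition in \eqref{eq:f^R_f^2} switches between a sine-denominator formula and a logarithmic formula exactly to absorb a would-be pole, and one must check that the log-term produces the correct jump and boundary order compatible with \eqref{eq:V_alpha(R)_at_0:2}. This is a bookkeeping check rather than a conceptual difficulty, but it is where the argument must be carried out most carefully.
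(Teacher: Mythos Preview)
Your proposal is correct and follows essentially the same approach as the paper: the paper derives the forward direction in the paragraphs immediately preceding the lemma (the analytic continuation of $f^{(1)}$ via \eqref{eq:f^(1)_upper_ext}--\eqref{eq:f^(1)_lower_ext}, the pullback by $(-z)^{\theta/(\theta+1)}$ plus removable singularity to get \eqref{eq:f^(1)_in_series}, then the subtraction defining $f^{(2)}$ and the pullback by $z^{1/(\theta+1)}$ to get \eqref{eq:f^(2)_in_series}), and then simply states the lemma as a summary. Your treatment of the converse direction and the resonance case $(k+\alpha+1)/\theta\in\intZ$ is in fact more detailed than what the paper writes out explicitly, but is consistent with it.
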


By a parallel argument, we have the power series representation of $\V_{\alpha}(R)$:
\begin{lemma}
  Each $\f(z) \in \V_{\alpha}(R)$ has a unique power series representation
  \begin{equation} \label{eq:power_series_repr_V_alpha(R)_tilde}
    \f(z) =
    \begin{cases}
      z^{\alpha + \frac{1}{2}} \sum^{\infty}_{k = 0} \a_k z^{(\theta + 1)k}, & \arg z \in (\frac{-\theta^{-1} \pi + \gamma}{1 + \theta^{-1}}, \frac{\theta^{-1} \pi - \gamma}{1 + \theta^{-1}}), \\
      (-z)^{\alpha + \frac{1}{2}} \sum^{\infty}_{k = 0} (-1)^{k + 1} \a_k (-z)^{(\theta + 1) k} & \\
      \phantom{(-z)}
      \times
      \begin{cases}
        \frac{1}{2\sin \pi (\theta k + \alpha)}, & \theta k + \alpha \notin \intZ, \\
        \frac{1 + \theta^{-1}}{2\pi} (-1)^{\theta k + \alpha} \log (-z), & \theta k + \alpha \in \intZ,
      \end{cases}
      & \\
      + (-z)^{\frac{1}{2} - \frac{\alpha}{\theta}} \sum^{\infty}_{k = 0} \b_k z^{\frac{\theta + 1}{\theta} k}, & \arg (-z) \in (\frac{-\pi - \gamma}{1 + \theta^{-1}}, \frac{\pi + \gamma}{1 + \theta^{-1}}),
    \end{cases} \\
  \end{equation}
  such that
  \begin{align} \label{eq:growth_a_b_tilde}
    \limsup_{k \to \infty} \lvert \a_k \rvert^{\frac{1}{k}} \leq {}& R^{-(\theta + 1)}, & \limsup_{k \to \infty} \lvert \b_k \rvert^{\frac{1}{k}} \leq {}& R^{-\frac{\theta + 1}{\theta}}.
  \end{align}
  Conversely, any pair of sequences $\{ \a_k, \b_k \}$ satisfying \eqref{eq:growth_a_b_tilde} defines a function $\f \in \V_{\alpha}(R)$ by \eqref{eq:power_series_repr_V_alpha(R)_tilde}.
\end{lemma}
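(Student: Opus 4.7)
The plan is to transcribe the proof just given for $V_{\alpha}(R)$ (Lemma \ref{lem:power_series_repr_V_alpha(R)}), with the roles of the positive-axis and negative-axis sectors swapped. For $\f \in \V_{\alpha}(R)$ I denote its branch on the positive-axis sector $\arg z \in (\frac{-\theta^{-1}\pi + \gamma}{1 + \theta^{-1}}, \frac{\theta^{-1}\pi - \gamma}{1 + \theta^{-1}})$ by $\f^{(R)}(z)$ and its branch on the negative-axis sector $\arg(-z) \in (\frac{-\pi - \gamma}{1 + \theta^{-1}}, \frac{\pi + \gamma}{1 + \theta^{-1}})$ by $\f^{(1)}(z)$. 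Note that in $\V_{\alpha}(R)$ the ``small'' sector around the positive axis lies strictly inside the cone $\lvert \arg z \rvert < \pi/(\theta+1)$, whereas in $V_{\alpha}(R)$ it was the negative-axis sector that played the analogous role; this accounts for why here it is $\f^{(R)}$, rather than $\f^{(1)}$, that carries the $z^{\alpha + 1/2}$ series.

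First, using \eqref{eq:jump_V_alpha(R)_tilde} I extend $\f^{(R)}$ across each of its two boundary rays into the adjacent portion of the negative-axis sector by $\f^{(R)}(z) + e^{\mp \frac{2\alpha+1}{\theta+1}\pi i}\f^{(1)}(ze^{\pm \frac{2\pi i}{\theta+1}})$, obtaining a multivalued continuation whose two determinations at any ray are related by the multiplier $e^{-\frac{2(2\alpha+1)}{\theta+1}\pi i}$ after a full $2\pi/(\theta+1)$ shift. This quasi-periodicity means that $w \mapsto w^{-(\alpha+1/2)/(\theta+1)}\f^{(R)}(w^{1/(\theta+1)})$ is single-valued on a punctured disk of radius $R^{\theta+1}$, and the growth bounds in Definition \ref{defn:V_alpha_tilde} force it to extend analytically across $0$. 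A Taylor expansion then produces $\f^{(R)}(z) = z^{\alpha + 1/2}\sum_{k\ge 0}\a_k z^{(\theta+1)k}$, and Cauchy's estimates give $\limsup_k \lvert \a_k \rvert^{1/k} \le R^{-(\theta+1)}$.

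Second, I subtract from $\f^{(1)}(z)$ the analytic continuation of the $\a_k$-series into the negative sector, namely the middle line of \eqref{eq:power_series_repr_V_alpha(R)_tilde} (with the $1/(2\sin \pi(\theta k + \alpha))$ factor for generic $k$ and the $\log$ substitution at the integer resonances $\theta k + \alpha \in \intZ$). The subtracted piece is chosen precisely so that it obeys the same jump relation \eqref{eq:jump_V_alpha(R)_tilde} as $\f$ does, so that the remainder $\f^{(2)}(z)$ extends to a single-valued analytic function on the whole punctured negative sector, quasi-periodic with multiplier $e^{\pm(2\alpha+1)\pi i/(\theta+1)}$ under the shift $z \mapsto ze^{\mp 2\pi i/(\theta+1)}$. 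The substitution $z = -u^{\theta/(\theta+1)}$ then makes $(-z)^{\alpha/\theta - 1/2}\f^{(2)}((-z)^{\theta/(\theta+1)})$ single-valued on $D^*(0, R^{(\theta+1)/\theta})$; the growth hypotheses force analyticity at $0$, and Taylor expansion produces the $\b_k$-series with $\limsup_k \lvert \b_k \rvert^{1/k} \le R^{-(\theta+1)/\theta}$.

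The converse is a direct verification: given any two sequences with the stated growth, the series in \eqref{eq:power_series_repr_V_alpha(R)_tilde} converge on the appropriate sectors, and the jump relation \eqref{eq:jump_V_alpha(R)_tilde} is checked term by term using the identity that relates $\sin \pi(\theta k + \alpha)$ to its shift after $z \mapsto ze^{\mp 2\pi i/(\theta+1)}$, together with the compensating $\log z$ shift at the resonances. The main bookkeeping obstacle will be tracking the phase factor $e^{\pm(2\alpha+1)\pi i/(\theta+1)}$ and keeping straight which of the four sector-boundary rays is being crossed in which direction; apart from that, the proof is a line-by-line transcription of the argument already carried out for $V_{\alpha}(R)$.
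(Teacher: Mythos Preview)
Your proposal is correct and follows exactly the approach the paper intends: the paper's own proof of this lemma is the single sentence ``By a parallel argument,'' referring back to the detailed argument for $V_{\alpha}(R)$, and what you have written is precisely that parallel argument with the roles of the positive- and negative-axis sectors interchanged. There is a small slip in your extension formula (the first term should be $\f^{(1)}(z)$ rather than $\f^{(R)}(z)$, since you are expressing the continuation of $\f^{(R)}$ through values of $\f^{(1)}$), but you have already flagged the phase bookkeeping as the only real hazard, and the structure of the argument is right.
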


\begin{rmk}
  The series representation of $f(z) \in V_{\alpha}(R)$ does not depend on $\gamma$, since both $f^{(1)}(z)$ and $f^{(2)}(z)$ have power series representations independent of $\gamma$. The same property holds for $\f(z) \in \V_{\alpha}(R)$.
\end{rmk}

\subsection{Functions $G^{(\ell), \model}$, $H^{(\ell), \model}$, $\G^{(\ell), \model}$, $\H^{(\ell), \model}$, and the orthogonality}

For $\lambda \in \realR$, let
\begin{equation} \label{eq:D_1(k)_D_2(k)}
  \translate(\lambda) = -\frac{\alpha + 3/2}{\theta + 1} + \lambda.
\end{equation}
We define, with the functions $I^{(1)}_{\theta, a}(z)$ and $I^{(2)}_{\theta, a}(z)$ defined in \eqref{eq:defn_I_theta_a} and \eqref{eq:defn_J_theta_a} respectively,
\begin{equation} \label{eq:G_divided_model}
  G^{\model}(z; \lambda) = \frac{\sqrt{2\pi} \theta^{\translate(\lambda)}}{\sqrt{\theta + 1}} \times
  \begin{cases}
    \frac{1}{2\pi} I^{(2)}_{\theta, \translate(\lambda)}(z), & \arg z \in (-\frac{\theta^{-1} + \gamma}{1 + \theta^{-1}}, \frac{\theta^{-1} + \gamma}{1 + \theta^{-1}}), \\
   -i e^{\translate(\lambda) \pi i} I^{(1)}_{\theta, \translate(\lambda)}((-z) e^{\frac{\theta}{\theta + 1} \pi i}), & \arg (-z) \in (-\frac{\pi - \gamma}{1 + \theta^{-1}}, 0), \\
   i e^{-\translate(\lambda) \pi i} I^{(1)}_{\theta, \translate(\lambda)}((-z) e^{-\frac{\theta}{\theta + 1} \pi i}), & \arg (-z) \in (0, \frac{\pi - \gamma}{1 + \theta^{-1}}).
  \end{cases}
\end{equation}
It is clear that $G^{\model}(z; \lambda)$ is analytic in $D^*(0, R) \setminus \{ \arg z = \pm \frac{\theta^{-1} + \gamma}{1 + \theta^{-1}} \}$, after analytic extension on $\realR_+$.

By \eqref{eq:G_split}, we have, with the rays oriented from $0$ to $\infty$,
\begin{equation} \label{eq:G^model_jump}
  G^{\model}_+(z; \lambda) - G^{\model}_-(z; \lambda) =
  \begin{cases}
    -e^{-\frac{2\alpha + 3}{\theta + 1} \pi i} G^{\model}(z e^{-\frac{2\pi}{\theta + 1} i}; \lambda) e^{\lambda 2\pi i}, & \arg z = \frac{\theta^{-1} \pi + \gamma}{1 + \theta^{-1}}, \\
    e^{\frac{2\alpha + 3}{\theta + 1} \pi i} G^{\model}(z e^{\frac{2\pi}{\theta + 1} i}; \lambda) e^{-\lambda 2\pi i}, & \arg z = \frac{-\theta^{-1} \pi - \gamma}{1 + \theta^{-1}}.
  \end{cases}
\end{equation}
By \eqref{eq:asy_IJK_at_0}, We have, as $z \to 0$,
\begin{equation} \label{eq:G^l_asy}
  G^{\model}(z; \lambda) =
  \begin{cases}
    \bigO(z^{(1 + \theta^{-1}) (1/2 - \translate(\lambda))}), & \translate(\lambda) > \frac{1}{2(1 + \theta)}, \\
    \begin{cases}
      \bigO(z^{1/2}), & \arg(-z) \in (-\frac{\pi - \gamma}{1 + \theta^{-1}}, \frac{\pi - \gamma}{1 + \theta^{-1}}), \\
      \bigO(z^{1/2} \log z), & \arg z \in (-\frac{\theta^{-1}\pi + \gamma}{1 + \theta^{-1}}, \frac{\theta^{-1}\pi + \gamma}{1 + \theta^{-1}}),
    \end{cases}
    & \translate(\lambda) = \frac{1}{2(1 + \theta)}, \\
    \begin{cases}
      \bigO(z^{(1 + \theta^{-1}) (1/2 - \translate(\lambda))}), & \arg(-z) \in (-\frac{\pi - \gamma}{1 + \theta^{-1}}, \frac{\pi - \gamma}{1 + \theta^{-1}}), \\
      \bigO(z^{(\theta + 1) \translate(\lambda)}), & \arg z \in (-\frac{\theta^{-1}\pi + \gamma}{1 + \theta^{-1}}, \frac{\theta^{-1}\pi + \gamma}{1 + \theta^{-1}}),
    \end{cases}
    & \translate(\lambda) < \frac{1}{2(1 + \theta)}.
  \end{cases}
\end{equation}

Next, we define a ``fractional part'' function such that for all $\ell \in \intZ$,
\begin{equation} \label{eq:round_of_R^lambda}
  R^{\lambda}(\ell) = \frac{\theta}{\theta + 1} \ell + m_{\ell}, \quad m_{\ell} \in \intZ, \text{ such that } R^{\lambda}(\ell) \in (0, 1],
\end{equation}
so that $m_0 = 1$ and as $\ell$ increases, both $\{ \ell + m_{\ell} \}$ and $\{ -m_{\ell} \}$ weakly increase. For $\ell \in \intZ$, we define
\begin{equation} \label{eq:defn_G^ell_model}
  G^{(\ell), \model}(z) = G^{\model}(z; R^{\lambda}(\ell)) z^{\ell}.
\end{equation}
We have, by \eqref{eq:G^model_jump},
\begin{equation} \label{eq:G^ell_model_jump}
  G^{(\ell), \model}_+(z) - G^{(\ell), \model}_-(z) =
  \begin{cases}
    -e^{-\frac{2\alpha + 3}{\theta + 1} \pi i} G^{(\ell), \model}(z e^{-\frac{2\pi}{\theta + 1} i}), & \arg z = \frac{\theta^{-1} \pi + \gamma}{1 + \theta^{-1}}, \\
    e^{\frac{2\alpha + 3}{\theta + 1} \pi i} G^{(\ell), \model}(z e^{\frac{2\pi}{\theta + 1} i}), & \arg z = \frac{-\theta^{-1} \pi - \gamma}{1 + \theta^{-1}}.
  \end{cases}
\end{equation}
Moreover, using the residue formula to the integral formulas \eqref{eq:defn_I_theta_a} and \eqref{eq:defn_J_theta_a} to express $I^{(1)}_{\theta, a}(z)$ and $I^{(2)}_{\theta, a}(z)$ into series, we find that $G^{(\ell), \model}(z)$ has a power series expansion as follows.
\begin{equation} \label{eq:series_G^ell_model}
  G^{(\ell), \model}(z) = 
  \begin{cases}
    (-z)^{-\frac{1}{2} + \frac{\alpha + 1}{\theta}} \sum^{\infty}_{k = 1 - m_{\ell}} a^{(\ell)}_k (-z)^{\frac{\theta + 1}{\theta} k}, & \arg (-z) \in (\frac{-\pi + \gamma}{1 + \theta^{-1}}, \frac{\pi - \gamma}{1 + \theta^{-1}}), \\
    z^{-\frac{1}{2} + \frac{\alpha + 1}{\theta}} \sum^{\infty}_{k = 1 - m_{\ell}} (-1)^k a^{(\ell)}_k z^{\frac{\theta + 1}{\theta} k} & \\
    \phantom{z}
    \times
    \begin{cases}
      \frac{1}{2\sin \pi \left( \frac{k + \alpha + 1}{\theta} \right)}, & \frac{k + \alpha + 1}{\theta} \notin \intZ, \\
    \frac{\theta + 1}{2\pi} (-1)^{\frac{k + \alpha + 1}{\theta}} \log z, & \frac{k + \alpha + 1}{\theta} \in \intZ,
    \end{cases}
    & \\
    + z^{\theta - \alpha - \frac{1}{2}} \sum^{\infty}_{k = \ell + m_{\ell} - 1} b^{(\ell)}_k z^{(\theta + 1)k}, & \arg z \in (\frac{-\theta^{-1} \pi + \gamma}{1 + \theta^{-1}}, \frac{\theta^{-1} \pi + \gamma}{1 + \theta^{-1}}),
  \end{cases}
\end{equation}
where $a^{(\ell)}_k, b^{(\ell)}_k$ have explicit formulas, such that
\begin{align}
  \limsup_{k \to \infty} \lvert a^{(\ell)}_k \rvert^{\frac{1}{k}} = {}& 0, & \limsup_{k \to \infty} \lvert b^{(\ell)}_k \rvert^{\frac{1}{k}} = {}& 0, & a^{(\ell)}_{1 - m_{\ell}} \neq {}& 0, & b^{(\ell)}_{\ell + m_{\ell} - 1} \neq {}& 0.
\end{align}
As a consequence, if $\ell \in \natN$, $G^{(\ell), \model}(s) \in V_{\alpha}(R)$ for all $R \in (0, +\infty]$.

Let $\beta \in \realR$. We define similar to \eqref{eq:G_divided_model}, with the functions $I^{(2)}_{\theta, a}(z)$ and $I^{(3)}_{\theta, a}(z)$ defined in \eqref{eq:defn_J_theta_a} and \eqref{eq:defn_K_theta_a} respectively,
\begin{equation} \label{eq:H_divided_model}
  H^{\model}(z; \beta) = \frac{\sqrt{2\pi} \theta^{-\translate(\beta)}}{\sqrt{\theta + 1}} \times
  \begin{cases}
    \frac{1}{2\pi} I^{(2)}_{\theta, -\translate(\beta)}(-z), & \arg (-z) \in (-\frac{\pi + \gamma}{1 + \theta^{-1}}, \frac{\pi + \gamma}{1 + \theta^{-1}}), \\
    e^{\translate(\beta) \pi i} I^{(3)}_{\theta, -\translate(\beta)}(z e^{\frac{\pi i}{\theta + 1}}), & \arg z \in (-\frac{\theta^{-1} \pi - \gamma}{1 + \theta^{-1}}, 0), \\
    e^{-\translate(\beta) \pi i} I^{(3)}_{\theta, -\translate(\beta)}(z e^{-\frac{\pi i}{\theta + 1}}), & \arg z \in (0, \frac{\theta^{-1} \pi - \gamma}{1 + \theta^{-1}}).
  \end{cases}
\end{equation}
It is clear that $H^{\model}(z; \beta)$ is analytic in $D^*(0, R) \setminus \{ \arg z = \pm \frac{\theta^{-1} - \gamma}{1 + \theta^{-1}} \}$, after analytic extension on $\realR_+$.

Analogous to \eqref{eq:G^model_jump}, by \eqref{eq:H_split}, we have, with the rays oriented from $0$ to $\infty$,
\begin{equation} \label{eq:jump_H^model}
  H^{\model}_+(z; \beta) - H^{\model}_-(z; \beta) =
  \begin{cases}
    -e^{\frac{2\alpha + 3}{\theta + 1} \pi i} H^{\model}(z e^{-\frac{2\pi}{\theta + 1} i}; \beta) e^{-\beta 2\pi i}, & \arg z = \frac{\theta^{-1} \pi - \gamma}{1 + \theta^{-1}}, \\
    e^{-\frac{2\alpha + 3}{\theta + 1} \pi i} H^{\model}(z e^{\frac{2\pi}{\theta + 1} i}; \beta) e^{\beta 2\pi i}, & \arg z = \frac{-\theta^{-1} \pi + \gamma}{1 + \theta^{-1}}.
  \end{cases}
\end{equation}
Analogous to \eqref{eq:G^l_asy}, by \eqref{eq:asy_IJK_at_0}, we have as $z \to 0$,
\begin{equation} \label{eq:H^l_asy}
  H^{\model}(z; \beta) =
  \begin{cases}
  \bigO(z^{-(\theta + 1) \translate(\beta)}), & \translate(\beta) > \frac{-1}{2(\theta + 1)}, \\
    \begin{cases}
      \bigO(z^{1/2}), & \arg z \in (\frac{-\theta^{-1} \pi + \gamma}{1 + \theta^{-1}}, \frac{\theta^{-1} \pi - \gamma}{1 + \theta^{-1}}), \\
      \bigO(z^{1/2} \log z), & \arg (-z) \in (\frac{-\pi - \gamma}{1 + \theta^{-1}}, \frac{\pi + \gamma}{1 + \theta^{-1}}),
    \end{cases}
    & \translate(\beta) = \frac{-1}{2(\theta + 1)}, \\
    \begin{cases}
      \bigO(z^{-(\theta + 1) \translate(\beta)}), & \arg z \in (\frac{-\theta^{-1} \pi + \gamma}{1 + \theta^{-1}}, \frac{\theta^{-1} \pi - \gamma}{1 + \theta^{-1}}), \\
      \bigO(z^{(1 + \theta^{-1}) (\translate(\beta) + 1/2)}), & \arg (-z) \in (\frac{-\pi - \gamma}{1 + \theta^{-1}}, \frac{\pi + \gamma}{1 + \theta^{-1}}),
    \end{cases}
    & \translate(\beta) < \frac{-1}{2(\theta + 1)}.
  \end{cases}
\end{equation}

Next, we define a ``fractional part'' function similar to \eqref{eq:round_of_R^lambda}, such that for all $\ell \in \intZ$,
\begin{equation} \label{eq:defn_n_ell}
  R^{\beta}(\ell) = -\frac{\theta}{1 + \theta} \ell - n_{\ell}, \quad n_{\ell} \in \intZ, \text{ such that } R^{\beta}(\ell) \in \left( -\frac{\theta}{\theta + 1}, \frac{1}{\theta + 1} \right].
\end{equation}
For $\ell \in \intZ$, analogous to \eqref{eq:defn_G^ell_model}, define 

\begin{equation} \label{eq:defn_H^ell_model}
  H^{(\ell), \model}(z) = H^{\model}(z; R^{\beta}(\ell)) z^{\ell}.
\end{equation}
We have, by \eqref{eq:jump_H^model}, analogous to \eqref{eq:G^ell_model_jump}
\begin{equation} \label{eq:H^ell_model_jump}
  H^{(\ell), \model}_+(z) - H^{(\ell), \model}_-(z) =
  \begin{cases}
    -e^{\frac{2\alpha + 3}{\theta + 1} \pi i} H^{(\ell), \model}(z e^{-\frac{2\pi}{\theta + 1} i}), & \arg z = \frac{\theta^{-1} \pi - \gamma}{1 + \theta^{-1}}, \\
    e^{-\frac{2\alpha + 3}{\theta + 1} \pi i} H^{(\ell), \model}(z e^{\frac{2\pi}{\theta + 1} i}), & \arg z = \frac{-\theta^{-1} \pi + \gamma}{1 + \theta^{-1}}.
  \end{cases}
\end{equation}

Similar to \eqref{eq:series_G^ell_model}, $H^{(\ell), \model}(z)$ also has the power series representation
\begin{equation} \label{eq:series_H^ell_model}
  H^{(\ell), \model}(z) =
  \begin{cases}
    z^{\alpha + \frac{3}{2}} \sum^{\infty}_{k = \ell + n_{\ell}} c^{(\ell)}_k z^{(\theta + 1) k}, & \arg z \in (-\frac{\theta^{-1} \pi - \gamma}{1 + \theta^{-1}}, \frac{\theta^{-1} \pi - \gamma}{1 + \theta^{-1}}), \\
    (-z)^{\alpha + \frac{3}{2}} \sum^{\infty}_{k = \ell + n_{\ell}} (-1)^k c^{(\ell)}_k (-z)^{(\theta + 1) k} & \\
    \phantom{(-z)^{\alpha + \frac{3}{2}}}
    \times
    \begin{cases}
      \frac{1}{2\sin \pi(\theta k + \alpha)}, & \theta k + \alpha \notin \intZ, \\
      \frac{1 + \theta^{-1}}{2\pi} (-1)^{\theta k + \alpha} \log z, & \theta k + \alpha \notin \intZ,
    \end{cases}
                                                                                                  & \\
  + (-z)^{\frac{1}{2} - \frac{\alpha + 1}{\theta}} \sum^{\infty}_{k = -n_{\ell}} d^{(\ell)}_k (-z)^{\frac{\theta + 1}{\theta} k}, & \arg (-z) \in (-\frac{\pi + \gamma}{1 + \theta^{-1}}, \frac{\pi + \gamma}{1 + \theta^{-1}}),
  \end{cases}
\end{equation}
such that
\begin{align} \label{eq:properties_H_coeff}
  \limsup_{k \to \infty} \lvert c^{(\ell)}_k \rvert^{\frac{1}{k}} = {}& 0, & \limsup_{k \to \infty} \lvert d^{(\ell)}_k \rvert^{\frac{1}{k}} = {}& 0, & c^{(\ell)}_{\ell + n_{\ell}} \neq {}& 0, & d^{(\ell)}_{-n_{\ell}} \neq {}& 0.
\end{align}

For any $r > 0$, we estimate the values of $G^{(\ell), \model}(z)$ and $H^{(\ell), \model}(z)$ on the circle $\lvert z \rvert = r$. By the definition formulas \eqref{eq:defn_G^ell_model} and \eqref{eq:G_divided_model} of $G^{(\ell), \model}(z)$, \eqref{eq:defn_H^ell_model} and \eqref{eq:H_divided_model} of $H^{(\ell), \model}(z)$, the range of $R^{\lambda}(\ell)$ in \eqref{eq:round_of_R^lambda}, the range of $R^{\beta}(\ell)$ in \eqref{eq:defn_n_ell}, and the integral formulas, \eqref{eq:defn_I_theta_a}, \eqref{eq:defn_J_theta_a} and \eqref{eq:defn_K_theta_a} of $I^{(1)}_{\theta, a}(z)$, $I^{(2)}_{\theta, a}(z)$ and $I^{(3)}_{\theta, a}(z)$, we have that there is $C_r > 0$ independent of $\ell$, such that
\begin{align} \label{eq:bound_G^ell_model}
  \lvert G^{(\ell), \model}(z) \rvert < {}& C_r r^{\ell}, & \lvert H^{(\ell), \model}(z) \rvert < {}& C_r r^{\ell}, && \text{for all $\lvert z \rvert = r$ and $\ell \in \intZ$}.
\end{align}
As $r \to \infty$, by the definition formulas of $G^{(\ell), \model}(z)$ and $H^{(\ell), \model}(z)$, and the limit formulas \eqref{eq:asy_I_theta_a_infty}, \eqref{eq:asy_J_theta_a_infty} and \eqref{eq:asy_K_theta_a_infty} respectively for $I^{(1)}_{\theta, a}(z)$, $I^{(2)}_{\theta, a}(z)$ and $I^{(3)}_{\theta, a}(z)$, we have uniformly in $\ell \in \intZ$
\begin{align} \label{eq:G_divided}
  e^z G^{(\ell), \model}(z) = {}& z^{\ell} (1 + \bigO(z^{-1})), & e^{-z} H^{(\ell), \model}(z) = {}& z^{\ell} (1 + \bigO(z^{-1})), & z \to {}& \infty. 
\end{align}

In parallel to $G^{(\ell), \model}(z)$ and $H^{(\ell), \model}(z)$ We also define 
\begin{align}
  \R^{\beta}(\ell) = {}& \frac{1}{\theta + 1} - \frac{\theta}{\theta + 1} \ell - \n_{\ell}, && \n_{\ell} \in \intZ, \text{ such that } \R^{\beta}(\ell) \in \left( -\frac{\theta}{\theta + 1}, \frac{1}{\theta + 1} \right], \label{eq:defn_n_ell_tilde} \\
  \R^{\lambda}(\ell) = {}& \frac{1}{\theta + 1} + \frac{\theta}{\theta + 1} \ell + \m_{\ell}, && \m_{\ell} \in \intZ, \text{ such that } \R^{\lambda}(\ell) \in (0, 1].
\end{align}
and
\begin{align} \label{eq:G_H_ell_tilde}
  \G^{(\ell), \model}(z) = {}& H^{\model}(z; \R^{\beta}(\ell)) z^{\ell}, & \H^{(\ell), \model}(z) = {}& G^{\model}(z; \R^{\lambda}(\ell)) z^{\ell}.
\end{align}
We have that $\G^{(\ell), \model}(z)$ and $\H^{(\ell), \model}(z)$ satisfy jump conditions similar to \eqref{eq:G^ell_model_jump} and \eqref{eq:H^ell_model_jump}, have series representations similar to \eqref{eq:series_G^ell_model} and \eqref{eq:series_H^ell_model}, and have estimates similar to \eqref{eq:bound_G^ell_model} and \eqref{eq:G_divided}. We omit the details. At last, we note that if $\ell \in \natN$, then $\G^{(\ell), \model}(z) \in \V_{\alpha}(R)$ for all $R \in (0, +\infty]$.


$H^{(\ell), \model}$ (resp.~$\H^{(\ell), \model}$) acts on $V_{\alpha}(R)$ (resp.~$\V_{\alpha}(R)$) as a linear functional by the integral formula in the following lemma:
\begin{lemma} \label{lem:inner_prod_defn}
  Suppose $f \in V_{\alpha}(R)$ and $\f \in \V_{\alpha}(R)$. For all $\ell \in \intZ$ and $R' \in (0, R)$, the inner products
  \begin{align}
    \langle f, H^{(\ell), \model} \rangle_{V_{\alpha}(R)} = {}& \frac{1}{2\pi i} \oint_{\lvert z \rvert = R'} H^{(\ell), \model}(z) f(z) \frac{dz}{z}, \label{eq:defn_inner_prod_with_H} \\
    \langle \f, \H^{(\ell), \model} \rangle_{\V_{\alpha}(R)} = {}& \frac{1}{2\pi i} \oint_{\lvert z \rvert = R'} \H^{(\ell), \model}(z) \f(z) \frac{dz}{z}, \label{eq:defn_inner_prod_with_H_tilde}
  \end{align}
  are well defined, in the sense that they are independent of $R'$. Furthermore, they are also independent of $\gamma$, in the sense that if $f$ is given by $\{ a_k \}, \{ b_k \}$ as in \eqref{eq:f^(1)_in_series}, \eqref{eq:f^R_f^2} and \eqref{eq:f^(2)_in_series}, then the integral in \eqref{eq:defn_inner_prod_with_H} only depends on $\{ a_k \}, \{ b_k \}$ but not $\gamma$, and if $\f$ is given by $\{ \a_k \}, \{ \b_k \}$ as in \eqref{eq:power_series_repr_V_alpha(R)_tilde}, then the integral in \eqref{eq:defn_inner_prod_with_H_tilde} only depends on $\{ \a_k \}, \{ \b_k \}$ but not $\gamma$. In particular, we can take $\gamma = 0$ in the integral formulas in \eqref{eq:defn_inner_prod_with_H} and \eqref{eq:defn_inner_prod_with_H_tilde}.
\end{lemma}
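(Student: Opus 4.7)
The plan is to prove $R'$-independence by a contour deformation argument that exploits the jump conditions of $f$ and $H^{(\ell), \model}$ in tandem, and then to extract $\gamma$-independence from the uniqueness of the series representations (Lemma \ref{lem:power_series_repr_V_alpha(R)} and equation \eqref{eq:series_H^ell_model}).

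For $R'$-independence, I would fix $\gamma > 0$ and $0 < R_1' < R_2' < R$, and consider $\Phi(z) := H^{(\ell), \model}(z) f(z)/z$. This function is analytic on the annulus $A = \{ R_1' < \lvert z \rvert < R_2' \}$ away from four rays: $L_1, L_4$ at $\arg z = \pm \frac{\theta^{-1}\pi - \gamma}{1+\theta^{-1}}$, where $H^{(\ell), \model}$ jumps by \eqref{eq:H^ell_model_jump}, and $L_2, L_3$ at $\arg z = \pm \frac{\theta^{-1}\pi + \gamma}{1+\theta^{-1}}$, where $f$ jumps by \eqref{eq:jump_V_alpha(R)}. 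Splitting $A$ into four sectors along these rays and applying Cauchy's theorem to each yields
\begin{equation*}
  \oint_{\lvert z \rvert = R_2'} \Phi(z) \, dz - \oint_{\lvert z \rvert = R_1'} \Phi(z) \, dz = \sum_{i=1}^{4} \int_{L_i \cap A} [\Phi_+(z) - \Phi_-(z)] \, dz,
\end{equation*}
with each $L_i$ oriented outward.

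Next I would pair $L_1$ with $L_3$ and $L_4$ with $L_2$. Direct angle arithmetic gives $\arg L_1 - \arg L_3 = 2\pi/(\theta+1)$, so the rotation $w = z e^{-2\pi i/(\theta+1)}$ sends $L_1$ bijectively onto $L_3$ preserving outward orientation, with $dz/z = dw/w$. Substituting this into the $L_1$-jump integrand (from \eqref{eq:H^ell_model_jump}, with $f$ analytic there) produces on $L_3$ the expression $-e^{\frac{2\alpha+3}{\theta+1}\pi i} H^{(\ell), \model}(w) f(w e^{2\pi i/(\theta+1)})/w$; this is exactly the negative of the $L_3$-jump integrand coming from \eqref{eq:jump_V_alpha(R)} with $H^{(\ell), \model}$ analytic. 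Hence the $L_1 + L_3$ contribution cancels, and the analogous substitution $w = z e^{2\pi i/(\theta+1)}$ makes $L_4 + L_2$ cancel. The right-hand side vanishes, proving $R'$-independence.

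For $\gamma$-independence, the key is that $f \in V_{\alpha}(R)$ is fully determined by its series data $\{a_k\}, \{b_k\}$ independently of $\gamma$, and likewise $H^{(\ell), \model}$ is determined by explicit $\gamma$-independent series. Thus two admissible parameters $\gamma_1 < \gamma_2$ give integrands that agree on the sectors common to both partitions and differ, on the narrow ``boundary'' sectors between the two sets of jump rays, by a jump that can be rewritten via \eqref{eq:jump_V_alpha(R)} or \eqref{eq:H^ell_model_jump} as a rotated evaluation of the other branch; pairing these difference integrals by the same rotation argument as above produces complete cancellation. In particular one may take $\gamma = 0$. The $\V_{\alpha}(R)$ version is entirely parallel, using \eqref{eq:jump_V_alpha(R)_tilde} and the tilde analog of \eqref{eq:H^ell_model_jump} indicated after \eqref{eq:G_H_ell_tilde}. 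The main obstacle will be the precise bookkeeping in the pairing step: one must match the correct phase factors $e^{\pm(2\alpha+3)\pi i/(\theta+1)}$, the direction of rotation, and a consistent outward-orientation convention so that the paired jump integrals subtract rather than reinforce. This sign-compatibility between the jump constants in \eqref{eq:jump_V_alpha(R)} and \eqref{eq:H^ell_model_jump} is precisely what powers the cancellation, and reflects a design choice in Section \ref{sec:V_alpha_R} that prepares the orthogonality to come.
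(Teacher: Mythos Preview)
Your proposal is correct and follows essentially the same approach as the paper. For $R'$-independence, the paper partitions each circle into four arcs $C^{(j)}_k$ along the same four rays, applies Cauchy's theorem on each sector to relate the arc integrals to radial integrals $I^{(j)}_1, I^{(j)}_2$, and then shows the identities $I^{(3)}_2 - I^{(1)}_1 = I^{(2)}_1 - I^{(4)}_2$ and $I^{(2)}_2 - I^{(3)}_1 = I^{(4)}_1 - I^{(1)}_2$ directly from the jump relations \eqref{eq:jump_V_alpha(R)} and \eqref{eq:H^ell_model_jump}; these identities are exactly your rotation pairings $L_1\leftrightarrow L_3$ and $L_2\leftrightarrow L_4$, since the $H$-jump at $\arg z=\frac{\theta^{-1}\pi-\gamma}{1+\theta^{-1}}$ involves $H^{(\ell),\model}(ze^{-2\pi i/(\theta+1)})$ evaluated on the lower $f$-ray. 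For $\gamma$-independence the paper compares $I_\gamma(R')$ directly with the degenerate $I_0(R')$, writes the difference as four arc integrals over the narrow boundary sectors, and cancels them in pairs by the same rotation, which is precisely your sketch with $\gamma_1=0$.
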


\begin{proof}
  We prove the well-definedness and $\gamma$-independence of $\langle f, H^{(\ell), \model} \rangle_{V_{\alpha}(R)}$ in \eqref{eq:defn_inner_prod_with_H}, and the well-definedness and $\gamma$-independence of $\langle \f, \H^{(\ell), \model} \rangle_{\V_{\alpha}(R)}$ in \eqref{eq:defn_inner_prod_with_H_tilde} is analogous.
  
  We denote the integral on the right-hand side of \eqref{eq:defn_inner_prod_with_H} $I_{\gamma}(R')$. To see that $I_{\gamma}(R')$ is independent of $R'$, we take $0 < R_1 < R_2$, and define the following four arcs, with $k = 1, 2$ and with counterclockwise orientation
  \begin{equation}
    \begin{aligned}
      C^{(1)}_k = {}& \{ R_k e^{it} : \frac{-\theta^{-1} \pi + \gamma}{1 + \theta^{-1}} < t < \frac{\theta^{-1} \pi - \gamma}{1 + \theta^{-1}} \}, & C^{(2)}_k = {}& \{ -R_k e^{it} : \frac{-\pi + \gamma}{1 + \theta^{-1}} < t < \frac{\pi - \gamma}{1 + \theta^{-1}} \}, \\
      C^{(3)}_k = {}& \{ R_k e^{it} : \frac{\theta^{-1} \pi - \gamma}{1 + \theta^{-1}} < t < \frac{\theta^{-1} \pi + \gamma}{1 + \theta^{-1}} \}, & C^{(4)}_k = {}& \{ R_k e^{it} : \frac{-\theta^{-1} \pi - \gamma}{1 + \theta^{-1}} < t < \frac{-\theta^{-1} \pi + \gamma}{1 + \theta^{-1}} \}.
    \end{aligned}
  \end{equation}
  Then let
  \begin{equation} \label{eq:integral_r1_r2}
    \quad J^{(j)}_k = \frac{1}{2\pi i} \int_{C^{(j)}_k} f(w) H^{(\ell), \model}(w) \frac{dw}{w},
  \end{equation}
  and
  \begin{equation}
    \begin{aligned}
      I^{(1)}_1 = {}& \int^{R_2 e^{\frac{\theta^{-1} \pi - \gamma}{1 + \theta^{-1}} i}}_{R_1 e^{\frac{\theta^{-1} \pi - \gamma}{1 + \theta^{-1}} i}} f(w) H^{(\ell), \model}_-(w) \frac{dw}{w}, & I^{(1)}_2 = {}& \int^{R_2 e^{\frac{-\theta^{-1} \pi + \gamma}{1 + \theta^{-1}} i}}_{R_1 e^{\frac{-\theta^{-1} \pi + \gamma}{1 + \theta^{-1}} i}} f(w) H^{(\ell), \model}_+(w) \frac{dw}{w}, \\
      I^{(2)}_1 = {}& \int^{R_2 e^{\frac{-\theta^{-1} \pi - \gamma}{1 + \theta^{-1}} i}}_{R_1 e^{\frac{-\theta^{-1} \pi - \gamma}{1 + \theta^{-1}} i}} f_-(w) H^{(\ell), \model}(w) \frac{dw}{w}, & I^{(2)}_2 = {}&  \int^{R_2 e^{\frac{\theta^{-1} \pi + \gamma}{1 + \theta^{-1}} i}}_{R_1 e^{\frac{\theta^{-1} \pi + \gamma}{1 + \theta^{-1}} i}} f_+(w) H^{(\ell), \model}(w) \frac{dw}{w}, \\
      I^{(3)}_1 = {}& \int^{R_2 e^{\frac{\theta^{-1} \pi + \gamma}{1 + \theta^{-1}} i}}_{R_1 e^{\frac{\theta^{-1} \pi + \gamma}{1 + \theta^{-1}} i}} f_-(w) H^{(\ell), \model}(w) \frac{dw}{w}, & I^{(3)}_2 = {}& \int^{R_2 e^{\frac{\theta^{-1} \pi - \gamma}{1 + \theta^{-1}} i}}_{R_1 e^{\frac{\theta^{-1} \pi - \gamma}{1 + \theta^{-1}} i}} f(w) H^{(\ell), \model}_+(w) \frac{dw}{w}, \\
      I^{(4)}_1 = {}& \int^{R_2 e^{\frac{-\theta^{-1} \pi + \gamma}{1 + \theta^{-1}} i}}_{R_1 e^{\frac{-\theta^{-1} \pi + \gamma}{1 + \theta^{-1}} i}} f(w) H^{(\ell), \model}_-(w) \frac{dw}{w}, & I^{(4)}_2 = {}& \int^{R_2 e^{\frac{-\theta^{-1} \pi - \gamma}{1 + \theta^{-1}} i}}_{R_1 e^{\frac{-\theta^{-1} \pi - \gamma}{1 + \theta^{-1}} i}} f_+(w) H^{(\ell), \model}(w) \frac{dw}{w}.
    \end{aligned}
  \end{equation}
  By the Cauchy integral formula, we have for all $j = 1, 2, 3, 4$, $J^{(j)}_2 - J^{(j)}_1 + I^{(j)}_2 - I^{(j)}_1 = 0$. Using jump condition \eqref{eq:H^ell_model_jump} for $H^{(\ell), \model}$ and jump condition \eqref{eq:jump_V_alpha(R)} for $f$, we have that
  \begin{equation}
    \begin{split}
      I^{(3)}_2 - I^{(1)}_1 = {}& \int^{R_2}_{R_1} f(r e^{\frac{\theta^{-1} \pi - \gamma}{1 + \theta^{-1}} i}) \left( H^{(\ell), \model}_+(r e^{\frac{\theta^{-1} \pi - \gamma}{1 + \theta^{-1}} i}) - H^{(\ell), \model}_+(r e^{\frac{\theta^{-1} \pi - \gamma}{1 + \theta^{-1}} i}) \right) \frac{dr}{r} \\
      = {}& -\int^{R_2}_{R_1} f(r e^{\frac{\theta^{-1} \pi - \gamma}{1 + \theta^{-1}} i}) e^{\frac{2\alpha + 3}{\theta + 1} \pi i} H^{(\ell), \model}(r e^{\frac{-\theta^{-1} \pi - \gamma}{1 + \theta^{-1}} i}) \frac{dr}{r} \\
      = {}& -\int^{R_2}_{R_1} \left( f_+(r e^{\frac{-\theta^{-1} \pi - \gamma}{1 + \theta^{-1}} i}) - f_+(r e^{\frac{-\theta^{-1} \pi - \gamma}{1 + \theta^{-1}} i}) \right) H^{(\ell), \model}(r e^{\frac{-\theta^{-1} \pi - \gamma}{1 + \theta^{-1}} i}) \frac{dr}{r} \\
      = {}& I^{(2)}_1 - I^{(4)}_2.
    \end{split}
  \end{equation}
  Similarly, we have $I^{(2)}_2 - I^{(3)}_1 = I^{(4)}_1 - I^{(1)}_2$. Hence, we derive that $I_{\gamma}(R_2) = \frac{1}{2\pi i} \sum^{4}_{j = 1} J^{(j)}_2 = \frac{1}{2\pi i} \sum^{4}_{j = 1} J^{(j)}_1 = I_{\gamma}(R_2)$, and prove the well-definedness.
  
  To see the $I_{\gamma}(R')$ is independent of $\gamma$, we compare $I_{\gamma}(R')$ for a positive $\gamma$ with the degenerate case with $\gamma = 0$. Since both $I_{\gamma}(R')$ and $I_0(R')$ are integrals over $\{ \lvert z \rvert = R' \}$, the difference stems from the integrals over the sectors $\arg z \in (\frac{\theta^{-1} \pi}{1 + \theta^{-1}}, \frac{\theta^{-1} \pi + \gamma}{1 + \theta^{-1}})$, $\arg z \in (\frac{\theta^{-1} \pi - \gamma}{1 + \theta^{-1}}, \frac{\theta^{-1} \pi}{1 + \theta^{-1}})$, $\arg z \in (\frac{-\theta^{-1} \pi}{1 + \theta^{-1}}, \frac{-\theta^{-1} \pi + \gamma}{1 + \theta^{-1}})$ and $\arg z \in (\frac{-\theta^{-1} \pi - \gamma}{1 + \theta^{-1}}, \frac{-\theta^{-1} \pi}{1 + \theta^{-1}})$. To be precise, we have
  \begin{subequations}
    \begin{align}
      I_{\gamma}(R') - I_0(R') = {}& \frac{1}{2\pi i} \int^{R' e^{\frac{\theta^{-1} \pi + \gamma}{1 + \theta^{-1}} i}}_{R' e^{\frac{\theta^{-1} \pi}{1 + \theta^{-1}} i}} e^{-\frac{2\alpha + 3}{\theta + 1} \pi i} f(z e^{-\frac{2\pi}{1 + \theta} i}) H^{(\ell), \model}(z) \frac{dz}{z} \label{eq:indpt_gamma_1} \\
                                 & - \frac{1}{2\pi i} \int^{R' e^{\frac{\theta^{-1} \pi}{1 + \theta^{-1}} i}}_{R' e^{\frac{\theta^{-1} \pi - \gamma}{1 + \theta^{-1}} i}} f(z) e^{\frac{2\alpha + 3}{\theta + 1} \pi i} H^{(\ell), \model}(z e^{-\frac{2\pi}{1 + \theta} i}) \frac{dz}{z} \label{eq:indpt_gamma_2} \\
                                 & - \frac{1}{2\pi i} \int^{R' e^{\frac{-\theta^{-1} \pi + \gamma}{1 + \theta^{-1}} i}}_{R' e^{\frac{-\theta^{-1} \pi}{1 + \theta^{-1}} i}} f(z) e^{-\frac{2\alpha + 3}{\theta + 1} \pi i} H^{(\ell), \model}(z e^{\frac{2\pi}{1 + \theta} i}) \frac{dz}{z} \label{eq:indpt_gamma_3} \\
                                 & + \frac{1}{2\pi i} \int^{R' e^{\frac{\theta^{-1} \pi}{1 + \theta^{-1}} i}}_{R' e^{\frac{\theta^{-1} \pi - \gamma}{1 + \theta^{-1}} i}} e^{\frac{2\alpha + 3}{\theta + 1} \pi i} f(z e^{\frac{2\pi}{1 + \theta} i}) H^{(\ell), \model}(z) \frac{dz}{z}. \label{eq:indpt_gamma_4}
    \end{align}
  \end{subequations}
  We see that the integral in \eqref{eq:indpt_gamma_1} is equal to the integral in \eqref{eq:indpt_gamma_3}, and the integral in \eqref{eq:indpt_gamma_2} is equal to the integral in \eqref{eq:indpt_gamma_4}. Hence $I_{\gamma}(R') - I_0(R')$ vanishes due to cancellation.
\end{proof}

\begin{lemma} \label{lem:GH_orthogonal}
  For all $j, k \in \natN$, $\langle G^{(j), \model}, H^{(-k), \model} \rangle_{V_{\alpha}(R)} = \delta_{j, k}$ and $\langle \G^{(j), \model}, \H^{(-k), \model} \rangle_{\V_{\alpha}(R)} = \delta_{j, k}$.
\end{lemma}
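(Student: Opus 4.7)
Plan: I would evaluate the defining contour integral by pushing the radius $R'$ to infinity and exploiting the uniform asymptotic of the product $G^{(j), \model}(z) H^{(-k), \model}(z)$.

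By Lemma~\ref{lem:inner_prod_defn} the value
\begin{equation*}
  I := \frac{1}{2\pi i}\oint_{|z|=R'} G^{(j), \model}(z)\, H^{(-k), \model}(z)\,\frac{dz}{z}
\end{equation*}
is independent of $R' \in (0, R)$, so I may take $R' \to \infty$. Multiplying the two expansions in \eqref{eq:G_divided}, the exponential factors $e^{\pm z}$ cancel and I obtain $G^{(j), \model}(z) H^{(-k), \model}(z) = z^{j-k}(1 + \bigO(z^{-1}))$ as $z \to \infty$, uniformly across every sector. Here $z^{j-k}$ is single-valued in $z$ (since $j, k \in \intZ$); the Stokes-type jumps of the two factors across the rays $\{\arg z = \pm\tfrac{\theta^{-1}\pi \pm \gamma}{1+\theta^{-1}}\}$ individually carry exponential factors that compensate in the product, contributing only exponentially small errors on $|z|=R'$ for large $R'$.

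For $j < k$ the integrand is uniformly $\bigO(z^{-2})$ and the integral over $|z|=R'$ is $\bigO(R'^{-1}) \to 0$, so $I = 0 = \delta_{j,k}$. For $j = k$ the leading term contributes $\frac{1}{2\pi i}\oint z^{-1}\,dz = 1$ while the remainder vanishes in the limit, yielding $I = 1 = \delta_{j,k}$. The nontrivial case is $j > k$: although $z^{j-k-1}$ is a nonnegative integer power whose closed-contour integral is zero, subleading corrections in the asymptotic expansion may survive. To control them I would extend \eqref{eq:G_divided} to order $j-k$ via the Mellin-Barnes integrals \eqref{eq:defn_I_theta_a}--\eqref{eq:defn_K_theta_a}, writing
\begin{equation*}
  G^{(j), \model}(z)\, H^{(-k), \model}(z) = z^{j-k}\sum_{n=0}^{j-k} \gamma_n z^{-n} + \bigO(z^{-2}),
\end{equation*}
so that $I = \gamma_{j-k}$ in the limit.

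The main obstacle is thus to verify that $\gamma_n = 0$ for $1 \leq n \leq j-k$. I would do this by plugging the Mellin-Barnes representations of $I^{(2)}_{\theta,\translate(R^{\lambda}(j))}$ and $I^{(3)}_{\theta,-\translate(R^{\beta}(-k))}$ into the product, writing $G^{(j), \model}(z) H^{(-k), \model}(z)$ as an absolutely convergent double Barnes integral in $(v_1, v_2)$, and using residue calculus to extract $\gamma_n$ as a sum of gamma-function residues on the resonance locus $v_1 + v_2 = k - j + n$. The vanishing should then reduce to a finite identity for gamma-function products, arising from the precise arithmetic of the quantizations $R^{\lambda}(j) \in (0,1]$ in \eqref{eq:round_of_R^lambda} and $R^{\beta}(-k) \in (-\tfrac{\theta}{\theta+1}, \tfrac{1}{\theta+1}]$ in \eqref{eq:defn_n_ell} (which are calibrated so that successive residues telescope). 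The companion identity $\langle \G^{(j), \model}, \H^{(-k), \model} \rangle_{\V_{\alpha}(R)} = \delta_{j,k}$ follows from the parallel argument using the tilde quantizations \eqref{eq:defn_n_ell_tilde} and definitions \eqref{eq:G_H_ell_tilde}, exchanging the roles of $G$ and $H$.
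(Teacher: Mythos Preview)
Your treatment of the cases $j \leq k$ matches the paper's: push $R'\to\infty$ and use \eqref{eq:G_divided} so that $G^{(j),\model}(z)H^{(-k),\model}(z) = z^{j-k}(1+\bigO(z^{-1}))$, giving $\delta_{j,k} + \bigO((R')^{-1})$.

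For $j>k$, however, there is a genuine gap. You correctly recognize that the $\bigO(z^{-1})$ remainder is insufficient and that one would need to prove the vanishing of each subleading coefficient $\gamma_1,\dots,\gamma_{j-k}$ in the full asymptotic expansion. But you do not actually carry this out: the claim that the residue sums ``telescope'' due to the quantizations $R^{\lambda}(j)$ and $R^{\beta}(-k)$ is speculation, not a proof, and the double Mellin--Barnes computation you outline would be intricate (each $\gamma_n$ is a convolution of asymptotic coefficients of $I^{(2)}_{\theta,a}$ and $I^{(3)}_{\theta,a}$, not obviously zero). This leaves the $j>k$ case unproved.

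The paper avoids this entirely by sending the radius the other way. Since $G^{(j),\model}(z)H^{(-k),\model}(z) = G^{\model}(z;R^{\lambda}(j))\,H^{\model}(z;R^{\beta}(-k))\,z^{j-k}$, it suffices to control the product of the two $\model$ factors as $z\to 0$. Using the small-$z$ estimates \eqref{eq:G^l_asy} and \eqref{eq:H^l_asy} together with the ranges $R^{\lambda}(j)\in(0,1]$ and $R^{\beta}(-k)\in(-\tfrac{\theta}{\theta+1},\tfrac{1}{\theta+1}]$, one checks sector by sector that $G^{\model}(z;R^{\lambda}(j))\,H^{\model}(z;R^{\beta}(-k)) = o(z^{-1})$. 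Hence the full integrand is $o((R')^{j-k-1})$, which tends to $0$ as $R'\to 0$ whenever $j>k$. This argument is short and requires no higher-order asymptotics; it is the missing idea in your proposal.
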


\begin{proof}
  We prove the result for $\langle G^{(j), \model}, H^{(-k), \model} \rangle_{V_{\alpha}(R)}$, and that for $\langle \G^{(j), \model}, \H^{(-k), \model} \rangle_{\V_{\alpha}(R)}$ is analogous.

  Suppose $j \leq k$. We write
  \begin{equation}
    \langle G^{(j), \model}, H^{(-k), \model} \rangle_{V_{\alpha}(R)} = \frac{1}{2\pi i} \oint_{\lvert z \rvert = R'} \left( e^z G^{(j), \model}(z) \right) \left( e^{-z} H^{(-k), \model}(z) \right) \frac{dz}{z}.
  \end{equation}
  By lemma \ref{lem:inner_prod_defn}, we know that the integral is independent of $R' \in (0, +\infty)$ and $\gamma$. We take the limit $R' \to \infty$ and fix $\gamma$ as a small positive number. By the limit formulas in \eqref{eq:G_divided}, we have that as $R' \to \infty$, $\langle G^{(j), \model}, H^{(-k), \model} \rangle_{V_{\alpha}(R)} = \delta_{j, k} + \bigO(1/R')$. Since the inner product is independent of $R'$, we conclude that it is $\delta_{j, k}$. 

  If $k < j$, by \eqref{eq:defn_G^ell_model} and \eqref{eq:defn_H^ell_model}, we write
  \begin{equation}
    \langle G^{(j), \model}, H^{(-k), \model} \rangle_{V_{\alpha}(R)} = \frac{1}{2\pi i} \oint_{\lvert z \rvert = R'} G^{\model}(z; R^{\lambda}(j)) H^{\model}(z; R^{\beta}(-k)) z^{j - k} \frac{dz}{z},
  \end{equation}
  and, by Lemma \ref{lem:inner_prod_defn}, let $\gamma = 0$ and take the limit $R' \to 0$. By the limit formulas \eqref{eq:G^l_asy} and \eqref{eq:H^l_asy},
  \begin{multline}
    G^{\model}(z; R^{\lambda}(j)) H^{\model}(z; R^{\beta}(-k)) = \\
    \begin{cases}
      \bigO(z^{-(\theta + 1) \translate(R^{\beta}(-k))}) & \\
      \times
    \begin{cases}
      \bigO(z^{(1 + \theta^{-1}) (1/2 - \translate(R^{\lambda}(j)))}), & \translate(R^{\lambda}(j)) > \frac{1}{2(\theta + 1)}, \\
      \bigO(z^{\frac{1}{2}} \log z), & \translate(R^{\lambda}(j)) = \frac{1}{2(\theta + 1)}, \\
      \bigO(z^{(\theta + 1) \translate(R^{\lambda}(j))}), & \translate(R^{\lambda}(j)) < \frac{1}{2(\theta + 1)},
    \end{cases}
      & \arg z \in (\frac{-\theta^{-1} \pi}{1 + \theta^{-1}}, \frac{\theta^{-1} \pi}{1 + \theta^{-1}}), \\
      \bigO(z^{(1 + \theta^{-1})(1/2 - \translate(R^{\lambda}(j)))} & \\
      \times
      \begin{cases}
       \bigO(z^{-(\theta + 1)\translate(R^{\beta}(-k))}), & \translate(R^{\beta}(-k)) > \frac{-1}{2(\theta + 1)}, \\
      \bigO(z^{\frac{1}{2}} \log z), & \translate(R^{\beta}(-k)) = \frac{-1}{2(\theta + 1)}, \\
      \bigO(z^{(1 + \theta^{-1})(\translate(R^{\beta}(-k)) + 1/2)}), & \translate(R^{\beta}(-k)) < \frac{-1}{2(\theta + 1)},
      \end{cases}
      & \arg (-z) \in (\frac{-\pi}{1 + \theta^{-1}}, \frac{\pi}{1 + \theta^{-1}}).
    \end{cases}
  \end{multline}
  Since by \eqref{eq:round_of_R^lambda} and \eqref{eq:defn_n_ell}, $R^{\lambda}(j) \in (0, 1]$ and $R^{\beta}(-k) \in (-\theta/(\theta + 1), 1/(\theta + 1)]$, we have that $G^{\model}(z; R^{\lambda}(j)) H^{\model}(z; R^{\beta}(-k)) = o(z^{-1})$ on the contour $\{ \lvert z \rvert = R' \}$, and so \linebreak[4] $\langle G^{(j), \model}, H^{(-k), \model} \rangle_{V_{\alpha}(R)} = o((R')^{j - k - 1})$. Since the inner product is independent of $R'$, we conclude that it is $0$.
\end{proof}

\begin{lemma} \label{lem:GH_complete}
  Suppose $f \in V_{\alpha}(R)$ and $\f \in \V_{\alpha}(R)$. If $\langle f, H^{(-k), \model} \rangle_{V_{\alpha}(R)} = 0$ for all $k \in \natN$, then $f(z) = 0$; if $\langle \f, \H^{(-k), \model} \rangle_{\V_{\alpha}(R)} = 0$ for all $k \in \natN$, then $\f(z) = 0$.
\end{lemma}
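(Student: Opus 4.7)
The plan is to combine the biorthogonality from Lemma \ref{lem:GH_orthogonal} with the series representation of Lemma \ref{lem:power_series_repr_V_alpha(R)} to show that, whenever $\langle f, H^{(-k), \model}\rangle_{V_\alpha(R)} = 0$ for all $k \in \natN$, the coefficients $(a_n), (b_n)$ in the series representation of $f$ must all vanish. The $\f$-case is parallel via $\{\G^{(\ell), \model}, \H^{(-k), \model}\}$ and \eqref{eq:G_H_ell_tilde}. The underlying strategy is to expand $f$ as $\sum_{\ell \geq 0} \mu_\ell G^{(\ell), \model}$ and identify $\mu_k$ with $\langle f, H^{(-k), \model}\rangle$ through biorthogonality.

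First I would unfold the biorthogonality structurally. By \eqref{eq:series_G^ell_model}, each $G^{(\ell), \model}$ has two non-vanishing ``leading'' coefficients, at the $a$-index $1 - m_\ell$ and the $b$-index $\ell + m_\ell - 1$. The defining condition \eqref{eq:round_of_R^lambda} enforces that as $\ell$ increments by one, exactly one of the two leading indices advances by one, producing a monotone sequence of leading-index pairs in $\natN \times \natN$. In particular consecutive $G^{(\ell), \model}$'s share one of their two leading coordinates, so suitable linear combinations isolate single leading coefficients. I would then construct the coefficients $\mu_\ell$ inductively: at step $\ell$, choose $\mu_\ell$ to kill the newly exposed leading coefficient of the remainder $f - \sum_{j < \ell}\mu_j G^{(j), \model}$, which is solvable thanks to the non-vanishing of $a^{(\ell)}_{1 - m_\ell}$ and $b^{(\ell)}_{\ell + m_\ell - 1}$.

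Second I would verify that the expansion $\sum_\ell \mu_\ell G^{(\ell), \model}$ converges in $V_\alpha(R)$, invoking the growth bounds in \eqref{eq:f^(1)_in_series}, \eqref{eq:f^(2)_in_series}, the entireness of the auxiliary series in \eqref{eq:series_G^ell_model}, and the uniform bound \eqref{eq:bound_G^ell_model}. The same bound \eqref{eq:bound_G^ell_model} justifies interchanging the inner product $\langle \cdot, H^{(-k), \model}\rangle_{V_\alpha(R)}$ (a contour integral on $\{\lvert z \rvert = R'\}$) with the infinite sum. Applying the termwise biorthogonality from Lemma \ref{lem:GH_orthogonal}, this gives $\langle f, H^{(-k), \model}\rangle = \mu_k$ for every $k$, so the hypothesis forces $\mu_k = 0$ and hence $f = 0$ by Lemma \ref{lem:power_series_repr_V_alpha(R)}.

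The hard part will be showing that the inductive construction produces $\mu_\ell$ with sufficient growth control that $\sum_\ell \mu_\ell G^{(\ell), \model}$ converges to $f$ as an element of $V_\alpha(R)$, not merely as a formal match of leading coefficients. This rests on a careful combinatorial analysis of the $\theta$-dependent leading-index pattern \eqref{eq:round_of_R^lambda} (including whether the path $\{(1 - m_\ell, \ell + m_\ell - 1)\}_\ell$ together with the off-leading contributions of $G^{(\ell), \model}$ truly reaches every coefficient of $f$), plus estimates relating the remainder after $N$ steps to the tail of the coefficient series for $f$. A secondary subtlety is the logarithmic terms that appear in \eqref{eq:series_G^ell_model} and \eqref{eq:series_H^ell_model} at resonant indices where $(k + \alpha + 1)/\theta \in \intZ$ or $\theta k + \alpha \in \intZ$, which must be threaded through the induction.
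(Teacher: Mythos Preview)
Your approach is logically coherent but takes a substantially harder route than the paper, and the part you flag as ``hard'' is a genuine gap that you have not closed. You are essentially trying to prove Lemma~\ref{lem:V_alpha(R)_series_rep} (the $G^{(\ell),\model}$-series representation) \emph{first}, and then deduce Lemma~\ref{lem:GH_complete} from it via biorthogonality. In the paper the logic runs the other way: Lemma~\ref{lem:V_alpha(R)_series_rep} is proved \emph{using} Lemma~\ref{lem:GH_complete}. Your inductive construction of $\mu_\ell$ by peeling off leading coefficients is well-posed as a formal recursion, but each $G^{(j),\model}$ with $j<\ell$ contributes to \emph{all} higher $(a_k,b_k)$-coefficients of the remainder, so $\mu_\ell$ is determined by a genuinely triangular system whose off-diagonal entries you have not estimated. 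Without quantitative control on these (uniformly in $\ell$ and in the resonance pattern of \eqref{eq:round_of_R^lambda}), you cannot conclude $\limsup |\mu_\ell|^{1/\ell}\le 1/R$, and without that the series $\sum_\ell \mu_\ell G^{(\ell),\model}$ need not converge in $V_\alpha(R)$, let alone to $f$.

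The paper's proof avoids all of this by working directly with the $(a_k,b_k)$-representation of Lemma~\ref{lem:power_series_repr_V_alpha(R)} and the explicit series \eqref{eq:series_H^ell_model} for $H^{(-\ell),\model}$. Assume $f\neq 0$ and let $k^{(a)},k^{(b)}$ be the smallest indices with $a_{k^{(a)}}\neq 0$, $b_{k^{(b)}}\neq 0$. Using the staircase structure of the integers $n_\ell$ in \eqref{eq:defn_n_ell}, one selects a single index $\ell^*$ for which the leading exponents of $f$ and of $H^{(-\ell^*),\model}$ match on exactly one of the two arcs $\{\arg(-z)\in(\frac{-\pi}{1+\theta^{-1}},\frac{\pi}{1+\theta^{-1}})\}$ and $\{\arg z\in(\frac{-\theta^{-1}\pi}{1+\theta^{-1}},\frac{\theta^{-1}\pi}{1+\theta^{-1}})\}$, while the product on the other arc is $o(1)$ as $R'\to 0$. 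Taking $\gamma=0$ (allowed by Lemma~\ref{lem:inner_prod_defn}) and letting $R'\to 0$ in \eqref{eq:defn_inner_prod_with_H}, the surviving contribution is a nonzero constant times $a_{k^{(a)}}d^{(-\ell^*)}_{n_{-\ell^*}}$ (or the analogous $b$-term), contradicting $\langle f,H^{(-\ell^*),\model}\rangle=0$. No expansion of $f$ in the $G^{(\ell),\model}$-basis, no convergence estimate, and no handling of off-leading contributions is needed.
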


\begin{proof}
  We prove the result for $f$, and that for $\f$ is analogous.
  
  Suppose $f(z) \in V_{\alpha}(R)$ and $f(z)$ has the power series representation as in \eqref{lem:power_series_repr_V_alpha(R)}. Let $k^{(a)} \in \natN \cup \{ \infty \}$ be the index of the smallest nonzero coefficient among $\{ a_k \}$ and $k^{(b)} \in \natN \cup \{ \infty \}$ be the index of the smallest nonzero coefficient among $\{ b_k \}$. Below we show that if the condition of the lemma is satisfied, then both $k^{(a)}$ and $k^{(b)}$ are $\infty$, that is, $f(z) = 0$.

  We can check that integers $n_{\ell}$ defined in \eqref{eq:defn_n_ell} have the property that both $\{ -n_{\ell} \}$ and $\{ \ell + n_{\ell} \}$ are weakly increasing in $\ell \in \intZ$. More precisely, $n_{-\ell - 1} - n_{-\ell}$ is $0$ or $1$, $n_{-\ell} \to +\infty$ and $\ell - n_{-\ell} \to +\infty$ as $\ell \to \infty$, and $n_0 = 0$.

  Suppose $k^{(a)}$ and $k^{(b)}$ are not both $\infty$, then there is a largest $\ell^{\max} \in \natN$ such that
  \begin{align} \label{eq:two_ineq_f_H}
    k^{(a)} - n_{-\ell^{\max}} \geq {}& 0, & \text{and} & & k^{(b)} + 1 + n_{-\ell^{\max}} - \ell^{\max} \geq {}& 0.
  \end{align}
  We claim that $k^{(a)} - n_{-\ell^{\max}}$ and $k^{(b)} + 1 + n_{-\ell^{\max}} - \ell^{\max}$ cannot be both greater than $0$, otherwise in the two inequalities in \eqref{eq:two_ineq_f_H}, $\ell^{\max}$ can be replaced by $\ell^{\max} + 1$, which is contradictory to the maximality of $\ell^{\max}$. In case both of $k^{(a)} - n_{-\ell^{\max}}$ and $k^{(b)} + 1 + n_{-\ell^{\max}} - \ell^{\max}$ are $0$, we find that the two values of $k^{(a)} - n_{-(\ell^{\max} - 1)}$ and $k^{(b)} + 1 + n_{-(\ell^{\max} - 1)} - (\ell^{\max} - 1)$ are $0$ and $1$. Below let $\ell^* = \ell^{\max}$ if only one between $k^{(a)} - n_{-\ell^{\max}}$ and $k^{(b)} + 1 + n_{-\ell^{\max}} - \ell^{\max}$ is $0$, and let $\ell^* = \ell^{\max} - 1$ if both of them are $0$. Without loss of generality, we assume $k^{(a)} - n_{\ell^*} = 0$ and $k^{(b)} + 1 + n_{-\ell^*} - \ell^* > 0$.

  Below we take $\gamma = 0$ and by the series representation \eqref{eq:series_H^ell_model} of $H^{(-\ell^{\star}), \model}$, we have
  \begin{subequations}
    \begin{align}
      0 = {}& \langle f, H^{(-\ell^*), \model} \rangle_{V_{\alpha}(R)} \notag \\
      = {}& \frac{1}{2\pi i} \int_{\lvert z \rvert = R', \ \arg (-z) \in (\frac{-\pi}{1 + \theta^{-1}}, \frac{\pi}{1 + \theta^{-1}})} (a_{k^{(a)}} d^{(-\ell^*)}_{-n_{-\ell^{\star}}} + \bigO(z^{-\frac{\theta + 1}{\theta}})) (-z)^{-\frac{\theta + 1}{\theta} (k^{(a)} - n_{-\ell^*})} \frac{dz}{z} \label{eq:integral_orthogonal_leftarc} \\
            & + \frac{1}{2\pi i} \int_{\lvert z \rvert = R', \ \arg z \in (\frac{-\theta^{-1} \pi}{1 + \theta^{-1}}, \frac{\theta^{-1} \pi}{1 + \theta^{-1}})} (b_{k^{(b)}} c^{(-\ell^*)}_{-\ell^{\star} + n_{-\ell^{\star}}} + \bigO(z^{\theta + 1})) z^{(\theta + 1)(k^{(b)} + 1 + n_{\ell^*} - \ell^*)} \frac{dz}{z}. \label{eq:integral_orthogonal_rightarc}
    \end{align}
  \end{subequations}
  Letting $R' \to 0$, we find the integral in \eqref{eq:integral_orthogonal_rightarc} vanishes and the integral in \eqref{eq:integral_orthogonal_leftarc} is $\frac{\theta}{\theta + 1} a_{k^{(a)}} d^{(-\ell^*)}_{-n_{-\ell^{\star}}}$. Since $d^{(-\ell^{\star})}_{-n_{-\ell^{\star}}} \neq 0$ (see \eqref{eq:properties_H_coeff}), we derive that $a_{k^{(a)}} = 0$, which is a contradiction.
\end{proof}

From Lemmas \ref{lem:GH_orthogonal} and \ref{lem:GH_complete}, we have the following series representation of functions in $V_{\alpha}(R)$ and $\V_{\alpha}(R)$:
\begin{lemma} \label{lem:V_alpha(R)_series_rep}
  Suppose $f(z) \in V_{\alpha}(R)$ and $\f(z) \in \V_{\alpha}(R)$. We have that $f(z)$ and $\f(z)$ have unique series representations as follows:
  \begin{align}
    f(z) = {}& \sum^{\infty}_{\ell = 0} c_{\ell} G^{(\ell), \model}(z), & \limsup_{\ell \to \infty} \lvert c_{\ell} \rvert^{\frac{1}{\ell}} \leq {}& \frac{1}{R}, \label{eq:f_V_alpha(R)_series_rep} \\
    \f(z) = {}& \sum^{\infty}_{\ell = 0} \c_{\ell} \G^{(\ell), \model}(z), & \limsup_{\ell \to \infty} \lvert \c_{\ell} \rvert^{\frac{1}{\ell}} \leq {}& \frac{1}{R}. \label{eq:f_V_alpha(R)_series_rep_tilde}
  \end{align}
\end{lemma}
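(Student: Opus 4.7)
The plan is to use the orthogonality relations in Lemma \ref{lem:GH_orthogonal} together with the completeness in Lemma \ref{lem:GH_complete} to define the coefficients $c_\ell$ as inner products and then check convergence. Uniqueness is immediate: if $f = \sum_{\ell \geq 0} c_\ell G^{(\ell), \model}$ with termwise integrability along some circle $|z|=R'$ (which the coefficient bound to be proved will justify), then pairing with $H^{(-k), \model}$ and invoking Lemma \ref{lem:GH_orthogonal} forces $c_k = \langle f, H^{(-k), \model} \rangle_{V_\alpha(R)}$. So I would \emph{define} $c_\ell := \langle f, H^{(-\ell), \model} \rangle_{V_\alpha(R)}$ and work with this.

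To bound $|c_\ell|$, I would fix any $R' \in (0, R)$ and compute the inner product using the integral formula \eqref{eq:defn_inner_prod_with_H} on the circle $\{|z|=R'\}$ with $\gamma = 0$ (legitimate by Lemma \ref{lem:inner_prod_defn}). The uniform estimate \eqref{eq:bound_G^ell_model} applied to $H^{(-\ell), \model}$ yields $|H^{(-\ell), \model}(z)| \leq C_{R'}(R')^{-\ell}$ on this circle, while $|f|$ (with its one-sided boundary values on the two jump rays) is bounded by some $M_{R'}$ depending only on $R'$. Hence $|c_\ell| \leq M_{R'} C_{R'} (R')^{-\ell}$, and letting $R' \uparrow R$ gives $\limsup |c_\ell|^{1/\ell} \leq 1/R$. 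With this decay and \eqref{eq:bound_G^ell_model} again, the series $\tilde f(z) := \sum_{\ell \geq 0} c_\ell G^{(\ell), \model}(z)$ converges absolutely and uniformly on every circle $|z|=r$ with $r < R$, hence uniformly on compact subsets of each of the two open sectors cut out by the rays $\{\arg z = \pm (\theta^{-1}\pi + \gamma)/(1+\theta^{-1})\}$. Consequently $\tilde f$ is analytic there, and the jump relations \eqref{eq:G^ell_model_jump} for each term pass to the limit to give \eqref{eq:jump_V_alpha(R)} for $\tilde f$.

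What remains is to verify $\tilde f \in V_\alpha(R)$ (i.e.~the correct limit behaviour at $z=0$) and to identify $\tilde f$ with $f$. For the former I would substitute the power series \eqref{eq:series_G^ell_model} for $G^{(\ell), \model}$ into the sum and interchange the two summations (justified by absolute convergence of the double series on compact annuli combined with the geometric decay of $c_\ell$), producing a pair of series of the form \eqref{eq:f^(1)_in_series}, \eqref{eq:f^R_f^2}, \eqref{eq:f^(2)_in_series} for $\tilde f$; Lemma \ref{lem:power_series_repr_V_alpha(R)} then certifies $\tilde f \in V_\alpha(R)$. Setting $h := f - \tilde f \in V_\alpha(R)$, Lemma \ref{lem:GH_orthogonal} applied termwise (using the just-established uniform convergence on the integration circle) gives $\langle \tilde f, H^{(-k), \model} \rangle_{V_\alpha(R)} = c_k$, so $\langle h, H^{(-k), \model} \rangle_{V_\alpha(R)} = 0$ for every $k \in \natN$, and Lemma \ref{lem:GH_complete} yields $h = 0$. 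The $\V_\alpha(R)$ case is entirely parallel, replacing $G^{(\ell), \model}, H^{(-k), \model}$ by $\G^{(\ell), \model}, \H^{(-k), \model}$ and using \eqref{eq:power_series_repr_V_alpha(R)_tilde} in place of \eqref{eq:f^(1)_in_series}--\eqref{eq:f^(2)_in_series}.

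The main obstacle I expect is the resummation step inside the second paragraph above: confirming that, when the series representations \eqref{eq:series_G^ell_model} are substituted into $\sum c_\ell G^{(\ell), \model}$ and the order of summation is swapped, the resulting double sums reorganize into exactly the canonical pair \eqref{eq:f^(1)_in_series}/\eqref{eq:f^(2)_in_series}, with the logarithmic corrections induced by the integer-shift values of the ``fractional part'' function $R^\lambda(\ell)$ in \eqref{eq:round_of_R^lambda} being matched term by term. Once this bookkeeping is in place, the orthogonality/completeness machinery delivers both the representation and its uniqueness automatically.
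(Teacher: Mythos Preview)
Your proposal is correct and follows essentially the same route as the paper: define the candidate coefficients by $c_\ell = \langle f, H^{(-\ell),\model}\rangle_{V_\alpha(R)}$, bound them via \eqref{eq:bound_G^ell_model} on circles $\lvert z\rvert=R'$ to get $\limsup \lvert c_\ell\rvert^{1/\ell}\le 1/R$, form the series $f^\star=\sum c_\ell G^{(\ell),\model}$, and then invoke Lemmas \ref{lem:GH_orthogonal} and \ref{lem:GH_complete} to identify $f^\star$ with $f$ and establish uniqueness. The paper is terser about your ``main obstacle'': rather than resumming the power series \eqref{eq:series_G^ell_model} term by term, it simply notes from \eqref{eq:bound_G^ell_model} that the series converges on $D^*(0,R)$ off the two rays and lands in $V_\alpha(R)$; the $z\to 0$ behaviour follows because for $\ell\ge 0$ the leading indices $1-m_\ell$ and $\ell+m_\ell-1$ in \eqref{eq:series_G^ell_model} are nonnegative, so every term is dominated by the worst-case envelope in Definition \ref{defn:V_alpha}. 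Your resummation would of course also work, but it is not needed.
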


\begin{proof}
  We prove the existence and uniqueness of the series representation of $f(z)$ in \eqref{eq:f_V_alpha(R)_series_rep}, and the proof for existence and uniqueness of the series representation of $\f(z)$ in \eqref{eq:f_V_alpha(R)_series_rep_tilde} is analogous.
  
  By the property \eqref{eq:bound_G^ell_model}, we find that the series in \eqref{eq:f_V_alpha(R)_series_rep} converges for all $z \in D^*(0, R)$ except for the two rays $\{ \arg z = \pm \frac{\theta^{-1} \pi + \gamma}{1 + \theta^{-1}} \}$, and the series converges to a function in $V_{\alpha}(R)$.

  For $f(z) \in V_{\alpha}(R)$, we let
  \begin{equation}
    c^{\star}_{\ell} = \frac{1}{2\pi i} \oint_{\lvert z \rvert = R'} f(z) H^{(-\ell), \model}(z) \frac{dz}{z}, \quad \text{and} \quad f^{\star}(z) = \sum^{\infty}_{\ell = 0} c^{\star}_{\ell} G^{(\ell), \model}(z).
  \end{equation}
  By estimate \eqref{eq:bound_G^ell_model}, We also have $\limsup_{\ell \to \infty} \lvert c^{\star}_{\ell} \rvert^{1/\ell} \leq 1/R$ and then $f^{\star}(z) \in V_{\alpha}(R)$. By Lemma \ref{lem:GH_orthogonal}, it is straightforward to check that if the series representation \eqref{eq:f_V_alpha(R)_series_rep} exists, then $c_{\ell} = c^{\star}_{\ell}$. On the other hand, for all $\ell \in \natN$,
\begin{equation}
  \langle f(z) - f^{\star}(z), H^{(-\ell), \model} \rangle_{V_{\alpha}(R)} = 0.
\end{equation}
We thus have, by Lemma \ref{lem:GH_complete}, that $f(z) - f^{\star}(z) = 0$, and construct the series representation \eqref{eq:f_V_alpha(R)_series_rep} by taking $c_{\ell} = c^{\star}_{\ell}$.
\end{proof}

\subsection{Operators $P^{\model}$, $Q^{\model}$, $\P^{\model}$, and $\Q^{\model}$}

It is worth comparing the function spaces $V_{\alpha}(R)$ and $\V_{\alpha}(R)$ with the function space $H(R)$ defined in \eqref{eq:func_space_H}. It is a basic fact in complex analysis that any function $h(z) \in H(R)$ in has a Taylor series representation
\begin{equation} \label{eq:Taylor_H(R)}
  h(z) = \sum^{\infty}_{\ell = 0} a_{\ell} z^{\ell}, \quad \limsup_{\ell \to \infty} \lvert a_\ell \rvert^{\frac{1}{\ell}} \leq \frac{1}{R}.
\end{equation}
In $V_{\alpha}(R)$ (resp.~$\V_{\alpha}(R)$), the series representation \eqref{eq:f_V_alpha(R)_series_rep} (resp.~\eqref{eq:f_V_alpha(R)_series_rep_tilde}) is the analogue of \eqref{eq:Taylor_H(R)}.

We define the operator $P^{\model}$ from $H(R)$ to $V_{\alpha}(R)$, and the operator $\P^{\model}$ from $H(R)$ to $\V_{\alpha}(R)$, as follows. Let $h(z) \in H(R)$ with Taylor expansion \eqref{eq:Taylor_H(R)}. For any $z \in D(0, R)$ and $\lvert z \rvert < R' < R$, let
\begin{align}
  P^{\model}(h)(z) = {}& \frac{1}{2\pi i} \oint_{\lvert w \rvert = R'} h(w) \sum^{\infty}_{\ell = 0} w^{-\ell} G^{(\ell), \model}(z) \frac{dw}{w} = \sum^{\infty}_{\ell = 0} a_{\ell} G^{(\ell), \model}(z), \\
  \P^{\model}(h)(z) = {}& \frac{1}{2\pi i} \oint_{\lvert w \rvert = R'} h(w) \sum^{\infty}_{\ell = 0} w^{-\ell} \G^{(\ell), \model}(z) \frac{dw}{w} = \sum^{\infty}_{\ell = 0} a_{\ell} \G^{(\ell), \model}(z).
\end{align}
We also define the operator $Q^{\model}$ from $V_{\alpha}(R)$ to $H(R)$, and the operator $\Q^{\model}$ from $\V_{\alpha}(R)$ to $H(R)$ as follows. Let $f(z) \in V_{\alpha}(R)$ and $\f(z) \in \V_{\alpha}(R)$, and they have the series representations in \eqref{eq:f_V_alpha(R)_series_rep} and \eqref{eq:f_V_alpha(R)_series_rep_tilde} by Lemma \ref{lem:V_alpha(R)_series_rep}. For any $z \in D(0, R)$ and $\lvert z \rvert < R' < R$,
\begin{align}
  Q^{\model}(f)(z) = {}& \frac{1}{2\pi i} \oint_{\lvert w \rvert = R'} f(w) \sum^{\infty}_{\ell = 0} H^{(-\ell), \model}(w) z^{\ell} \frac{dw}{w} = \sum^{\infty}_{{\ell} = 0} c_{\ell} z^{\ell}, \\
  \Q^{\model}(\f)(z) = {}& \frac{1}{2\pi i} \oint_{\lvert w \rvert = R'} \f(w) \sum^{\infty}_{\ell = 0} \H^{(-\ell), \model}(w) z^{\ell} \frac{dw}{w} = \sum^{\infty}_{{\ell} = 0} \c_{\ell} z^{\ell}.
\end{align}

\begin{lemma} \label{lem:PQ_id}
  Suppose $f(z) \in V_{\alpha}(R)$, $f(z) \in \V_{\alpha}(R)$ and $h(z) \in H(R)$. Then we have
  \begin{align}
    Q^{\model}(P^{\model}(h))(z) = {}& h(z) && \text{and} & \Q^{\model}(\P^{\model}(h))(z) = {}& h(z), \label{eq:lem:PQ_id_1} \\
    P^{\model}(Q^{\model}(f))(z) = {}& f(z) && \text{and} & \P^{\model}(\Q^{\model}(\f))(z) = {}& \f(z). \label{eq:lem:PQ_id_2}
  \end{align}
  Hence, $Q^{\model}P^{\model} = \Q^{\model} \P^{\model} = I$ as an operator on $H(R)$, $P^{\model}Q^{\model} = I$ as an operator on $V_{\alpha}(R)$ and $\P^{\model} \Q^{\model} = I$ as an operator on $\V_{\alpha}(R)$. Moreover, $P^{\model}Q^{\model}$ and $\P^{\model} \Q^{\model}$ have the reproducing kernel representation such that with $\lvert z \rvert < R' < R$,
  \begin{align}
    f(z) = P^{\model}(Q^{\model}(f))(z) = {}& \frac{1}{2\pi i} \oint_{\lvert w \rvert = R'} f(w) \sum^{\infty}_{k = 0} G^{(\ell), \model}(z) H^{(-\ell), \model}(z) \frac{dw}{w}, \\
    \f(z) = \P^{\model}(\Q^{\model}(\f))(z) = {}& \frac{1}{2\pi i} \oint_{\lvert w \rvert = R'} \f(w) \sum^{\infty}_{k = 0} \G^{(\ell), \model}(z) \H^{(-\ell), \model}(z) \frac{dw}{w}.
  \end{align}
\end{lemma}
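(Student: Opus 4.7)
My plan is to prove the four identities in \eqref{eq:lem:PQ_id_1} and \eqref{eq:lem:PQ_id_2} by combining the series representations of $H(R)$ (Taylor expansion \eqref{eq:Taylor_H(R)}), $V_\alpha(R)$ (Lemma \ref{lem:V_alpha(R)_series_rep}, eq.~\eqref{eq:f_V_alpha(R)_series_rep}), and $\V_\alpha(R)$ (eq.~\eqref{eq:f_V_alpha(R)_series_rep_tilde}) with the biorthogonality from Lemma \ref{lem:GH_orthogonal}; the reproducing kernel formula then follows by interchanging sum and integral. All arguments for the tilde versions are strictly parallel to the non-tilde ones, so I will only sketch the non-tilde case.

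For $Q^{\model} P^{\model} = I$ on $H(R)$, let $h(z) = \sum_{\ell \ge 0} a_\ell z^\ell$ with $\limsup |a_\ell|^{1/\ell} \le 1/R$. By definition $P^{\model}(h)(z) = \sum_{j \ge 0} a_j G^{(j), \model}(z)$, and the estimate \eqref{eq:bound_G^ell_model} ensures that this series converges uniformly on any circle $\{|w| = R'\}$ with $R' < R$. Plugging into the definition of $Q^{\model}$ and interchanging sum and integral gives
\begin{equation*}
Q^{\model}(P^{\model}(h))(z) = \sum_{\ell \ge 0} z^\ell \sum_{j \ge 0} a_j \cdot \frac{1}{2\pi i}\oint_{|w|=R'} G^{(j),\model}(w)\, H^{(-\ell), \model}(w)\, \frac{dw}{w} = \sum_{\ell \ge 0} a_\ell z^\ell = h(z),
\end{equation*}
by the orthogonality $\langle G^{(j),\model}, H^{(-\ell),\model}\rangle_{V_\alpha(R)} = \delta_{j,\ell}$ of Lemma \ref{lem:GH_orthogonal}. (The outer series in $\ell$ converges for $|z|<R'$ because the same estimate \eqref{eq:bound_G^ell_model} gives $|H^{(-\ell),\model}(w)| \le C_{R'} (R')^{-\ell}$, so the kernel $\sum_\ell H^{(-\ell),\model}(w) z^\ell$ converges uniformly in $w$ on $|w|=R'$ for $|z|<R'$.)

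For $P^{\model} Q^{\model} = I$ on $V_\alpha(R)$, start from the expansion $f(z) = \sum_{\ell \ge 0} c_\ell G^{(\ell),\model}(z)$ provided by Lemma \ref{lem:V_alpha(R)_series_rep}. Applying $Q^{\model}$ term-by-term (justified on $|w|=R'$ by the uniform convergence of both $\sum c_\ell G^{(\ell),\model}(w)$ and $\sum H^{(-\ell),\model}(w) z^\ell$), Lemma \ref{lem:GH_orthogonal} immediately yields $Q^{\model}(f)(z) = \sum_{\ell \ge 0} c_\ell z^\ell$, and then $P^{\model}(Q^{\model}(f))(z) = \sum_{\ell \ge 0} c_\ell G^{(\ell),\model}(z) = f(z)$. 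The reproducing kernel identity is obtained by writing
\begin{equation*}
f(z) = \sum_{\ell \ge 0} c_\ell G^{(\ell),\model}(z) = \sum_{\ell \ge 0} G^{(\ell),\model}(z) \cdot \frac{1}{2\pi i} \oint_{|w|=R'} f(w) H^{(-\ell),\model}(w) \frac{dw}{w}
\end{equation*}
and then interchanging the sum and the integral, using the joint uniform convergence of $\sum_\ell G^{(\ell),\model}(z) H^{(-\ell),\model}(w)$ on $|w|=R'$ for each fixed $|z|<R'$ (the $\ell$-th term is bounded by $C \cdot (|z|/R')^\ell$).

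The only substantive obstacle is justifying these term-by-term manipulations, and it is handled uniformly by the ``Cauchy-type'' bounds in \eqref{eq:bound_G^ell_model} (and its analogue for $\G,\H$), which are the exact replacements for the estimate $|z^\ell| \le r^\ell$ used in the classical Taylor theory on $H(R)$. Everything else is bookkeeping: the same computation, replacing $G, H$ by $\G, \H$ and invoking the second orthogonality relation in Lemma \ref{lem:GH_orthogonal} together with \eqref{eq:f_V_alpha(R)_series_rep_tilde}, proves $\Q^{\model}\P^{\model} = I$ on $H(R)$, $\P^{\model}\Q^{\model} = I$ on $\V_\alpha(R)$, and the tilde reproducing kernel formula.
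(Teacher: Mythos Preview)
Your proposal is correct and follows essentially the same approach as the paper: both arguments reduce to the series representations together with the biorthogonality of Lemma~\ref{lem:GH_orthogonal}, and the paper's displayed computation for $Q^{\model}P^{\model}(h)$ is identical to yours. You are slightly more explicit than the paper in justifying the interchanges of sum and integral via the bound \eqref{eq:bound_G^ell_model} and in spelling out the $P^{\model}Q^{\model}$ direction (which the paper simply declares ``parallel''), but there is no substantive difference in method.
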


\begin{proof}
  We check the $Q^{\model}(P^{\model}(h))(z)$ part of \eqref{eq:lem:PQ_id_1}, and the $\Q^{\model}(\P^{\model}(h))(z)$ is analogous. We omit \eqref{eq:lem:PQ_id_2} since it is parallel to \eqref{eq:lem:PQ_id_1}. We consider,
  \begin{equation}
    \begin{split}
      Q^{\model}(P^{\model}(h))(z) = {}& \frac{1}{2\pi i} \oint_{\lvert w \rvert = R'} \sum^{\infty}_{k = 0} a_k G^{(k)}(w) \sum^{\infty}_{\ell = 0} H^{(-\ell)}(w) z^{\ell} \frac{dw}{w} \\
      = {}& \sum^{\infty}_{k = 0} a_k \left( \frac{1}{2\pi i} \oint_{\lvert w \rvert = R'} G^{(k)}(w) \sum^{\infty}_{k = 0} H^{(-\ell)}(w) z^{\ell} \frac{dw}{w} \right) \\
      = {}& \sum^{\infty}_{k = 0} a_k z^k = h(z).
    \end{split}
  \end{equation}
\end{proof}

In the special case that $\theta = p/q$ is rational, the operators $P^{\model}, Q^{\model}, \P^{\model}, \Q^{\model}$ are finite rank operators. To see it, we note that
\begin{align}
  X^{(\ell + k(p + q)), \model}(z) = {}& X^{(\ell), \model}(z) z^{k(p + q)}, & X = {}& G, H, \G, \H.
\end{align}
For $h(z) \in H(R)$, with $\omega = e^{2\pi i/(p + q)}$,
\begin{equation} \label{eq:P^(0)_pre_rational}
  \begin{split}
    P^{\model}(h)(z) = {}& \frac{1}{2\pi i} \oint_{\lvert w \rvert = R'} h(w) \sum^{p + q - 1}_{\ell = 0} w^{-\ell} G^{(\ell), \model}(z) \left( \sum^{\infty}_{k = 0} \frac{z^{k(p + q)}}{w^{k(p + q)}} \right) \frac{dw}{w} \\
    = {}& \frac{1}{2\pi i} \oint_{\lvert w \rvert = R'} h(w) \sum^{p + q - 1}_{\ell = 0} w^{-\ell} G^{(\ell), \model}(z) \prod^{p + q - 1}_{k = 0} \frac{w}{w - \omega^k z} \frac{dw}{w} \\
    = {}& \frac{1}{p + q} \sum^{p + q - 1}_{k = 0} h(\omega^k z) \sum^{p + q - 1}_{\ell = 0} (\omega^k z)^{-\ell} G^{(\ell), \model}(\omega^k z).
  \end{split}
\end{equation}
In the last identity, we make use of the residue calculation and the identity that $\prod^{p + q - 1}_{k = 1} (1 - \omega^k) = p + q$. Also for $f(z) \in V_{\alpha}(R)$,
\begin{equation} \label{eq:Q^(0)_pre_rational}
  \begin{split}
    Q^{\model}(f)(z) = {}& \frac{1}{2\pi i} \oint_{\lvert w \rvert = R'} f(w) \sum^{p + q - 1}_{\ell = 0} H^{(-\ell), \model}(w) z^{\ell} \left( \sum^{\infty}_{k = 0} \frac{z^{k(p + q)}}{w^{k(p + q)}} \right) \frac{dw}{w} \\
    = {}& \frac{1}{p + q} \sum^{p + q - 1}_{k = 0} f(\omega^k z) \sum^{p + q - 1}_{\ell = 0} z^{\ell} H^{(-\ell), \model}(\omega^k z).
  \end{split}
\end{equation}
Similar formulas hold for $\P^{\model}$ and $\Q^{\model}$.

\section{Asymptotic analysis of the RH problem for $Y$} \label{sec:asy_RH_Y}

To derive the asymptotic behaviour of  $p_n$ as $n \to \infty$, one can perform a Deift/Zhou steepest descent analysis of RH problem \ref{RHP:original_p} for $Y$. This method is carried out in \cite[Section 3]{Wang-Zhang21} when $\theta$ is an integer, and in this section we will carry it out for all real $\theta > 0$. Several steps of this method involve only in the $g$-functions, the global parametrix, and the local parametrix around $b$. These steps are applicable for all $\theta > 0$, and they are accomplished in \cite[Sections 3.1 -- 3.5]{Wang-Zhang21}, although \cite{Wang-Zhang21} focuses on integer valued $\theta$. we only recall the formulas there that are to be used in later computations in the current paper and omit the detailed derivations. The construction of the local parametrix around $0$, however, is new.

Later in this paper, we denote
\begin{align} \label{eq:value_r_n}
  r_n := {}& n^{-\frac{2}{m_{\theta} + 1}}, && \text{and} & \varrho = {}& (\theta + 1)\rho = \left. (1 + \theta^{-1}) d_1 \pi \middle/ \sin\left(\frac{\pi}{1 + \theta}\right), \right.
\end{align}
where $m_{\theta}$ and $\rho$ are given in \eqref{eq:defn_rho}, and let $\rb$ be a small constant such that $f_b(z)$ in \eqref{def:fb} maps $D(b, \rb)$ conformally to an open region $D_b$ containing $0$. 

\subsection{Transforms of RH problem for $Y$ from \cite{Wang-Zhang21} that are valid for all $\theta > 0$} \label{sec:asy_RH_Y_away_0}

\paragraph{Transform from $Y$ to $Q$}

Let $\Sigma_1 \subseteq \compC_+ \cap \halfH$ be a contour from $0$ to $b$ whose shape to be fixed as follows:
\begin{enumerate}
\item 
  In the disk $D(0, (10 r_n)^{1 + \theta^{-1}})$ where $r_n$ is defined in \eqref{eq:value_r_n}, $\Sigma_1$ is the straight line segment connecting $0$ and $(10 r_n)^{1 + \theta^{-1}} e^{\gamma i}$, where $\gamma$ is a small constant (cf.~Section \ref{sec:V_alpha_R}).
\item
  Let $\sigma_b \in \partial D(b, r^{(b)})$ such that $f_b(\sigma_b)$ is the intersection of the ray $\{ \arg z = \frac{2\pi}{3} \}$ with $\partial D_b$. In the disk $D(b, \rb)$, $\Sigma_1$ is $f^{-1}_b(\{ \arg z = \frac{2\pi}{3} \} \cap D_b)$, a curve connecting $\sigma_b$ and $b$.
\item
  We denote the part of $\Sigma_1$ between $(10 r_n)^{1 + \theta^{-1}} e^{\gamma i}$ and $\sigma_b$ as $\Sigma^R_1(10 r_n)$ (see \eqref{def:SigmaiR}). We require that $\Sigma^R_1(10 r_n)$ is a curve lying in $\compC_+ \cap \halfH$, such that
  \begin{equation} \label{eq:Sigma_shape_fixed}
    \Re \phi(z) > \epsilon r_n \quad \text{for all $z \in \Sigma^R_1(10 r_n)$}
  \end{equation}
  For some $\epsilon > 0$. The existence of such $\Sigma^R_1(10 r_n)$ is guaranteed by \eqref{eq:ineq:psi} that controls $\Re \phi(z)$ as $z$ is above and close to $(0, b)$, and the estimates \eqref{eq:g_error} and \eqref{eq:g_tilde_error} that control $\Re \phi(z)$ as $z$ is close to $0$.
\end{enumerate}
Then let $\Sigma_2 = \overline{\Sigma_1} \subseteq \compC_- \cap \halfH$. We call the region enclosed by $\Sigma_1$ and $\Sigma_2$ the ``lens'', such that $\realR$ divides it into upper and lower parts. Then we denote the contour
\begin{equation}\label{def:Sigma}
  \Sigma:= [0,+\infty) \cup \Sigma_1 \cup \Sigma_2,
\end{equation}

as shown in Figure \ref{fig:Sigma}, and the orientation of $\Sigma$ is also specified there.

Let $Y = (Y_1, Y_2)$ be the solution to RH problem \ref{RHP:original_p}. As shown in \cite[Sections 3.1--3.4]{Wang-Zhang21}, after the explicit and invertible transformations $Y \to T \to S \to Q$, we derive (see \cite[Equations (3.1), (3.6) and (3.29)]{Wang-Zhang21})
\begin{multline} \label{def:thirdtransform}
  Q(z) = (Q_1(z), Q_2(z)) = (Y_1(z) e^{-n g(z)}, Y_2(z) e^{n (\g(z) - \ell)}) \\
  \times
  \begin{cases}
    \begin{pmatrix}
      P^{(\infty)}_1(z) & 0 \\
      0 & P^{(\infty)}_2(z)
    \end{pmatrix}^{-1},
    & \text{$z$ outside the lens}, \\
    \begin{pmatrix}
      1 & 0 \\
      \frac{\theta}{z^{\alpha+1-\theta}} e^{-n \phi(z)} & 1
    \end{pmatrix}
    \begin{pmatrix}
      P^{(\infty)}_1(z) & 0 \\
      0 & P^{(\infty)}_2(z)
    \end{pmatrix}^{-1},
    & \text{$z$ in the lower part of the lens}, \\
    \begin{pmatrix}
      1 & 0 \\
      -\frac{\theta}{z^{\alpha+1-\theta}} e^{-n \phi(z)} & 1
    \end{pmatrix}
    \begin{pmatrix}
      P^{(\infty)}_1(z) & 0 \\
      0 & P^{(\infty)}_2(z)
    \end{pmatrix}^{-1},
    & \text{$z$ in the upper part of the lens}.
  \end{cases}
\end{multline}
Here $g(z)$, $\g(z)$ and $\phi(z)$ are defined in \eqref{def:g}, \eqref{def:tildeg} and \eqref{def:phi}, and $P^{(\infty)}_1(z)$ and $P^{(\infty)}_2(z)$ are defined in \eqref{eq:P1} and \eqref{eq:P2}.

\begin{figure}[t]
  \centering
  \includegraphics{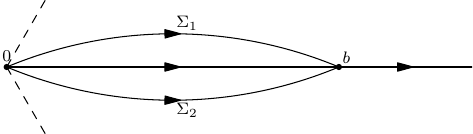}
  \caption{The contour $\Sigma$. The dashed lines indicate $\halfH$.}
  \label{fig:Sigma}
\end{figure}

\paragraph{Local parametrix around $b$}

Let, with $P^{(\infty)}_1$ and $P^{(\infty)}_1$ defined in \eqref{eq:P1} and \eqref{eq:P2}, (\cite[Equation (3.37)]{Wang-Zhang21})
\begin{equation}\label{def:gib}
  g^{(b)}_1(z) = \frac{z^{(\theta - \alpha - 1)/2}}{P_1^{(\infty)}(z)}, \qquad g^{(b)}_2(z) = \frac{z^{(\alpha + 1 - \theta)/2}}{\theta P_2^{(\infty)}(z)},
\end{equation}
and let, for $z \in D(0, \rb) \setminus \Sigma$, (\cite[Equation (3.43)]{Wang-Zhang21})
\begin{equation}\label{def:Eb}
  E^{(b)}(z)=\frac{1}{\sqrt{2}}
  \begin{pmatrix}
    g^{(b)}_1(z) & 0  \\
    0 & g^{(b)}_2(z)
  \end{pmatrix}^{-1}
  e^{\frac{\pi i}{4} \sigma_3}
  \begin{pmatrix}
    1 & -1 \\
    1 & 1
  \end{pmatrix}
  \begin{pmatrix}
    n^{\frac16} f_b(z)^{\frac14} & 0 \\
    0 & n^{-\frac16} f_b(z)^{-\frac14}
  \end{pmatrix}.
\end{equation}
The local parametrix $P^{(b)}(z)$ is the $2 \times 2$ matrix-valued function (\cite[Equations (3.38) and (3.42)]{Wang-Zhang21})
\begin{equation} \label{def:Pb}
  P^{(b)}(z) := E^{(b)}(z) \Psi^{(\Ai)}(n^{\frac23}f_b(z))
  \begin{pmatrix}
    e^{-\frac{n}{2} \phi(z)} g^{(b)}_1(z) & 0
    \\
    0 & e^{\frac{n}{2} \phi(z)} g^{(b)}_2(z)
  \end{pmatrix},  \qquad
  z \in D(b,\rb) \setminus \Sigma,
\end{equation}
where $\Psi^{(\Ai)}$ is the Airy parametrix defined in \eqref{eq:defn_Psi_Airy}. We have the following properties of $P^{(b)}(z)$ (\cite[RH problem 3.9(3), (4)]{Wang-Zhang21}):
\begin{prop} \label{prop:Pb_asy}
  As $z \to b$, $(P^{(b)}(z))_{ij} = \bigO((z - b)^{-1/4})$ and $(P^{(b)}(z)^{-1})_{ij} = \bigO((z - b)^{-1/4})$, for $i, j = 1, 2$. For $z$ on the boundary $\partial D(b,\rb)$ (except for the intersecting points with $\Sigma$), we have, as $n \to \infty$, $P^{(b)}(z) = I + \bigO(n^{-1})$ uniformly.
\end{prop}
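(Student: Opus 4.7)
The proposition packages two estimates, both of which follow by direct inspection of the explicit factorization \eqref{def:Pb} combined with standard asymptotics of the Airy parametrix. \emph{For the singularity bound as $z\to b$}, observe that $\Psi^{(\Ai)}(n^{2/3}f_b(z))$ is bounded there because $f_b(b)=0$ and the Airy function and its derivative are entire. By \eqref{def:gib} together with \eqref{eq:Pinfty_asy_b} one has $g^{(b)}_i(z)=\bigO((z-b)^{1/4})$, while the factor $f_b(z)^{\pm 1/4}$ appearing in \eqref{def:Eb} contributes $(z-b)^{\pm 1/4}$. Tallying the exponents entry by entry through \eqref{def:Pb} gives $(P^{(b)}(z))_{ij}=\bigO((z-b)^{-1/4})$. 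For the inverse one first notes that $\det E^{(b)}=(g^{(b)}_1 g^{(b)}_2)^{-1}$, $\det\Psi^{(\Ai)}$ is a constant Wronskian, and the right diagonal factor has determinant $g^{(b)}_1 g^{(b)}_2$; hence $\det P^{(b)}(z)$ is a constant, and the entries of $(P^{(b)})^{-1}$ are cofactors of $P^{(b)}$, inheriting the same bound.

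\emph{For the matching on $\partial D(b,\rb)$}, I would substitute the classical large-$\zeta$ asymptotic
\begin{equation*}
  \Psi^{(\Ai)}(\zeta) = \zeta^{-\sigma_3/4} M_0^{-1}\bigl(I+\bigO(\zeta^{-3/2})\bigr) e^{-\frac{2}{3}\zeta^{3/2}\sigma_3}, \qquad M_0=\frac{1}{\sqrt{2}}e^{\frac{\pi i}{4}\sigma_3}\begin{pmatrix} 1 & -1 \\ 1 & 1 \end{pmatrix},
\end{equation*}
valid away from the Stokes rays, with $\zeta=n^{2/3}f_b(z)$. The factor $\zeta^{-\sigma_3/4}=\diag(n^{-1/6}f_b^{-1/4},n^{1/6}f_b^{1/4})$ exactly cancels the $\diag(n^{1/6}f_b^{1/4},n^{-1/6}f_b^{-1/4})$ inside $E^{(b)}$ in \eqref{def:Eb}, collapsing the prefactor $E^{(b)}\zeta^{-\sigma_3/4}M_0^{-1}$ to $\diag(g^{(b)}_1,g^{(b)}_2)^{-1}$. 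The identity $\frac{2}{3}(n^{2/3}f_b(z))^{3/2}=-\frac{n}{2}\phi(z)$, built directly into the normalization \eqref{def:fb}, makes $e^{-\frac{2}{3}\zeta^{3/2}\sigma_3}$ cancel the $e^{\mp n\phi/2}$ on the right, leaving $\diag(g^{(b)}_1,g^{(b)}_2)$. The net result is the similarity transformation
\begin{equation*}
  P^{(b)}(z)=\diag(g^{(b)}_1,g^{(b)}_2)^{-1}\bigl(I+\bigO(\zeta^{-3/2})\bigr)\diag(g^{(b)}_1,g^{(b)}_2),
\end{equation*}
and since $g^{(b)}_{1,2}$ are analytic and nonvanishing on $\partial D(b,\rb)$ while $\zeta^{-3/2}=n^{-1}f_b(z)^{-3/2}=\bigO(n^{-1})$, this yields $P^{(b)}(z)=I+\bigO(n^{-1})$ uniformly.

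No substantive obstacle is anticipated: the argument is the standard Airy-parametrix matching computation, already performed for integer $\theta$ in \cite[Proposition 3.9]{Wang-Zhang21}, and nothing in it is specific to integer $\theta$. The only care required is tracking branch cuts so that the four Stokes sectors of $\Psi^{(\Ai)}$ mesh with the jump structure of $E^{(b)}$, and checking that the constant $-3/4$ in \eqref{def:fb} produces precisely the desired phase cancellation on every sector.
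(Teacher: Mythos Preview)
Your proposal is correct and follows exactly the standard Airy-parametrix argument that the paper invokes: the paper itself omits the proof, pointing to the asymptotics of $P^{(\infty)}_1,P^{(\infty)}_2$ in \eqref{eq:Pinfty_asy_b} and of the Airy function, and referring to \cite[Section 3.5]{Wang-Zhang21} for details. Your explicit determinant check for the inverse bound and your step-by-step cancellation on $\partial D(b,\rb)$ are precisely those details.
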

This proposition follows the limit properties of $P^{(\infty)}_1$ and $P^{(\infty)}_2$ in \eqref{eq:Pinfty_asy_b}, and of the Airy function (see \cite[Chapter 9]{Boisvert-Clark-Lozier-Olver10}). We omit the proof here, since the detail is given in \cite[Section 3.5]{Wang-Zhang21}.

Next, from $Q(z)$ defined in \eqref{def:thirdtransform}, we define a vector-valued function $V^{(b)}(z) = (V^{(b)}_1(z), V^{(b)}_1(z))$ by (\cite[Equation (3.46)]{Wang-Zhang21})
\begin{equation} \label{eq:defn_V^(b)}
  V^{(b)}(z) = Q(z) P^{(b)}(z)^{-1}, \quad z \in D(b,\rb) \setminus \Sigma.
\end{equation}

\paragraph{Definition and properties of $R^{\pre}$}

We define, for $r \geq 0$, 
\begin{equation}\label{def:SigmaiR}
  \Sigma^R_i(r) :=\Sigma_i \setminus \{D(0, r^{1 + \theta^{-1}}) \cup D(b, \rb) \} \quad \text{and} \quad \Sigma^R_i = \Sigma^R_i(r^{1 + \theta^{-1}}_n), \quad i=1, 2,
\end{equation}
where $D(0, 0) = \emptyset$, and
\begin{equation}\label{def:sigmapre}
 \Sigma^{\pre}:=[0,b]\cup [b+\rb, \infty) \cup \partial D(b,\rb) \cup \Sigma_1^R(0) \cup \Sigma_2^R(0).
\end{equation}
With $Q = (Q_1, Q_2)$ defined in \eqref{def:thirdtransform} and $V^{(b)} = (V^{(b)}_1, V^{(b)}_2)$ in \eqref{eq:defn_V^(b)}, we define the $1 \times 2$ array of functions $R^{\pre} = (R^{\pre}_1, R^{\pre}_2)$ on $(\compC \setminus \Sigma^{\pre}, \halfH \setminus \Sigma^{\pre})$ by
\begin{equation} \label{eq:defn_R^pre}
  \begin{aligned}
    R^{\pre}_1(z) = {}& Q_1(z), && z \in \compC \setminus (\Sigma^{\pre} \cup D(b, \rb)), \\
    R^{\pre}_2(z) = {}& Q_2(z), && z \in \halfH \setminus (\Sigma^{\pre} \cup D(b, \rb)), \\
    (R^{\pre}_1, R^{\pre}_2) = {}& (V^{(b)}_1, V^{(b)}_2), && \text{$R_1(z)$ and $R_2(z)$ on $D^*(b, \rb) \setminus (b - \rb, b)$}.
  \end{aligned}
\end{equation}
Since $R^{\pre}$ is transformed from $Y$, it satisfies the following RH problem that is derived from RH problem \ref{RHP:original_p}:

\begin{RHP} \label{RHP:Svar} \hfill
  \begin{enumerate}[label=\emph{(\arabic*)}, ref=(\arabic*)]
  \item \label{enu:RHP:Svar:1}
    $R^{\pre} = (R^{\pre}_1, R^{\pre}_2)$ is analytic in $(\compC \setminus \Sigma^{\pre}, \mathbb{H}_\theta \setminus \Sigma^{\pre})$, and is continuous up to the boundary, except for $0$.
  \item \label{enu:RHP:Svar:2}
    For $z \in (0, b) \cup \Sigma^R_1(0) \cup \Sigma^R_2(0) \cup (b + \rb, +\infty)$, we have that $(R^{\pre}_1)_{\pm}(z)$ and $(R^{\pre}_2)_{\pm}(z)$ are bounded for $z$ away from $0$, and
    \begin{equation} \label{def:Jcals}
      R^{\pre}_+(z) = R^{\pre}_-(z) J_Q(z),
    \end{equation}
    where (\cite[Equation (3.31)]{Wang-Zhang21})
    \begin{equation} \label{def:JQ}
      J_Q(z) =
      \begin{cases}
        \begin{pmatrix}
          1 & 0 \\
          \frac{\theta P^{(\infty)}_2(z)}{z^{\alpha+1-\theta}P^{(\infty)}_1(z)}e^{-n \phi(z)} & 1
        \end{pmatrix},
        & z \in \Sigma^R_1(0) \cup \Sigma^R_2(0), \\
        \begin{pmatrix}
          0 & 1 \\
          1 & 0
        \end{pmatrix},
        & z \in (0, b), \\
        \begin{pmatrix}
          1 & \frac{z^{\alpha+1-\theta}P^{(\infty)}_1(z)}{\theta P^{(\infty)}_2(z)}e^{n \phi(z)}  \\
          0 & 1
        \end{pmatrix},
        & z \in (b + \rb, +\infty),
      \end{cases}
    \end{equation}
    and for $z \in \partial D(b, \rb)$,
    \begin{equation}
      R^{\pre}_+(z) = R^{\pre}_-(z) P^{(b)}(z).
    \end{equation}
  \item \label{enu:RHP:Svar:3}
    As $z \to \infty$ in $\compC$, $R^{\pre}_1$ behaves as $R^{\pre}_1(z)=1+\bigO(z^{-1})$.

  \item \label{enu:RHP:Svar:4}
    As $z \to \infty$ in $\mathbb{H}_\theta $, $R^{\pre}_2$ behaves as $R^{\pre}_2(z)=\bigO(1)$.

  \item
  \label{enu:RHP:Svar:5}
    As $z \to 0$ in $\compC \setminus \Sigma^{\pre}$, we have
    \begin{equation} \label{eq:Q_to_0:1}
      R^{\pre}_1(z) =
      \begin{cases}
        \bigO ( z^{\frac{\theta(\theta - \alpha - 1/2)}{1 + \theta}} ), & \text{$\alpha > \theta - 1$ and $z$ inside the lens}, \\
        \bigO ( z^{\frac{\theta}{2(1+\theta)}} \log z ), & \text{$\alpha = \theta - 1$ and $z$ inside the lens}, \\
        \bigO ( z^{\frac{\alpha + 1 - \theta/2}{1+\theta}} ), & \text{$z$ outside the lens or $-1 < \alpha < \theta - 1$}.
      \end{cases}
    \end{equation}

  \item
    As $z \to 0$ in $\halfH \setminus \Sigma^{\pre}$, we have
    \begin{equation} \label{eq:Q_to_0:2}
      R^{\pre}_2(z)=
      \begin{cases}
        \bigO ( z^{\frac{\theta(\theta-\alpha-1/2)}{1+\theta}} ), & \alpha > \theta - 1, \\
        \bigO ( z^{\frac{\theta}{2(1+\theta)}} \log z ), & \alpha = \theta - 1, \\
        \bigO (z^{\frac{\alpha+1-\theta/2}{1+\theta}} ), & -1 < \alpha < \theta - 1.
      \end{cases}
    \end{equation}
  \item
    As $z \to b$, we have $R^{\pre}_1(z) = \bigO(1)$ and $R^{\pre}_2(z) = \bigO(1)$.
  \item \label{enu:RHP:Svar:7}
    For $x>0$, we have the boundary condition $R^{\pre}_2(e^{\pi i/\theta}x) = R^{\pre}_2(e^{-\pi i/\theta}x)$.
  \end{enumerate}
\end{RHP}
By the regularity assumption in Section \ref{subsec:regularity}, we have that that there exists $\epsilon > 0$ such that for all large enough $n$,
\begin{align}
  \lvert (J_Q)_{12}(z) \rvert < {}& e^{-\epsilon n}, & z \in {}& (b + \rb, +\infty), \label{eq:est_JQ_12} \\
  \lvert (J_Q)_{21}(z) \rvert < {}& e^{-\epsilon n r_n}, & z \in {}& (\Sigma_1 \cup \Sigma_2) \setminus (D(0, r^{1 + \theta^{-1}}_n) \cup D(0, \rb)). \label{eq:est_JQ_21} \\
  \lvert (J_Q)_{21}(z) \rvert \to {}& 0 \text{ exponentially fast}, & z \to {}& \infty \text{ from $\realR_+$}. \label{eq:est_JQ_21_vanishing}
\end{align}
(More specifically, \eqref{eq:est_JQ_12} is from \eqref{eq:gequal2}, and \eqref{eq:est_JQ_21} and \eqref{eq:est_JQ_21_vanishing} are from \eqref{eq:Sigma_shape_fixed} and the estimates of $g(z)$ and $\g(z)$ in \eqref{eq:g_error}, \eqref{eq:g_tilde_error}, \eqref{eq:asy_formula_for_g} and \eqref{eq:asy_formula_for_g_tilde}.) We also have, from the estimates of entries of $P^{(b)}(z)$ that are based on the asymptotics of Airy function and the limit formulas \eqref{eq:g_error}, \eqref{eq:g_tilde_error}, \eqref{eq:asy_formula_for_g} and \eqref{eq:asy_formula_for_g_tilde} of $g(z)$ and $\g(z)$, and the limit formulas \eqref{eq:P_and_P^pre}, \eqref{eq:asy_P^infty_1} and \eqref{eq:asy_P^infty_2} of $P^{(\infty)}_1(z)$ and $P^{(\infty)}_2(z)$, the estimate that is uniform for all large enough $n$ that for $z \in \partial D(b, \rb)$
\begin{align}
  \lvert (P^{(b)})_{11}(z) - 1 \rvert = {}& \bigO(n^{-1}), & \lvert (P^{(b)})_{12}(z) \rvert = {}& \bigO(n^{-1}), \label{eq:est_Pb_row1} \\
  \lvert (P^{(b)})_{21}(z) \rvert = {}& \bigO(n^{-1}), & \lvert (P^{(b)})_{22}(z) - 1 \rvert = {}& \bigO(n^{-1}). \label{eq:est_Pb_row2}
\end{align}
Since the derivation of \eqref{eq:est_Pb_row1} and \eqref{eq:est_Pb_row2} is given in \cite[Section 3.5]{Wang-Zhang21}, we omit further details.

\begin{prop} \label{prop_uniqueness}
  RH problem \ref{RHP:Svar} has a unique solution.
\end{prop}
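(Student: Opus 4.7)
The plan is to reduce uniqueness of RH problem \ref{RHP:Svar} to uniqueness of the original RH problem \ref{RHP:original_p} for $Y$, and then to prove the latter by a linear-algebra argument on the biorthogonal pairing. The transformations $Y \to T \to S \to Q$ of \cite[Sections 3.1--3.4]{Wang-Zhang21} together with the local dressing $Q = V^{(b)} P^{(b)}$ in $D(b, \rb)$ (see \eqref{def:thirdtransform} and \eqref{eq:defn_V^(b)}) are all right multiplications by explicit, pointwise invertible factors, so any pair of solutions $R^{\pre}, \widehat R^{\pre}$ of RH problem \ref{RHP:Svar} lifts to a pair $Y, \widehat Y$ solving RH problem \ref{RHP:original_p}. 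The bookkeeping needed here is to check that items \ref{enu:RHP:Svar:5}--\ref{enu:RHP:Svar:6} of $R^{\pre}$, after inverting \eqref{def:thirdtransform} and using the local expansions \eqref{eq:P_and_P^pre}, \eqref{eq:asy_P^infty_1}--\eqref{eq:asy_P^infty_2} at $z = 0$, fall inside the admissible regimes \eqref{eq:Y_2_limit_at_0} for $Y_2$, and that the auxiliary jumps across $\Sigma^R_1(0) \cup \Sigma^R_2(0) \cup \partial D(b, \rb)$ are exactly cancelled by the jumps in $T$, $S$, $P^{(b)}$, so that $Y$ is genuinely jump-free off $[0, +\infty)$.

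Given this reduction, set $D = (D_1, D_2) := Y - \widehat Y$. The jump matrix in \eqref{eq:jump_Y_realR} is upper-triangular unipotent, so $D_{1,+} = D_{1,-}$ on $(0, +\infty)$ and $D_1$ extends to an entire function; together with items \ref{enu:RHP:original_p:3} and \ref{enu:RHP:original_p:5} this forces $D_1$ to be a polynomial of degree at most $n - 1$.

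Next, consider the modified Cauchy transform $CD_1(z)$ defined by \eqref{eq:defn_of_Cp_n}. It has the same jump as $D_2$ across $(0, +\infty)$ in $\halfH$, and since it depends on $z$ only through $z^\theta$, it automatically satisfies the boundary symmetry $CD_1(e^{\pi i/\theta}x) = CD_1(e^{-\pi i /\theta}x)$ of item \ref{enu:RHP:original_p:7}. Therefore $D_2 - CD_1$ is analytic throughout $\halfH$, becomes a single-valued function of $w = z^\theta$ on $\compC$, and vanishes at $w = \infty$ (by item \ref{enu:RHP:original_p:4} for $D_2$ and by integrability for $CD_1$); Liouville then gives $D_2 \equiv CD_1$.

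Finally, expanding $(x^\theta - z^\theta)^{-1}$ as a geometric series at $z = \infty$ in $\halfH$ and comparing with $D_2(z) = \bigO(z^{-(n+1)\theta})$ produces
\begin{equation*}
  \int_0^{\infty} D_1(x) \, x^{k\theta + \alpha} e^{-nV(x)}\, dx = 0, \qquad k = 0, 1, \dotsc, n-1,
\end{equation*}
so that $D_1$ is orthogonal to $\mathrm{span}\{1, x^\theta, \dotsc, x^{(n-1)\theta}\}$ with respect to $x^\alpha e^{-nV(x)}$. The Andréief identity applied to \eqref{eq:bioLag} identifies the corresponding Gram determinant, up to sign, with $\mathcal{Z}_n$, which is strictly positive since \eqref{eq:bioLag} is a bona fide probability density; the pairing is therefore non-degenerate, $D_1 \equiv 0$, hence $D \equiv 0$, and unwinding the transformations gives $R^{\pre} \equiv \widehat R^{\pre}$. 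The most delicate step in this plan is the endpoint bookkeeping at $z = 0$: the three singularity regimes of \eqref{eq:Q_to_0:2} must be matched case-by-case against those in \eqref{eq:Y_2_limit_at_0} using the explicit exponents in \eqref{eq:asy_P^infty_2}, with separate attention to the sign of $\alpha - (\theta - 1)$.
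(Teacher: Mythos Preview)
Your proposal is correct and follows essentially the same route as the paper's proof: invert the transformations back to RH problem \ref{RHP:original_p}, show the first component is a polynomial and the second is its modified Cauchy transform via a Liouville argument, then invoke non-degeneracy of the biorthogonal pairing. The one point you underemphasize is the behavior at $z=b$: after undoing $Q=V^{(b)}P^{(b)}$ the recovered $Y$ carries only a weak fractional-power bound there rather than continuity across $b$, so a removable-singularity step at $b$ (which the paper spells out) is needed alongside the $z=0$ bookkeeping you flag.
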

The proof of Proposition \ref{prop_uniqueness} is contained in \cite[Proofs of Propositions 3.3 and 3.20]{Wang-Zhang21}. For completeness, we give a proof below.
\begin{proof}[Proof of Proposition \ref{prop_uniqueness}]
  The biorthogonal polynomial $p_n(z) = p^{(V)}_{n, n}(z)$ always exists since it is the average characteristic function of the Muttalib-Borodin ensemble, as explained in \cite[Equation (1.14)]{Wang-Zhang21}. It is straightforward to check that $Y = (p_n, Cp_n)$ is a solution of RH problem \ref{RHP:original_p}. Since RH problem \ref{RHP:Svar} is transformed from RH problem \ref{RHP:original_p}, we conclude that at least it has one solution that is transformed from $Y = (p_n, Cp_n)$.

  On the other hand, the uniqueness of RH problem \ref{RHP:Svar} is, by taking the transform \eqref{def:thirdtransform} reversely from $Q$ to $Y$, is equivalent to the uniqueness of RH problem \ref{RHP:original_p} with the boundary condition ``$Y$ has continuous boundary values $Y_{\pm}$ when approaching $(0, +\infty)$ from above $(+)$ and below $(-)$'' replaced by the weaker one ``$Y$ has continuous boundary values $Y_{\pm}$ when approaching $(0, b) \cup (b, +\infty)$ from above $(+)$ and below $(-)$, and $Y_1(z) = \bigO((z - b)^{-1/4})$ and $Y_2(z) = \bigO((z - b)^{-1/4})$ as $z \to b$'', and the jump condition \eqref{eq:jump_Y_realR} holds for $x \in (0, b) \cup (b, +\infty)$. Since $Y_1(z)$ has a trivial jump along $(0, b) \cup (b, +\infty)$, and it can be extended to an analytic function on $\compC \setminus \{ 0, b \}$. Then since $Y_1(z) = \bigO(1)$ as $z \to 0$ and $Y_1(z) = \bigO((z - b)^{-1/4})$ as $z \to b$, we conclude that $Y_1(z)$ extends analytically to a holomorphic function on $\compC$, and it is a monic polynomial of degree $n$ due to Item \ref{enu:RHP:original_p:3} of RH problem \ref{RHP:original_p}. Similarly, we consider $Y_3(z) := Y_2(z^{1/\theta}) - CY_1(z^{1/\theta})$, and find that it has a trivial jump along $(-\infty, 0) \cup (0, b) \cup (b, +\infty)$, so it extends to an analytical function on $\compC \setminus \{ 0, b^{1/\theta} \}$. Also we have $Y_3(z) = \bigO((z - b^{1/\theta})^{-1/4})$ as $z \to b^{1/\theta}$, and from \eqref{eq:Y_2_limit_at_0} we have
  \begin{equation}
    Y_3(z) =
    \begin{cases}
      \bigO(1), & \text{$\alpha+1-\theta > 0$,} \\
      \bigO(\log z), & \text{$\alpha+1-\theta = 0$,} \\
      \bigO(z^{(\alpha+1)/\theta - 1}), & \text{$\alpha+1-\theta < 0$},
    \end{cases}
  \end{equation}
  as $z \to 0$. Hence $Y_3(z)$ is holomorphic on $\compC$, and by Item \ref{enu:RHP:original_p:4} of RH problem \ref{RHP:original_p} we have that $Y_3(z) = 0$, that is, $Y_2(z) = CY_1(z)$. Now we see from Item \ref{enu:RHP:original_p:4} of RH problem \ref{RHP:original_p} that $Y_1(z)$ is the biorthogonal polynomial $p_n$, and then $Y = (p_n, Cp_n)$. We conclude that the solution of RH problem \ref{RHP:Svar} is unique.
\end{proof}

\subsection{Local parametrix around $0$} \label{subsec:local_para_0}

\subsubsection{Transformation of $Q$ into function space $V^{(n), \dressing}_{\alpha}(r)$}

We define a transform $\transform: (X_1, X_2) \mapsto Y$, that maps a pair of functions $(X_1, X_2)$ where $X_1(z)$ is defined on $\compC \setminus \realR_+$ and $X_2(z)$ is defined on $\halfH \setminus \realR_+$, to a function $Y(z)$ defined on $\compC \setminus (\{ 0 \} \cup \{ \arg z = 0,  \pm \frac{\pi}{\theta + 1} \})$, in the way that
\begin{equation} \label{eq:defn_trans_T}
  Y(z) =
  \begin{cases}
    X_1(-(-z)^{1 + \theta^{-1}}), & \arg z \in (-\frac{\theta \pi}{1 + \theta}, \frac{\theta \pi}{1 + \theta}), \\
    X_2(z^{1 + \theta^{-1}} e^{\frac{\pi}{\theta} i}), & \arg z \in (\frac{-\pi}{1 + \theta}, 0), \\
    X_2(z^{1 + \theta^{-1}} e^{-\frac{\pi}{\theta}i}), & \arg z \in (0, \frac{\pi}{1 + \theta}).
  \end{cases}
\end{equation}
It is clear that $\transform^{-1}$ is also well defined. For instance, let $\id: z \to z$ denote the identity function on $\compC$ (and also on $\compC \setminus (\{ 0 \} \cup \{ \arg z = 0,  \pm \frac{\pi}{\theta + 1} \})$), then
\begin{equation} \label{eq:defn_X^id}
  \transform^{-1}(\id) = (X^{\id}_1, X^{\id}_2), \quad \text{where} \quad X^{\id}_1(z) = -(-z)^{\frac{\theta}{\theta + 1}}, \quad X^{\id}_2(z) =
  \begin{cases}
    z^{\frac{\theta}{\theta + 1}} e^{\frac{-\pi i}{\theta + 1}}, & \arg z \in (0, \frac{\pi}{\theta}), \\
    z^{\frac{\theta}{\theta + 1}} e^{\frac{\pi i}{\theta + 1}}, & \arg z \in (-\frac{\pi}{\theta}, 0).
  \end{cases}
\end{equation}

From the function $Q=(Q_1,Q_2)$ in \eqref{def:thirdtransform} that satisfies RH problem \ref{RHP:Svar}, we define a function $U(z)$ as
\begin{equation} \label{eq:defn_U_from_Q}
  U = \transform(Q_1, Q_2).
\end{equation}
Throughout this paper, we only consider $U(z)$ with $\lvert z \rvert < 10 r_n$. We note that in this region, due to the properties of $Q_1, Q_2$ that are summarized in Items \ref{enu:RHP:Svar:2} and \ref{enu:RHP:Svar:7} of RH problem \ref{RHP:Svar}, $U(z)$ extends analytically on rays $\{ \arg z = 0 \}$ and $\{ \arg z = \pm\pi/(1 + \theta) \}$, while it has jumps along the rays $\{ \arg z = \pm \frac{\theta^{-1} \pi + \gamma}{1 + \theta^{-1}} \}$. See Figure \ref{fig:U_in_Q}.  Moreover, $U(z)$ is continuous up to the boundary at the two rays, and satisfies

\begin{figure}[htb]
  \begin{minipage}{0.45\linewidth}
    \centering
    \includegraphics{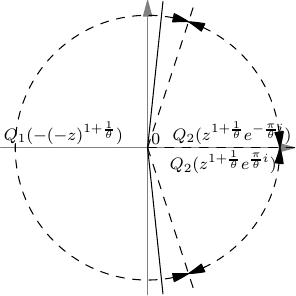}
    \caption[$U(z)$ in $Q_1, Q_2$.]{Expression of $U(z)$ in $Q_1(z)$ and $Q_2(z)$. $U(z)$ is discontinuous along the two solid rays.}
    \label{fig:U_in_Q}
  \end{minipage}
  \hspace{\stretch{1}}
  \begin{minipage}{0.45\linewidth}
    \centering
    \includegraphics{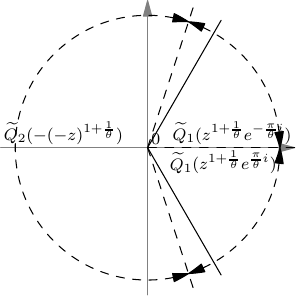}
    \caption[$\U(z)$ in $\Q_1, \Q_2$.]{Expression of $\U(z)$ in $\Q_1(z)$ and $\Q_2(z)$. $\U(z)$ is discontinuous along the two solid rays.}
    \label{fig:U_tilde_in_Q_tilde}
  \end{minipage}
\end{figure}

\begin{equation} \label{eq:U_jump}
  U_+(z) - U_-(z) =
  \begin{cases}
    J_U(z) U(z e^{\frac{2\pi}{1 + \theta^{-1}} i}), & \arg z = \frac{\theta^{-1} \pi + \gamma}{1 + \theta^{-1}}, \\
    J_U(z) U(z e^{-\frac{2\pi}{1 + \theta^{-1}} i}), & \arg z = \frac{-\theta^{-1}\pi - \gamma}{1 + \theta^{-1}}, 
  \end{cases}
\end{equation}
where, with $J_Q$ defined in \eqref{def:JQ},
\begin{equation} \label{def:JU}
  J_U(z) =
  \begin{cases}
    (J_Q(z^{1 + \theta^{-1}} e^{-\frac{\pi}{\theta} i}))_{21}, & \arg z = \frac{\theta^{-1} \pi + \gamma}{1 + \theta^{-1}}, \\
    (J_Q(z^{1 + \theta^{-1}} e^{\frac{\pi}{\theta} i}))_{21}, & \arg z = \frac{-\theta^{-1}\pi - \gamma}{1 + \theta^{-1}}.
  \end{cases}
\end{equation}
The limit behaviour of $U(z)$ as $z \to 0$ can be derived from that of $Q = R^{\pre}$ in \eqref{eq:Q_to_0:1} and \eqref{eq:Q_to_0:2}.

Recall the functions $g(z)$ and $\g(z)$ defined in \eqref{def:g} and \eqref{def:tildeg}, functions $g^{\pre}(z)$ and $\g^{\pre}(z)$ defined in \eqref{eq:asy_formula_for_g} and \eqref{eq:asy_formula_for_g_tilde}, functions $P_1^{(\infty)}(z)$ and $P_2^{(\infty)}(z)$ defined in \eqref{eq:P1} and \eqref{eq:P2}, and functions $P^{(\infty), \pre}_1(z)$ and $P^{(\infty), \pre}_2(z)$ defined in \eqref{eq:asy_P^infty_1} and \eqref{eq:asy_P^infty_2}. We denote
\begin{align} \label{eq:defn_n(z)}
  n_1(z) = {}& \frac{P^{(\infty), \pre}_1(z)}{P^{(\infty)}_1(z)} \frac{e^{n(-g(z) + V(z) + \ell - \g_-(0))}}{e^{-n g^{\pre}(z)}}, &
  n_2(z) = {}& \frac{P^{(\infty), \pre}_2(z)}{P^{(\infty)}_2(z)} \frac{e^{n(\g(z) - \g_+(0))}}{e^{n \g^{\pre}(z)}}.
\end{align}
It is clear that $n_1(z)$ is well defined on $D^*(0, b) \setminus \realR_+$ and $n_2(z)$ is well defined on $(D^*(0, b) \cap \halfH) \setminus \realR_+$. By \eqref{eq:P^infty_jump}, \eqref{eq:gequal} and the properties of $P^{(\infty)}_1, P^{(\infty)}_2, P^{(\infty), \pre}_1, P^{(\infty), \pre}_2, \g, \g^{\pre}$, we have $n_1(z)$ and $n_2(z)$ are continuous up to the boundary rays and
\begin{align} \label{eq:m_n}
  (n_1)_+(x) = {}& (n_2)_-(x), & (n_2)_-(x) = {}& (n_1)_+(x), & n_2(x e^{\pi i/\theta}) = {}& n_2(x e^{-\pi i/\theta}), & \text{for $x \in (0, b)$}. 
\end{align}
Then we define the function $n(z)$ on $D^*(0, b^{\theta/(\theta + 1)}) \setminus \{ \arg z = 0, \pm \frac{\theta^{-1} \pi}{1 + \theta^{-1}} \}$ by
\begin{equation} \label{eq:func_n}
  n(z) = 
  \begin{cases}
    n_2(z^{1 + \theta^{-1}} e^{\frac{\pi i}{\theta}}), & \arg z \in (-\frac{\theta^{-1} \pi}{1 + \theta^{-1}}, 0), \\
    n_2(z^{1 + \theta^{-1}} e^{-\frac{\pi i}{\theta}}), & \arg z \in (0, \frac{\theta^{-1} \pi}{1 + \theta^{-1}}), \\
    n_1(-(-z)^{1 + \theta^{-1}}), & \arg (-z) \in (-\frac{\pi}{1 + \theta^{-1}}, \frac{\pi}{1 + \theta^{-1}}).
  \end{cases}
\end{equation}
We find that $n(z)$ is analytically extended on the rays $\{ \arg z = 0, \pm \frac{\theta^{-1} \pi + \gamma}{1 + \theta^{-1}} \}$ by \eqref{eq:m_n}, so $n(z)$ is analytic on $D^*(0, b^{\theta/(\theta + 1)})$. Then from the limit behaviours of  $P^{(\infty)}_1(z)$, $P^{(\infty), \pre}_1(z)$, $P^{(\infty)}_2(z)$, $P^{(\infty), \pre}_2(z)$, $g(z)$, $g^{\pre}(z)$, $\g(z)$ and $\g^{\pre}(z)$ as $z \to 0$ in \eqref{eq:P_and_P^pre}, \eqref{eq:asy_P^infty_1}, \eqref{eq:asy_P^infty_2}, \eqref{eq:g_error}, \eqref{eq:g_tilde_error}, \eqref{eq:asy_formula_for_g}, \eqref{eq:asy_formula_for_g_tilde}, we have that $n(z)$ has no singularity at $0$, and $n(z)$ is analytic on $D^*(0, b^{\theta/(\theta + 1)})$. Moreover, by the limit behaviours listed above, we have that 
\begin{equation} \label{eq:est_n(z)}
  n(z) - 1 =
  \begin{cases}
    \bigO(n^{1 - m_{\theta}}), & \lvert z \rvert \in [1, C], \\
    \bigO(n^{\frac{1 - m_{\theta}}{1 + m_{\theta}}}), & \lvert z \rvert \in [10^{-1} r_n, 10 r_n],
  \end{cases}
\end{equation}
uniformly for all large enough $n$, where $C > 1$ is a constant.

Let $f(z)$ be a function whose domain is in $D(0, b^{\theta/(\theta + 1)})$. We define the transforms $\Dressing$ and $\Dressinginv$ on $f(z)$ by
\begin{align} \label{eq:G_divided_dressed}
  \Dressing(f)(z) = {}& n(z) e^{\varrho nz} f(\varrho nz), & \Dressinginv(f)(z) = {}& n^{-1}(z) e^{-\varrho nz} f(\varrho nz),
\end{align}
where $\varrho$ is defined in \eqref{eq:value_r_n}. Suppose $R \in (0, 10 \rho n r_n)$ and $f(z)$ is a function defined on $D^*(R) \setminus \{ \arg z = \pm \frac{\theta^{-1} \pi + \gamma}{1 + \theta^{-1}} \}$. so that $\Dressing(f)(z)$ is a function on $D^*(0, (\varrho n)^{-1} R) \setminus \{ \arg z = \pm \frac{\theta^{-1} \pi + \gamma}{1 + \theta^{-1}} \}$. Hence for any $r \in (0, 10 r_n)$, the inverse transform $\Dressing^{-1}$ is well defined on functions on $D^*(0, r) \setminus \{ \arg z = \pm \frac{\theta^{-1} \pi + \gamma}{1 + \theta^{-1}} \}$. let
\begin{equation}
  V^{(n), \dressing}_{\alpha}(r) = \{ \Dressing(f)(z) : f(z) \in V_{\alpha}(\varrho n r) \}.
\end{equation}
From Definition \ref{defn:V_alpha} of function space $V_{\alpha}(R)$, we derive the definition of function space $V^{(n), \dressing}_{\alpha}(r)$ as follows.
\begin{defn} \label{defn:V_dressing}
  $V^{(n), \dressing}_{\alpha}(r)$ consists of functions $f(z)$ on $z \in D^*(0, r) \setminus \{ \arg z = \pm \frac{\theta^{-1} \pi + \gamma}{1 + \theta^{-1}} \}$, such that $r \in (0, 10 r_n)$ and
  \begin{itemize}
  \item
    $f(z)$ is analytic in the sector $\arg (-z) \in (\frac{-\pi + \gamma}{1 + \theta^{-1}}, \frac{\pi - \gamma}{1 + \theta^{-1}})$ and the sector $\arg z \in (\frac{-\theta^{-1} \pi - \gamma}{1 + \theta^{-1}}, \frac{\theta^{-1} \pi + \gamma}{1 + \theta^{-1}})$ separately, and $f(z)$ is continuous up to the boundary on the two rays $\{ \arg z = \pm \frac{\theta^{-1} \pi + \gamma}{1 + \theta^{-1}} \}$.
  \item
    Let the two rays be oriented from $0$ to $\infty$. The boundary values of $f$ on the sides of the two rays satisfy
    \begin{align}
      &
        \begin{aligned}[b]
          f_+(z) - f_-(z) = {}& -e^{-\frac{2\alpha + 3}{\theta + 1} \pi i} e^{\varrho nz(1 - e^{{\frac{-2\pi}{\theta + 1} i}})} \frac{n(z)}{n(z e^{-\frac{2\pi}{\theta + 1} i})} f(z e^{-\frac{2\pi}{\theta + 1} i}) \\
          = {}& J_U(z) f(z e^{-\frac{2\pi}{\theta + 1} i}), 
        \end{aligned} 
                              & \arg z = {}& \frac{\theta^{-1} \pi + \gamma}{1 + \theta^{-1}}, \label{eq:G^lk_jump_1} \\
      &
        \begin{aligned}[b]
          f_+(z) - f_-(z) = {}& e^{\frac{2\alpha + 3}{\theta + 1} \pi i} e^{\varrho nz(1 - e^{{\frac{2\pi}{\theta + 1} i}})} \frac{n(z)}{n(z e^{\frac{2\pi}{\theta + 1} i})} f(z e^{\frac{2\pi}{\theta + 1} i}) \\
          = {}& J_U(z) f(z e^{\frac{2\pi}{\theta + 1} i}), 
        \end{aligned} 
                              & \arg z = {}& \frac{-\theta^{-1} \pi - \gamma}{1 + \theta^{-1}}, \label{eq:G^lk_jump_2}
    \end{align}
    where $J_U$ defined in \eqref{def:JU}.
  \item
    As $z \to 0$, $f(z)$ has the limit behaviour depending on $\alpha$ and $\theta$ characterized by \eqref{eq:V_alpha(R)_at_0:1} -- \eqref{eq:V_alpha(R)_at_0:3}.
  \end{itemize}
\end{defn}

As $z \to 0$, the limit behaviour of $U(z)$ that can be found from the limit behaviour of $(Q_1, Q_2) = (R^{\pre}_1, R^{\pre}_2)$ in \eqref{eq:Q_to_0:1} and \eqref{eq:Q_to_0:2} via the transform formula \eqref{eq:defn_U_from_Q}. By comparing the discontinuity condition \eqref{eq:U_jump} and the limit behaviour at $0$ of $U$ and Definition \ref{defn:V_dressing}, we find that $U(z)$ belongs to $V^{(n), \dressing}(r)$ for all $r \in (0, 10 r_n)$.

\subsubsection{Functions $G^{(\ell)}$ and $H^{(\ell)}$, and operators $P^{(0)}$ and $Q^{(0)}$}

We apply transform $\Dressing$ to $G^{(\ell), \model}(z)$ and transform $\Dressinginv$ to $H^{(\ell), \model}(z)$, and get
\begin{align} \label{G^ell_from_model}
  G^{(\ell)}(z) = {}& (\varrho n)^{-\ell} \Dressing(G^{(\ell), \model})(z), & H^{(\ell)}(z) = {}& (\varrho n)^{-\ell} \Dressinginv(H^{(\ell), \model})(z).
\end{align}
$G^{(\ell)}$ is analytic on $D(0, b^{\theta/(\theta + 1)}) \setminus \{ \arg z = \pm \frac{\theta^{-1} \pi + \gamma}{1 + \theta^{-1}} \}$, continuous up to the boundary on the two rays, and if $\ell \in \natN$, then $G^{(\ell)}(z) \in V^{(n), \dressing}_{\alpha}(r)$ for all $r \in (0, 10 r_n)$. Similarly, $H^{(\ell)}(z)$ is analytic on $D(0, b^{\theta/(\theta + 1)}) \setminus \{ \arg z = \pm \frac{\theta^{-1} \pi - \gamma}{1 + \theta^{-1}} \}$, and is continuous up to the boundary on the two rays.

From the definitions \eqref{eq:G_divided_model}, \eqref{eq:defn_G^ell_model}, \eqref{eq:H_divided_model}, \eqref{eq:defn_H^ell_model}, \eqref{G^ell_from_model}, and the estimates \eqref{eq:asy_IJK_at_0}, \eqref{eq:G_divided} and \eqref{eq:est_n(z)}, we have the estimate that for any constant $C > 1$, if $\zeta \in D(0, C) \setminus (D(0, 1) \cup \{ \arg z = \pm \frac{\theta^{-1} \pi + \gamma}{1 + \theta^{-1}} \})$, then
\begin{equation} \label{eq:est_G^ell_central}
  \zeta^{-\ell} \left( (\varrho n)^{\ell} G^{(\ell)}((\varrho n)^{-1} \zeta) - G^{(\ell), \model}(\zeta) \right) = \bigO(n^{1 - m_{\theta}}),
\end{equation}
and if $\zeta \in D(0, 10 r_n) \setminus (D(0, 10^{-1} r_n) \cup \{ \arg z = \pm \frac{\theta^{-1} \pi + \gamma}{1 + \theta^{-1}} \})$, then 
  
\begin{align} \label{eq:G^ell_at_0:5}
  \lvert z^{-\ell} (G^{(\ell)}(z) - 1) \rvert \leq {}& M n^{\frac{1 - m_{\theta}}{1 + m_{\theta}}}, & \lvert z^{-\ell} (H^{(\ell)}(z) - 1) \rvert \leq {}& M n^{\frac{1 - m_{\theta}}{1 + m_{\theta}}}.
\end{align}

From the operators $P^{\model}: H(R) \to V_{\alpha}(R)$ and $Q^{\model}: V_{\alpha}(R) \to H(R)$, we define the operators $P^{(0)}: H(r) \to V^{(n), \dressing}_{\alpha}(r)$ and $Q^{(0)}: V^{(n), \dressing}_{\alpha}(r) \to H(r)$, with $r \in (0, 10 r_n)$, as follows. For any $h(z) \in H(r)$, we denote the function $h^{\sharp} \in H(\varrho n r)$ by $h^{\sharp}(z) = h((\varrho n)^{-1} z)$. For any $h(z) \in H(r)$ with $h(z) = \sum^{\infty}_{\ell = 0} a_{\ell} z^{\ell}$, we define, for $z \in D(0, r)$
\begin{equation} \label{eq:defn_P^0}
  P^{(0)}(h)(z) = \Dressing(P^{\model}(h^{\sharp}))(z) = \sum^{\infty}_{\ell = 0} a_{\ell} G^{(\ell)}(z) = \frac{1}{2\pi i} \oint_{\lvert w \rvert = r'} h(w) \sum^{\infty}_{\ell = 0} w^{-\ell} G^{(\ell)}(z) \frac{dw}{w},
\end{equation}
where $r' \in (\lvert z \rvert, r)$. On the other hand, for any $h(z) \in H(\varrho n r)$, we denote the function $h^{\flat} \in H(r)$ by $h^{\flat}(z) = h(\varrho n z)$. For any $f(z) \in V^{(n), \dressing}_{\alpha}(r)$, we have that $(\Dressing^{-1} f)(z) \in V_{\alpha}(\varrho n r)$, and by Lemma \ref{lem:V_alpha(R)_series_rep}, it has a unique series representation as $(\Dressing^{-1} f)(z) = \sum^{\infty}_{\ell = 0} c_{\ell} (\varrho n)^{-\ell} G^{(\ell), \model}(z)$ for some coefficients $c_{\ell}$. Then $f(z)$ has a unique series representation $f(z) = \sum^{\infty}_{\ell = 0} c_{\ell} G^{(\ell)}(z)$. Then for such $f(z) \in V^{(n), \dressing}_{\alpha}(r)$, we define, for $z \in D^*(0, r) \setminus \{ \arg z = \pm \frac{\theta^{-1} \pi + \gamma}{1 + \theta^{-1}} \}$
\begin{equation} \label{eq:defn_Q^0}
  Q^{(0)}(f)(z) = (Q^{\model}(\Dressing^{-1}(f)))^{\flat}(z) = \sum^{\infty}_{\ell = 0} c_{\ell} z^{\ell} = \frac{1}{2\pi i} \oint_{\lvert w \rvert = r'} f(w) \sum^{\infty}_{\ell = 0} H^{(-\ell)}(w) z^{\ell} \frac{dw}{w},
\end{equation}
where $r' \in (\lvert z \rvert, r)$.

From Lemma \ref{lem:PQ_id}, we derive that that $Q^{(0)} P^{(0)} = I$ as an operator on $H(r)$, and $P^{(0)}Q^{(0)} = I$ as an operator on $V^{(n), \dressing}_{\alpha}(r)$. Moreover, the latter has a reproducing kernel representation that for all $f(z) \in V^{(n), \dressing}_{\alpha}(r)$ and $z \in D^*(0, r) \setminus \{ \arg z = \pm \frac{\theta^{-1} \pi + \gamma}{1 + \theta^{-1}} \}$
\begin{equation}
  f(z) = P^{(0)}(Q^{(0)}(f))(z) = \frac{1}{2\pi i} \oint_{\lvert w \rvert = r'} f(w) \sum^{\infty}_{k = 0} G^{(\ell)}(z) H^{(-\ell)}(w) \frac{dw}{w},
\end{equation}
where $r' \in (\lvert z \rvert, r)$.

\begin{rmk}
  The operators $P^{(0)}$ and $Q^{(0)}$ defined above are generalizations of the operator $P^{(0)}$ defined in \cite[Section 3.6]{Wang-Zhang21} and its inverse, respectively. We note that when $\theta$ is a rational number, then $P^{(0)}$ and $Q^{(0)}$ degenerate into finite rank operators by \eqref{eq:P^(0)_pre_rational} and \eqref{eq:Q^(0)_pre_rational}, and can be expressed in a matrix form like \cite[Equation (3.127)]{Wang-Zhang21}.
\end{rmk}

\subsection{Final transform to $R$ and $\tR$}

Let
\begin{equation}\label{def:sigmaR}
 \Sigma^{R}:=[0,b]\cup [b+\rb, \infty) \cup \partial D(0,r^{1 + \theta^{-1}}_n) \cup \partial D(b,\rb) \cup \Sigma^R_1 \cup \Sigma^R_2,
\end{equation}
where $\Sigma^R_i = \Sigma_i^R(r^{1 + \theta^{-1}}_n)$ ($i = 1, 2$) are defined in \eqref{def:SigmaiR}. see Figure \ref{fig:Sigma_R} for an illustration and the orientation of the arcs.

Let $U \in V^{(n), \dressing}_{\alpha}(10 r_n)$ be the function defined in \eqref{eq:defn_U_from_Q}. We define $V(z) \in H(10 r_n)$ by
\begin{equation}
  V(z) = Q^{(0)}(U)(z), \quad z \in D(0, 10 r_n).
\end{equation}
Then let
 \begin{equation} \label{eq:R_1_R_2_around_0}
   R := (R_1, R_2) =
   \begin{cases}
     (R^{\pre}_1, R^{\pre}_2), & \text{$R_1(z)$ on $\compC \setminus (D(0,r^{1 + \theta^{-1}}_n) \cup \Sigma^R$}) \\
                               & \text{and $R_2(z)$ on $\halfH \setminus (D(0,r^{1 + \theta^{-1}}_n) \cup \Sigma^R)$}, \\
     \transform^{-1}(V), & \text{$R_1(z)$ on $D^*(0, r^{1 + \theta^{-1}}_n) \setminus \realR_+$}  \\
                               & \text{and $R_2(z)$ on $\halfH \cap D^*(0, r^{1 + \theta^{-1}}_n) \setminus \realR_+$}.
   \end{cases}
 \end{equation}

Next, we denote, for $i = 1, 2$,
\begin{align} \label{eq:defn_SigmaR^i}
  \SigmaR^{(1)}_i := {}& I_i(\Sigma^R_1 \cup \Sigma^R_2), & \SigmaR^{(2)}_i := {}& I_i((b + \rb, +\infty)), & \SigmaR^{(3)}_i := {}& I_i(\partial D(b, \rb)),
\end{align}
and then define
\begin{equation}
  \SigmaR' = \SigmaR^{(1)}_1 \cup \SigmaR^{(1)}_2 \cup \SigmaR^{(2)}_1 \cup \SigmaR^{(2)}_2 \cup \SigmaR^{(3)}_1 \cup \SigmaR^{(3)}_2.
\end{equation}
We also denote, for a small $r > 0$, the contour that encircles $-1$
\begin{equation}
  C^R(r) = C^R_1(r) \cup C^R_2(r) \quad \text{where} \quad C^R_1(r) = I_1(\partial D(0, r^{1 + \theta^{-1}})), \quad C^R_2(r) = I_2(\halfH \cap \partial D(0, r^{1 + \theta^{-1}})),
\end{equation}
and denote the region encircled by $C^R(r)$ as $D^R(r)$. At last, we define
\begin{equation} \label{eq:defn_SigmaR}
  \SigmaR(r) = \SigmaR' \cup  C^R(r), \quad \text{and} \quad \SigmaR = \SigmaR(r_n).
\end{equation}
See Figure \ref{fig:jump_R_scalar} for illustration.

\begin{figure}[htb]
  \begin{minipage}[t]{0.45\linewidth}
    \centering
    \includegraphics{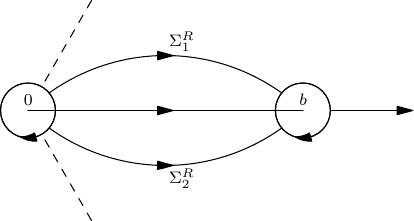}
    \caption[Contour $\Sigma^{R}$]{Schematic contour $\Sigma^{R}$.}
    \label{fig:Sigma_R}
  \end{minipage}
  \hspace{\stretch{1}}
  \begin{minipage}[t]{0.45\linewidth}
    \centering
    \includegraphics{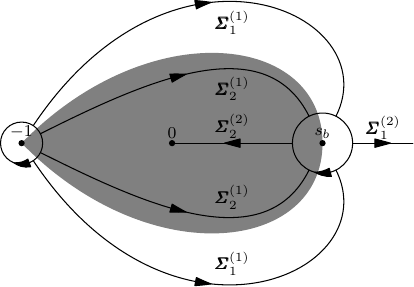}
    \caption[Contour $\SigmaR$]{Schematic contour $\SigmaR$. The circle around $s_b$ is $\SigmaR^{(3)}_1 \cup \SigmaR^{(3)}_2$, and the circle around $-1$ is $C^R(r_n)$. (The shaded region is $D$.)}
    \label{fig:jump_R_scalar}
  \end{minipage}
\end{figure}

Next, we define a transform $\transJ$ from a pair of functions $X = (X_1, X_2)$ defined on $(\compC, \halfH)$ to a function on $\compC \setminus (\gamma \cup [-1, 0])$, such that
\begin{equation} \label{eq:defn_transJ}
  \transJ(X)(s) =
  \begin{cases}
    X_1(J_c(s)), & s \in \compC \setminus \overline{D}, \\
    X_2(J_c(s)), & s \in D \setminus [-1, 0].
  \end{cases}
\end{equation}
For example, we consider two applications of $\transJ$: From $(X^{\id}_1, X^{\id}_2)$ defined in \eqref{eq:defn_X^id} and from $R = (R_1, R_2)$ defined in \eqref{eq:R_1_R_2_around_0}, and denote
\begin{align} \label{eq:tR_in_R_1_R_2}
  f_0(s) = {}& \transJ(\transform^{-1}(\id))(s) = \transJ(X^{\id}_1, X^{\id}_2)(s) = c^{\frac{\theta}{\theta + 1}} (s + 1) (-s)^{-\frac{1}{\theta + 1}}, && \text{and} & \tR = {}& \transJ(R_1, R_2),
\end{align}
where the $(-s)^{-1/(\theta + 1)}$ term takes the principal branch and $f_0(s)$ is analytic on $\compC \subseteq [0, +\infty)$, and $\tR$ is defined on on $\compC \setminus \SigmaR$. Furthermore, suppose $r$ is small enough, we have that the mapping $s \to f_0(s)$ is conformal from $D^R(r)$ to $D(0, r)$, and if $X = (X_1, X_2)$ satisfies $\transform(X)(s) = \varphi(s)$ for $s \in D(0, r)$, then $\transJ(X)(s)$ is analytic in $D^R(r)$ and is given by
\begin{equation}
  \transJ(X)(s) = \varphi(f_0(s)).
\end{equation}
Hence, $\tR(s) = V(f_0(s))$ for $s \in D^R(10 r_n)$.

We have that $\tR(s)$ is continuous and bounded on up to boundary $\SigmaR'$, since $R^{\pre}_1(s)$ and $R^{\pre}_2(s)$ are continuous and bounded on $\Sigma^R_1 \cup \Sigma^R_2 \cup [b + r^{(b)}, \infty) \cup \partial D(0, r^{1 + \theta^{-1}}_n)$. for $s \in C^R(r_n)$, $\tR_+(s)$ and $\tR_-(s)$ are continuous and bounded, since they are equal to $\transJ(R^{\pre})(s)$ and $V(f_0(s))$ respectively. 

We denote the function $f^R(s)$ for $s \in \SigmaR$ as
\begin{equation} \label{eq:defn_f^R}
  f^R(s) =
  \begin{cases}
    \tR_+(s) - \tR_-(s), & s \in \SigmaR \text{ and $s$ is not an intersection point}, \\
    0, & \text{otherwise}.
  \end{cases}
\end{equation}
By the property of $\tR$ discussed above, $f^R(s)$ is bounded and continuous on $\SigmaR$ except for the intersection points. Hence
\begin{equation} \label{eq:tR_minus_Cauchy_trans}
  \tR(s) - \frac{1}{2\pi i} \int_{w \in \SigmaR} f^R(w) \frac{dw}{w - s}
\end{equation}
has trivial discontinuity on $\SigmaR$ and can be extended to a holomorphic function on $\compC$. By the limit behaviour of $R_1(s) = R^{\pre}_1(s)$ as $s \to \infty$ given in Item \ref{enu:RHP:Svar:3} of RH problem \ref{RHP:Svar}, we find the holomorphic function in \eqref{eq:tR_minus_Cauchy_trans} converges to $1$ as $s \to \infty$, and it is the constant function $1$ by Liouville's theorem. Thus, we have
\begin{equation} \label{eq:R_in_f^R}
  \tR(s) = 1 + \frac{1}{2\pi i} \int_{w \in \SigmaR} f^R(w) \frac{dw}{w - s}.
\end{equation}

We also denote the function $g^R(s)$ for $s \in \SigmaR(2 r_n)$ as
\begin{equation} \label{defn_g^R}
  g^R(s) =
  \begin{cases}
    \tR_-(s), & s \in \SigmaR' \text{ and $s$ is not an intersection point}, \\
    \tR(s), & s \in C^R(2 r_n) \text{ and $s$ is not an intersection point}, \\
    0, & \text{otherwise}.
  \end{cases}
\end{equation}
Similar to $f^R(s)$, we also have that $g^R(s)$ is bounded on $\SigmaR(2 r_n)$, and is continuous except for the intersection points.

We define a transform $\Delta'$ that acts on functions defined on $\SigmaR'$. Let $f(s)$ be a function in $s \in \SigmaR'$, then
\begin{equation} \label{eq:defn_Delta_SigmaR}
  (\Delta' f)(s) =
  \begin{cases}
    J_{\SigmaR^{(1)}_1}(s) f(\s), & s \in \SigmaR^{(1)}_1 \text{ and } \s = I_2(J_c(s)) \in \SigmaR^{(1)}_2, \\
    J_{\SigmaR^{(2)}_2}(s) f(\s), & s \in \SigmaR^{(2)}_2 \text{ and } \s = I_1(J_c(s)) \in \SigmaR^{(2)}_1, \\
    J^1_{\SigmaR^{(3)}_1}(s) f(s) + J^2_{\SigmaR^{(3)}_2}(s) f(\s), & s \in \SigmaR^{(3)}_1 \text{ and } \s = I_2(J_c(s)) \in \SigmaR^{(3)}_2, \\
    J^1_{\SigmaR^{(3)}_2}(s) f(s) + J^2_{\SigmaR^{(3)}_1}(s) f(\s), & s \in \SigmaR^{(3)}_2 \text{ and } \s = I_1(J_c(s)) \in \SigmaR^{(3)}_1, \\
    0, & \text{$s \in \SigmaR^{(1)}_2 \cup \SigmaR^{(2)}_1$ or $s$ is an intersection point},
  \end{cases}
\end{equation}
where all contours do not contain intersection points, and
\begin{align}
  J_{\SigmaR^{(1)}_1}(s) = {}& (J_Q)_{21}(z), & s \in {}& \SigmaR^{(1)}_1 \text{ and $z = J_c(s) \in \Sigma^R_1 \cup \Sigma^R_2$}, \label{def:Jsigma1} \\
  J_{\SigmaR^{(2)}_2}(s) = {}& (J_Q)_{12}(z), & s \in {}& \SigmaR^{(2)}_2 \text{ and $z = J_c(s) \in (b + \rb, +\infty)$}, \label{def:Jsigma2'} \\
  J^{1}_{\SigmaR^{(3)}_1}(s) = {}& P^{(b)}_{11}(z) - 1, \quad J^{2}_{\SigmaR^{(3)}_1}(s) = P^{(b)}_{21}(z), & s \in {}& \SigmaR^{(3)}_1 \text{ and $z = J_c(s) \in \partial D(b,\rb)$}, \label{def:Jsigma3} \\
  J^{1}_{\SigmaR^{(3)}_2}(s) = {}& P^{(b)}_{22}(z) - 1, \quad J^{2}_{\SigmaR^{(3)}_2}(s) = P^{(b)}_{12}(z), & s \in {}& \SigmaR^{(3)}_2 \text{ and $z = J_c(s) \in \partial D(b,\rb)$}.
  \label{def:Jsigma32}
\end{align}
We also define a transform $\Delta_{2 r_n, r_n}$ that maps a function $f$ on $C^R(2 r_n)$ to a function $\Delta_{2 r_n, r_n} f$ on $C^R(r_n)$, such that, with $f_0$ defined in \eqref{eq:tR_in_R_1_R_2}, for $s \in C^R(r_n)$,
\begin{equation} \label{eq:defn_Delta_2rn_rn}
  (\Delta_{2 r_n, r_n}f)(s) = \frac{1}{2\pi i} \oint_{\lvert w \rvert = 2 r_n} f(f^{-1}_0(w)) \sum^{\infty}_{\ell = 0} H^{(-\ell)}(w) \left( G^{(\ell)}(f_0(s)) - (f_0(s))^{\ell} \right) \frac{dw}{w}.
\end{equation}
Then for a function $f$ defined on $\SigmaR(2r_n)$, we define the transform $\Delta$ maps $f$ to a function $\Delta(f)$ on $\SigmaR = \SigmaR(r_n)$, as 
\begin{equation} \label{eq:defn_Delta} 
  \Delta(f)(s) =
  \begin{cases}
    (\Delta' f)(s), & s \in \SigmaR' \text{ and $s$ is not an intersection point}, \\
    (\Delta_{2r_n, r_n} f)(s), & s \in C^R(r_n) \text{ and $s$ is not an intersection point}, \\
    0, & \text{otherwise}.
  \end{cases}
\end{equation}
For a function $g$ defined on $\SigmaR$, the transform $\Cauchy$ maps $g$ to a function defined on $\SigmaR(2r_n)$ as
\begin{multline} \label{eq:defn_Cauchy}
  \Cauchy(g)(s) = \\
  \begin{cases}
    \frac{1}{2\pi i} \int_{\SigmaR} g(w) \frac{dw}{w - s}, & s \in C^R(2r_n) \text{ and $s$ is not an intersection point}, \\
    \lim_{s' \to s \text{ from $-$ side}} \frac{1}{2\pi i} \int_{\SigmaR} g(w) \frac{dw}{w - s'}, & s \in \SigmaR' \text{ and $s$ is not an intersection point}, \\
    0, & \text{otherwise}.
  \end{cases}
\end{multline}
Although we have not specify the domain of the transforms $\Delta$ and $\Cauchy$, it is clear that they are well defined on $f^R$ and $g^R$ respectively, and $\Delta(1)$ is also well defined, where $1$ stands for the constant function on $\SigmaR(2r_n)$. Moreover, we have the identities
\begin{align}
  \Delta(g^R) = {}& f^R, & \Cauchy(f^R) + 1 = {}& g^R, & \Delta \Cauchy (f^R) + \Delta(1) = {}& f^R.
\end{align}

Now we put $\Delta$ and $\Cauchy$ into appropriate function spaces. We define the $L^2$ spaces $L^2(\SigmaR')$,  $L^2(C^R(r))$, $L^2(\SigmaR)$ and $L^2(\SigmaR(r))$, which consist of functions on $\SigmaR', C^R(r), \SigmaR = \SigmaR(r_n)$ and $\SigmaR(r)$ respectively, with the inner products defined by $\lvert ds \rvert$ means the arc length integral)
\begin{align} \label{eq:defn_inner_prod}
  \langle f, g \rangle_{L^2(\star)} = {}& \oint_{\star} f(s) \overline{g(s)} \lvert ds \rvert, & \star = {}& \SigmaR', C^R(r), \SigmaR, \SigmaR(r).
\end{align}
As an example, we see that $f^R \in L^2(\SigmaR)$, since $f^R(s)$ is bounded on $\SigmaR$ and continuous there, except for the intersection points, and $f^R(s)$ vanishes fast as $s \to \infty$.

Below we give the estimates of the norms of operators and functions in the $L^2$ spaces. Unless stated otherwise, all estimates are uniform for all large enough $n$. By the estimates \eqref{eq:est_JQ_12}, \eqref{eq:est_JQ_21}, \eqref{eq:est_Pb_row1} and \eqref{eq:est_Pb_row2}, we have that the transform $\Delta'$ defined in \eqref{eq:defn_Delta_SigmaR} is a bounded operator on $L^2(\SigmaR')$, and
\begin{equation} \label{eq:est_Delta'_norm}
  \lVert \Delta' \rVert_{L^2(\SigmaR')} = \bigO(n^{-1}).
\end{equation}
Also by the estimates \eqref{eq:G^ell_at_0:5}, we have that the transform $\Delta_{2 r_n, r_n}$ is a bounded operator from $L^2(C^R(2 r_n))$ to $L^2(C^R(r))$, and for any $g \in L^2(C^R(2r_n))$ with $\lVert g \rVert_{L^2(C^R(2r_n))} \neq 0$,
\begin{equation}
  \frac{\lVert \Delta_{2r_n, r_n} g \rVert_{L^2(C^R(r_n))}}{\lVert g \rVert_{L^2(C^R(2r_n))}} = \bigO(n^{\frac{1 - m_{\theta}}{1 + m_{\theta}}}).
\end{equation}
From the estimates of the operator norms of $\Delta'$ and $\Delta_{2r_n, r_n}$ above, we have that the transform $\Delta$ defined in \eqref{eq:defn_Delta} is a bounded operator from $L^2(\SigmaR(2 r_n))$ to $L^2(\SigmaR)$, and for any $f \in L^2(\SigmaR(2r_n))$ with $\lVert f \rVert_{L^2(\SigmaR(2r_n))} \neq 0$
\begin{equation}
  \frac{\lVert \Delta(f) \rVert_{L^2(\SigmaR)}}{\lVert f \rVert_{L^2(\SigmaR(2r_n))}} = \bigO(n^{\frac{1 - m_{\theta}}{m_{1 - \theta}}}).
\end{equation}
We note that although the constant function $1$ on $\SigmaR(2r_n)$ is not in $L^2(\SigmaR')$ or $L^2(\SigmaR(2r_n))$, $\Delta(1) \in L^2(\SigmaR)$ and
\begin{equation} \label{eq:est_Delta(1)}
  \lVert \Delta(1) \rVert_{L^2(\SigmaR)} = \lVert \Delta(1) \rVert_{L^2(\SigmaR')} + \lVert \Delta(1) \rVert_{L^2(C^R(r_n))} = \bigO(n^{-1}) + \bigO(n^{\frac{1 - m_{\theta}}{1 + m_{\theta}}} \cdot \sqrt{2 r_n}) = \bigO(n^{\frac{-m_{\theta}}{m_{\theta} + 1}}).
\end{equation}
By standard properties of Cauchy transform, we also have that the transform $\Cauchy$ defined in \eqref{eq:defn_Cauchy} is a bounded operator from $L^2(\SigmaR)$ to $L^2(\SigmaR(2r_n))$, and for any $g \in L^2(\SigmaR)$ with $\lVert g \rVert_{L^2(\SigmaR)} \neq 0$
\begin{equation} \label{eq:est_Cauchy_norm}
  \frac{\lVert \Cauchy(g) \rVert_{L^2(\SigmaR(2r_n))}}{\lVert g \rVert_{L^2(\SigmaR)}} = \bigO(1).
\end{equation}

We conclude this subsection by the following lemma:
\begin{lemma} \label{lem:tR_est}
  There exists $C > 0$, such that for all large enough $n$,
  \begin{enumerate*}[label=(\roman*)]
  \item \label{enu:lem:tR_est:1}
    $\lvert \tR(s) - 1 \rvert < C n^{\frac{1 - m_{\theta}}{1 + m_{\theta}}}$ for $s \in D^R(r_n/2)$, and
  \item \label{enu:lem:tR_est:2}
    $\lim_{s \to 0} \lvert \tR(s) - 1 \rvert < C n^{\frac{-m_{\theta}}{m_{\theta} + 1}}$.
  \end{enumerate*}
\end{lemma}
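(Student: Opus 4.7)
\medskip

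\noindent\textbf{Proof proposal for Lemma \ref{lem:tR_est}.}

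The plan is to use the Cauchy integral representation \eqref{eq:R_in_f^R} together with an $L^2$-bound on the jump $f^R$ obtained from the operator identity $f^R = \Delta(1) + \Delta\Cauchy(f^R)$. The first step is to solve this identity by Neumann series. Combining the norm estimates $\lVert \Delta \rVert_{L^2(\SigmaR(2r_n))\to L^2(\SigmaR)} = \bigO(n^{(1-m_\theta)/(1+m_\theta)})$ and $\lVert \Cauchy \rVert_{L^2(\SigmaR)\to L^2(\SigmaR(2r_n))} = \bigO(1)$, one finds $\lVert \Delta\Cauchy \rVert_{L^2(\SigmaR)\to L^2(\SigmaR)} = \bigO(n^{(1-m_\theta)/(1+m_\theta)})$. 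Since $m_\theta = \min(1+\theta^{-1},2) > 1$, this operator norm tends to zero, so $(I-\Delta\Cauchy)^{-1}$ exists and is bounded on $L^2(\SigmaR)$ by $1+o(1)$ for large $n$. Then $f^R = (I-\Delta\Cauchy)^{-1}\Delta(1)$, and using \eqref{eq:est_Delta(1)} yields
\begin{equation*}
  \lVert f^R \rVert_{L^2(\SigmaR)} \leq (1+o(1))\, \lVert \Delta(1) \rVert_{L^2(\SigmaR)} = \bigO\!\left(n^{-m_\theta/(m_\theta+1)}\right).
\end{equation*}

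For part \ref{enu:lem:tR_est:1}, I will split the Cauchy integral in \eqref{eq:R_in_f^R} into contributions from $C^R(r_n)$ and from $\SigmaR'$ and use different distance estimates on each. Since $f_0$ is conformal near $-1$ with $f_0(-1)=0$, for $s\in D^R(r_n/2)$ and $w\in C^R(r_n)$ one has $\lvert w-s\rvert \gtrsim \lvert f_0(w)-f_0(s)\rvert \gtrsim r_n$, while for $w\in\SigmaR'$ the distance $\lvert w-s\rvert$ is bounded below by an absolute constant. Applying Cauchy--Schwarz to the integral over $C^R(r_n)$ and using that its arc length is $\bigO(r_n)$ gives
\begin{equation*}
  \left\lvert \int_{C^R(r_n)} \frac{f^R(w)}{w-s}\, dw \right\rvert
  \lesssim r_n^{-1}\cdot \lVert f^R\rVert_{L^2(C^R(r_n))} \cdot r_n^{1/2}
  = \bigO\!\left(r_n^{-1/2} n^{-m_\theta/(m_\theta+1)}\right) = \bigO\!\left(n^{(1-m_\theta)/(m_\theta+1)}\right),
\end{equation*}
using $r_n = n^{-2/(m_\theta+1)}$. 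For the $\SigmaR'$ part, Cauchy--Schwarz with the bounded integrand $\int_{\SigmaR'}|w-s|^{-2}\,\lvert dw\rvert = \bigO(1)$ (finite because the unbounded arcs in $\SigmaR'$ escape to infinity) gives $\bigO(\lVert f^R\rVert_{L^2(\SigmaR')}) = \bigO(n^{-m_\theta/(m_\theta+1)})$, which is a strictly smaller error. Summing the two estimates yields part \ref{enu:lem:tR_est:1}.

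For part \ref{enu:lem:tR_est:2}, observe that $\SigmaR$ lies uniformly away from $s=0$, so I can expand the Cauchy kernel: for $s$ close to $0$, $\frac{1}{w-s} = \frac{1}{w} + \bigO\!\left(\frac{\lvert s\rvert}{\lvert w\rvert^2}\right)$ uniformly in $w\in\SigmaR$. Taking $s\to 0$ in \eqref{eq:R_in_f^R} therefore gives
\begin{equation*}
  \lim_{s\to 0}\bigl(\tR(s)-1\bigr) = \frac{1}{2\pi i}\int_{\SigmaR} \frac{f^R(w)}{w}\,dw.
\end{equation*}
By Cauchy--Schwarz, this is bounded by $\lVert f^R\rVert_{L^2(\SigmaR)} \cdot \bigl(\int_{\SigmaR}\lvert w\rvert^{-2}\,\lvert dw\rvert\bigr)^{1/2}$, and the second factor is $\bigO(1)$ since $\lvert w\rvert$ is bounded below on $\SigmaR$ and the infinite arcs contribute a convergent integral. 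Hence the limit is $\bigO(n^{-m_\theta/(m_\theta+1)})$, which is part \ref{enu:lem:tR_est:2}.

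The main technical obstacle to this strategy is the balance between the smallness of $\Delta(1)$ and the mildly growing norm of $\Delta$ coming from the local parametrix error on $C^R(r_n)$; that this product still tends to zero relies crucially on the fact $m_\theta>1$, which is where the one-cut regularity at the hard edge enters. Apart from this, the argument is standard small-norm Riemann--Hilbert analysis.
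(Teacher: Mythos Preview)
Your overall strategy---solving $f^R=(I-\Delta\Cauchy)^{-1}\Delta(1)$ in $L^2(\SigmaR)$ to get $\lVert f^R\rVert_{L^2(\SigmaR)}=\bigO(n^{-m_\theta/(m_\theta+1)})$ and then plugging this into the Cauchy representation \eqref{eq:R_in_f^R}---is exactly the paper's approach, and the $L^2$ bound on $f^R$ is obtained correctly.

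There is, however, a genuine gap in your treatment of part \ref{enu:lem:tR_est:2}. The contour $\SigmaR$ does \emph{not} lie uniformly away from $0$: the arc $\SigmaR^{(2)}_2=I_2\bigl((b+\rb,\infty)\bigr)$ stays inside the bounded region $D$ and accumulates at $s=0$ as its preimage tends to $+\infty$ (since $J_c(s)\sim c\,s^{-1/\theta}$ near $s=0$, one has $I_2(z)\sim (c/z)^\theta\to 0$). Parametrising $w=I_2(z)$ gives $\lvert w\rvert^{-2}\lvert dw\rvert\asymp z^{\theta-1}\,dz$, so $\int_{\SigmaR^{(2)}_2}\lvert w\rvert^{-2}\lvert dw\rvert$ diverges and your Cauchy--Schwarz bound collapses. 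This is precisely why the paper splits off $\SigmaR^{(2)}_2$ and estimates that piece pointwise: on $\SigmaR^{(2)}_2$ one has $f^R(w)=J_{\SigmaR^{(2)}_2}(w)\,g^R(\widetilde w)$ with $J_{\SigmaR^{(2)}_2}(w)=(J_Q)_{12}(J_c(w))$, which by \eqref{eq:est_JQ_12} and \eqref{eq:est_JQ_21_vanishing} is uniformly $<e^{-\epsilon n}$ and decays exponentially as $w\to 0$; combined with the boundedness of $g^R$ this controls $\int_{\SigmaR^{(2)}_2}\frac{f^R(w)}{w}\,dw$ directly. On the complement $\SigmaR\setminus\SigmaR^{(2)}_2$ your Cauchy--Schwarz argument is valid and yields the $\bigO(n^{-m_\theta/(m_\theta+1)})$ bound.

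A smaller inaccuracy in part \ref{enu:lem:tR_est:1}: the arcs $\SigmaR^{(1)}_i$ emanate from $C^R(r_n)$, hence come within distance $\sim r_n$ (not an absolute constant) of any $s\in D^R(r_n/2)$. Consequently $\int_{\SigmaR'}\lvert w-s\rvert^{-2}\lvert dw\rvert\asymp r_n^{-1}$, not $\bigO(1)$. This does not hurt the final estimate, since the resulting contribution is still $\bigO\bigl(r_n^{-1/2}\cdot n^{-m_\theta/(m_\theta+1)}\bigr)=\bigO(n^{(1-m_\theta)/(m_\theta+1)})$, matching the $C^R(r_n)$ term; the paper in fact does not split at all here and simply computes $\bigl\lVert (w-s)^{-1}\bigr\rVert_{L^2(\SigmaR)}=\bigO(n^{1/(m_\theta+1)})$ over the whole contour.
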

\begin{proof}
  By the estimates of the operator norms of $\Delta$ and $\Cauchy$ above, we find that the equation
  \begin{equation}
    \Delta \Cauchy f + \Delta(1) = f
  \end{equation}
  has a unique solution $(1 - \Delta\Cauchy)^{-1}(\Delta(1))$ in $L^2(\SigmaR)$, and its norm
  \begin{equation}
    \lVert f \rVert_{L^2(\SigmaR)} = \lVert (1 - \Delta\Cauchy)^{-1}(\Delta(1)) \rVert_{L^2(\SigmaR)} \leq \lVert (1 - \Delta\Cauchy)^{-1} \rVert_{L^2(\SigmaR)} \lVert \Delta(1) \rVert_{L^2(\SigmaR)} = \bigO(n^{\frac{-m_{\theta} }{m_{\theta} + 1}}).
  \end{equation}
  On the other hand, since $f^R \in L^2(\SigmaR)$ and it satisfies this equation, Hence the solution $f^R$, and we have that $\lVert f^R \rVert_{L^2(\SigmaR)} = \bigO(n^{\frac{-m_{\theta} }{m_{\theta} + 1}})$.

  For $s \in D^R(r_n/2)$, we have, by \eqref{eq:R_in_f^R} and the Cauchy-Schwarz inequality, 
  \begin{equation}
    \lvert \tR(s) - 1 \rvert \leq \lVert f^R \rVert_{L^2(\SigmaR)} \lVert \frac{1}{2\pi(w - s)} \rVert_{L^2(\SigmaR)} = \bigO(n^{\frac{-m_{\theta}}{m_{\theta} + 1}}) \bigO(n^{\frac{1}{m_{\theta} + 1}}) = \bigO(n^{\frac{1 - m_{\theta}}{1 + m_{\theta}}}),
  \end{equation}
  where $1/(2\pi(w - s))$ is regarded as a function in $w \in \SigmaR$.

  As $s \to 0$, we consider the integral in \eqref{eq:R_in_f^R} on $\SigmaR^{(2)}_2$ and $\SigmaR \setminus \SigmaR^{(2)}_2$ separately, and have
  \begin{equation}
    \lvert \tR(s) - 1 \rvert \leq \frac{1}{2\pi} \int_{\SigmaR^{(2)}_2} \frac{J_{\SigmaR^{(2)}_2}(s')}{s' - s} g^R(\s') ds' + \lVert f^R \rVert_{L^2(\SigmaR)} \lVert \frac{1_{w \notin \Sigma^{(2)}_2}}{2\pi(w - s)} \rVert_{L^2(\SigmaR)},
  \end{equation}
  where $\s' = I_1(J_c(s')) \in \SigmaR^{(2)}_1$. We find that the first integral term is $\bigO(e^{-\epsilon r_n})$ for some $\epsilon > 0$ due to the estimates \eqref{eq:est_JQ_21} and \eqref{eq:est_JQ_21_vanishing} of $J_{\SigmaR^{(2)}_2}(s') = (J_Q)_{12}(z')$ where $z' = J_c(s')$ and the boundedness of $\lVert g^R(z') \rVert_{L^2(\SigmaR)}$, and the second product term is $\bigO(n^{\frac{-m_{\theta}}{m_{\theta} + 1}})$. Hence we prove the result.
\end{proof}

\section{Asymptotic analysis of the RH problem for $\Y$} \label{sec:asy_RH_Y_tilde}

We take a parallel approach as for the RH problem for $Y$ in Section \ref{sec:asy_RH_Y}, and omit analogous technical details.

\subsection{Transforms of RH problem for $\Y$ from \cite{Wang-Zhang21} that are valid for all $\theta > 0$} \label{sec:asy_RH_Y_away_0_tilde}

\paragraph{Transform from $\Y$ to $\Q$}

Let $\Sigma_1 \subseteq \compC_+ \cap \halfH$, $\Sigma_2 = \overline{\Sigma_1}$ be a contour from $0$ to $b$, as defined in Section \ref{sec:asy_RH_Y_away_0}, and $\Sigma$ be defined in \eqref{def:Sigma}. 

Analogous to \eqref{def:thirdtransform}, we have, as shown in \cite[Sections 4.1--4.4]{Wang-Zhang21}, the explicit and invertible transformations $\Y \to \T \to \S \to \Q$, we derive (see \cite[Equations (4.1), (4.5) and (4.20)]{Wang-Zhang21})
\begin{multline} \label{eq:Qtilde_from_Ytilde}
  \Q(z) = (\Q_1(z), \Q_2(z)) = (\Y_1(z) e^{-n \g(z)}, \Y_2(z) e^{n (g(z) - \ell)}) \\
  \times
  \begin{cases}
    \begin{pmatrix}
      \P^{(\infty)}_1(z) & 0 \\
      0 & \P^{(\infty)}_2(z)
    \end{pmatrix}^{-1}
    & \text{$z$ outside the lens}, \\
    \begin{pmatrix}
      1 & 0 \\
      z^{-\alpha} e^{-n \phi(z)} & 1
    \end{pmatrix}
    \begin{pmatrix}
      \P^{(\infty)}_1(z) & 0 \\
      0 & \P^{(\infty)}_2(z)
    \end{pmatrix}^{-1},
    & \text{$z$ in the lower part of the lens}, \\
    \begin{pmatrix}
      1 & 0 \\
      -z^{-\alpha} e^{-n \phi(z)} & 1
    \end{pmatrix}
    \begin{pmatrix}
      \P^{(\infty)}_1(z) & 0 \\
      0 & \P^{(\infty)}_2(z)
    \end{pmatrix}^{-1},
    & \text{$z$ in the upper part of the lens}.
  \end{cases}
\end{multline}
Here $g(z), \g(z)$ and $\phi(z)$ are defined in \eqref{def:g}, \eqref{def:tildeg} and \eqref{def:phi}, and $\P^{(\infty)}_1(z)$ and $\P^{(\infty)}_2(z)$ are defined in \eqref{eq:P2} and \eqref{eq:P1}.

\paragraph{Local parametrix around $b$}

Let $r^{(b)}$ be the small positive constant the same as in Section \ref{sec:asy_RH_Y}, local conformal mapping $f_b(z)$ at $b$ be defined in \eqref{def:fb}, and $\P^{(\infty)}_1(z)$ and $\P^{(\infty)}_2(z)$ be defined in \eqref{eq:P2} and \eqref{eq:P1}. Analogous to \eqref{def:gib}, we define (\cite[Equation (4.26)]{Wang-Zhang21})
\begin{equation}\label{def:tildegib}
  \widetilde g^{(b)}_1(z) = \frac{z^{-\alpha /2}}{\widetilde P_1^{(\infty)}(z)}, \qquad \widetilde g^{(b)}_2(z) = \frac{z^{\alpha /2}}{\widetilde P_2^{(\infty)}(z)},
\end{equation}
and let, for $z \in D(0, \rb) \setminus \Sigma$, (\cite[Equation (4.27)]{Wang-Zhang21})
\begin{equation}\label{def:tildeEb}
\widetilde E^{(b)}(z)=\frac{1}{\sqrt{2}}
  \begin{pmatrix}
    \widetilde g^{(b)}_1(z) & 0  \\
    0 & \widetilde g^{(b)}_2(z)
  \end{pmatrix}^{-1}
  e^{\frac{\pi i}{4} \sigma_3}
  \begin{pmatrix}
    1 & -1 \\
    1 & 1
  \end{pmatrix}
  \begin{pmatrix}
    n^{\frac16} f_b(z)^{\frac14} & 0
    \\
    0 & n^{-\frac16} f_b(z)^{-\frac14}
  \end{pmatrix}.
\end{equation}
Analogous to $P^{(b)}(z)$ defined in \eqref{def:Pb}, local parametrix $\P^{(b)}(z)$ is the $2 \times 2$ matrix-valued function (\cite[Equations (3.38) and (4.27)]{Wang-Zhang21})
\begin{equation}\label{def:tildePb}
 \Pb (z) :=\widetilde E^{(b)}(z)
  \Psi^{(\Ai)}(n^{\frac23}f_b(z))
  \begin{pmatrix}
    e^{-\frac{n}{2} \phi(z)} \widetilde g^{(b)}_1(z) & 0
    \\
    0 & e^{\frac{n}{2} \phi(z)} \widetilde g^{(b)}_2(z)
  \end{pmatrix},  \quad
  z \in D(b, \rb) \setminus \Sigma,
\end{equation}
where $\Psi^{(\Ai)}$ is the Airy parametrix defined in \eqref{eq:defn_Psi_Airy}. Analogous to Proposition \ref{prop:Pb_asy}, we have the following properties of $\P^{(b)}(z)$ (\cite[RH problem 4.8(3), (4)]{Wang-Zhang21}):
\begin{prop}
  As $z \to b$, $(\P^{(b)}(z))_{ij} = \bigO((z - b)^{-1/4})$ and $(\P^{(b)}(z)^{-1})_{ij} = \bigO((z - b)^{-1/4})$, for $i, j = 1, 2$. For $z$ on the boundary $\partial D(b,\rb)$ (except for the intersecting points with $\Sigma$), we have, as $n \to \infty$, $\P^{(b)}(z) = I + \bigO(n^{-1})$ uniformly.
\end{prop}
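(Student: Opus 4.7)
The plan is to follow the strategy of the (omitted) proof of Proposition \ref{prop:Pb_asy}, with the tilded counterparts $\P^{(\infty)}_i, \widetilde g^{(b)}_i, \widetilde E^{(b)}, \P^{(b)}$ in place of $P^{(\infty)}_i, g^{(b)}_i, E^{(b)}, P^{(b)}$. The derivation for the tilded integer-$\theta$ case is given in \cite[Section 4.5]{Wang-Zhang21}, and since it uses only the local properties of the ingredients near $b$, the same argument works verbatim for every $\theta > 0$.

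The first step is to verify that $\widetilde E^{(b)}(z)$ extends to an analytic function on $D(b,\rb)$. By \eqref{eq:Pinfty_asy_b}, each $\P^{(\infty)}_i(z)$ has a $(z-b)^{-1/4}$ branch singularity at $b$ arising from the $\sqrt{(s+1)(s-s_b)}$ factor in \eqref{eq:Fs} together with the conformality of $I_i$ at $s_b$. Combined with the analytic factor $z^{\mp \alpha/2}$, the functions $\widetilde g^{(b)}_i(z)$ of \eqref{def:tildegib} behave as $(z-b)^{1/4}$ times a nonvanishing analytic factor near $b$. Since $f_b$ is conformal at $b$ with $f_b'(b) > 0$, the factor $f_b(z)^{\pm 1/4}$ equals $(z-b)^{\pm 1/4}$ times a nonvanishing analytic piece. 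Inserting these into \eqref{def:tildeEb}, the fractional-power singularities cancel, and both $\widetilde E^{(b)}(z)$ and $\widetilde E^{(b)}(z)^{-1}$ extend analytically to $D(b,\rb)$ with bounded entries.

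For the local behaviour as $z \to b$, the Airy parametrix satisfies $(\Psi^{(\Ai)}(\zeta))_{ij} = \bigO(\zeta^{-1/4})$ and $(\Psi^{(\Ai)}(\zeta)^{-1})_{ij} = \bigO(\zeta^{-1/4})$ near $\zeta = 0$ (from the entire-function expansion of $\Ai$). Applied with $\zeta = n^{2/3} f_b(z) = \bigO(z-b)$, together with the boundedness of $\widetilde E^{(b)}$ and the analyticity of the diagonal exponential factor in \eqref{def:tildePb}, this gives $(\P^{(b)}(z))_{ij} = \bigO((z-b)^{-1/4})$, and similarly for $(\P^{(b)}(z)^{-1})_{ij}$. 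For the uniform estimate on $\partial D(b,\rb)$, I would substitute the large-$\zeta$ expansion
\begin{equation*}
  \Psi^{(\Ai)}(\zeta) = \zeta^{-\sigma_3/4} \frac{1}{\sqrt{2}} \begin{pmatrix} 1 & i \\ i & 1 \end{pmatrix} \bigl(I + \bigO(\zeta^{-3/2})\bigr) e^{-\frac{2}{3}\zeta^{3/2}\sigma_3}
\end{equation*}
with $\zeta = n^{2/3} f_b(z)$. By \eqref{def:fb}, $\tfrac{2}{3}\zeta^{3/2} = -\tfrac{n}{2}\phi(z)$, so the exponentials in \eqref{def:tildePb} cancel those from the Airy asymptotics; on $\partial D(b,\rb)$ one has $\lvert \zeta \rvert \asymp n^{2/3}$ so the error $\bigO(\zeta^{-3/2})$ becomes $\bigO(n^{-1})$.

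The only delicate point---essentially bookkeeping---is verifying that the constant prefactor $\frac{1}{\sqrt{2}} e^{\pi i \sigma_3/4} \bigl(\begin{smallmatrix} 1 & -1 \\ 1 & 1 \end{smallmatrix}\bigr)$ in \eqref{def:tildeEb} cancels against $\frac{1}{\sqrt{2}} \bigl(\begin{smallmatrix} 1 & i \\ i & 1 \end{smallmatrix}\bigr)$ from the Airy expansion, so that after combining with the $\zeta^{\pm 1/4}$ and $\widetilde g^{(b)}_i$ diagonal factors the product collapses to the identity at leading order. This is the same algebraic identity that underlies the untilded case in \cite[Section 3.5]{Wang-Zhang21}; once it is verified, the conclusion $\P^{(b)}(z) = I + \bigO(n^{-1})$ uniformly on $\partial D(b,\rb)$ follows, completing the proof.
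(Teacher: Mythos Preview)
Your approach is the same as the paper's, which simply declares the proof identical to that of Proposition~\ref{prop:Pb_asy} and defers to \cite[Sections 3.5 and 4.5]{Wang-Zhang21}; your sketch correctly reproduces that standard Airy-parametrix argument. Two small imprecisions are worth tightening, though neither is fatal. First, the analyticity of $\widetilde E^{(b)}$ does not follow from a naive cancellation of fractional powers---the off-diagonal entries in \eqref{def:tildeEb} are a priori $\bigO((z-b)^{-1/2})$; the actual mechanism is to check that $\widetilde E^{(b)}$ has no jump on $(b-\rb,b)$ (using the jump \eqref{eq:P^infty_jump} of $\Pinfty$ and the branch of $f_b^{1/4}$), after which the isolated singularity at $b$ is removable. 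Second, $\Psi^{(\Ai)}(\zeta)$ and its inverse are bounded (indeed entire) near $\zeta=0$, not genuinely $\bigO(\zeta^{-1/4})$; the $(z-b)^{-1/4}$ bound on $(\P^{(b)})^{-1}$ comes from the diagonal factor $\diag(\widetilde g^{(b)}_1,\widetilde g^{(b)}_2)^{-1}$ via \eqref{eq:Pinfty_asy_b}, not from the Airy block.
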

The proof is identical to that of Proposition \ref{prop:Pb_asy}, and is omitted here. Next, analogous to \eqref{eq:defn_V^(b)}, from $\Q(z)$ defined in \eqref{eq:Qtilde_from_Ytilde}, we define a vector-valued function $\V^{(b)}(z) = (\V^{(b)}_1(z), \V^{(b)}_1(z))$ by (\cite[Equation (4.29)]{Wang-Zhang21})
\begin{equation} \label{eq:defn_V^(b)_tilde}
    \Vb(z) = \widetilde Q(z) \Pb (z)^{-1}, \qquad z \in D(b, \rb) \setminus \Sigma.
\end{equation}

\paragraph{Definition and properties of $\R^{\pre}$}

Let $\Sigma^R_1, \Sigma^R_2, \Sigma^R_1(r), \Sigma^R_2(r)$ be defined as in \eqref{def:SigmaiR}, and $\Sigma^{\pre}$ be defined as in \eqref{def:sigmapre}. With $\Q = (\Q_1, \Q_2)$ defined in \eqref{eq:Qtilde_from_Ytilde} and $\V^{(b)} = (\V^{(b)}_1, \V^{(b)}_2)$ in \eqref{eq:defn_V^(b)_tilde}, we define the $1 \times 2$ array of functions $\R^{\pre} = (\R^{\pre}_1, \R^{\pre}_2)$ on $(\halfH \setminus \Sigma^{\pre}, \compC \setminus \Sigma^{\pre})$ by
\begin{equation} \label{eq:defn_R^pre_tilde}
  \begin{aligned}
    \R^{\pre}_1(z) = {}& \Q_1(z), && z \in \halfH \setminus (\Sigma^{\pre} \cup D(b, \rb)), \\
    \R^{\pre}_2(z) = {}& \Q_2(z), && z \in \compC \setminus (\Sigma^{\pre} \cup D(b, \rb)), \\
    (\R^{\pre}_1, \R^{\pre}_2) = {}& (\V^{(b)}_1, \V^{(b)}_2), && \text{$\R_1(z)$ and $\R_2(z)$ on $D^*(b, \rb) \setminus (b - \rb, b)$}.
  \end{aligned}
\end{equation}
Since $\R^{\pre}$ is transformed from $\Y$, it satisfies the following RH problem that is derived from RH problem \ref{RHP:original_q}, that is analogous to RH problem \ref{RHP:Svar}:

\begin{RHP} \label{RHP:Svartilde} \hfill
  \begin{enumerate}[label=\emph{(\arabic*)}, ref=(\arabic*)]
  \item \label{enu:RHP:Svartilde:1}
    $\R^{\pre} = (\R^{\pre}_1, \R^{\pre}_2)$ is analytic in $(\halfH \setminus \Sigma^{\pre}, \compC \setminus \Sigma^{\pre})$, and is continuous up to the boundary, except for $0$.
  \item \label{enu:RHP:Svartilde:2}
    For $z \in (0, b) \cup \Sigma^R_1(0) \cup \Sigma^R_2(0) \cup (b + \rb, +\infty)$, we have that $(R^{\pre}_1)_{\pm}(z)$ and $(R^{\pre}_2)_{\pm}(z)$ are bounded for $z$ away from $0$, and
    \begin{equation}
      \R^{\pre}_+(z) = \R^{\pre}_-(z) J_{\Q}(z),
    \end{equation}
    where (\cite[Equation (4.22)]{Wang-Zhang21})
    \begin{equation} \label{def:Jtildecals}
      J_{\Q}(z) =
      \begin{cases}
        \begin{pmatrix}
          1 & 0 \\
          \frac{\Pinfty_2(z)}{\Pinfty_1(z)}z^{-\alpha}e^{-n \phi(z)} & 1
        \end{pmatrix},
        & \text{$z \in \Sigma^R_1(0) \cup \Sigma^R_2(0)$}, \\
        \begin{pmatrix}
          0 & 1 \\
          1 & 0
        \end{pmatrix},
        & \text{$z \in (0, b)$}, \\
        \begin{pmatrix}
          1 & \frac{\Pinfty_1(z)}{\Pinfty_2(z)}z^{\alpha}e^{n \phi(z)}  \\
          0 & 1
        \end{pmatrix},
        & \text{ $z \in (b + \rb, +\infty)$}.
      \end{cases}
    \end{equation}
    and for $z \in \partial D(b, \rb)$,
    \begin{equation}
      \R^{\pre}_+(z) = \R^{\pre}_-(z) \P^{(b)}(z).
    \end{equation}
  \item \label{enu:RHP:Svartilde:3}
    As $z \to \infty$ in $\halfH$, $\R^{\pre}_1$ behaves as $\R^{\pre}_1(z)=1+\bigO(z^{-\theta})$.
  \item \label{enu:RHP:Svartilde:4}
    As $z \to \infty$ in $\compC$, $\R^{\pre}_2$ behaves as $\R^{\pre}_2(z)=\bigO(1)$.
  \item
    As $z \to 0$ in $\halfH \setminus \Sigma$, we have
    \begin{equation} \label{eq:tildeasy_Q_1_in_lens}
      \R^{\pre}_1(z) =
      \begin{cases}
        \bigO(z^{ \frac{\theta-2\alpha}{2(1+\theta)} }), & \text{$\alpha  > 0$ and $z$ inside the lens,} \\
        \bigO(z^{\frac{\theta/2}{1+\theta}}\log z), & \text{$\alpha  = 0$ and $z$ inside the lens,} \\
        \bigO(z^{\frac{(\alpha+1/2)\theta}{1+\theta}}), & \text{$z$ outside the lens or $-1 < \alpha < 0$.}
      \end{cases}
    \end{equation}
  \item
    As $z \to 0$ in $\compC \setminus \Sigma$,  we have
    \begin{equation} \label{eq:tildeasy_Q_2_in_lens}
     \R^{\pre}_2(z)=
      \begin{cases}
        \bigO(z^{\frac{\theta-2\alpha}{2(1+\theta)}}), & \alpha > 0,
       \\
        \bigO (  z^{\frac{\theta}{2(1+\theta)}} \log z ), & \alpha =0,
        \\
        \bigO (z^{\frac{\theta(1+2\alpha)}{2(1+\theta)}} ), & \alpha < 0.
      \end{cases}
    \end{equation}
  \item
    As $z \to b$, we have $\R^{\pre}_1(z) = \bigO(1)$ and $\R^{\pre}_2(z) = \bigO(1)$.
  \item
    For $x>0$, we have the boundary condition $\R^{\pre}_1(e^{\pi i/\theta}x) = \R^{\pre}_1(e^{-\pi i/\theta}x)$.
  \end{enumerate}
\end{RHP}
By the regularity assumption in Section \ref{subsec:regularity} and analogous to \eqref{eq:est_JQ_12}, \eqref{eq:est_JQ_21}, \eqref{eq:est_Pb_row1} and \eqref{eq:est_Pb_row2}, we have that that there exists $\epsilon > 0$ such that
\begin{align}
  \lvert (J_{\Q})_{12}(z) \rvert < {}& e^{-\epsilon n}, & z \in {}& (b + \rb, +\infty), \label{eq:est_JQtilde_12} \\
  \lvert (J_{\Q})_{21}(z) \rvert < {}& e^{-\epsilon n r_n}, & z \in {}& (\Sigma_1 \cup \Sigma_2) \setminus (D(0, r^{1 + \theta^{-1}}_n) \cup D(0, \rb)). \label{eq:est_JQtilde_21}
\end{align}
and for $z \in \partial D(b, \rb)$
\begin{align}
  \lvert (\P^{(b)})_{11}(z) - 1 \rvert = {}& \bigO(n^{-1}), & \lvert (\P^{(b)})_{12}(z) \rvert = {}& \bigO(n^{-1}), \label{eq:est_Ptildeb_row1} \\
  \lvert (\P^{(b)})_{21}(z) \rvert = {}& \bigO(n^{-1}), & \lvert (\P^{(b)})_{22}(z) - 1 \rvert = {}& \bigO(n^{-1}). \label{eq:est_Ptildeb_row2}
\end{align}
These estimates are uniform for all large enough $n$, and we omit the derivations.

Analogous to Proposition \ref{prop_uniqueness}, we have
\begin{prop} \label{prop_uniqueness_tilde}
  RH problem \ref{RHP:Svartilde} has a unique solution.
\end{prop}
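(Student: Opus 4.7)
The plan is to mirror the proof of Proposition \ref{prop_uniqueness}, exchanging the roles of the two components since in RH problem \ref{RHP:original_q} the polynomial lives in the first slot on $\halfH$ and the modified Cauchy transform lives in the second slot on $\compC$, reverse to RH problem \ref{RHP:original_p}. Existence is immediate: the biorthogonal polynomial $q_n$ exists and is unique by its interpretation as an average characteristic polynomial of the Muttalib-Borodin ensemble, so $\Y = (q_n(z^{\theta}), \widetilde{C} q_n(z))$ solves RH problem \ref{RHP:original_q}; pushing this through the invertible transforms $\Y \to \T \to \S \to \Q \to \R^{\pre}$ in \cite{Wang-Zhang21} yields a solution of RH problem \ref{RHP:Svartilde}.

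For uniqueness, the transform \eqref{eq:Qtilde_from_Ytilde} and \eqref{eq:defn_R^pre_tilde} are explicit and reversible, so it suffices to prove uniqueness in RH problem \ref{RHP:original_q}, with the boundary condition across $(0,\infty)$ weakened to continuous boundary values on $(0,b) \cup (b,+\infty)$ together with $\Y_1(z), \Y_2(z) = \bigO((z-b)^{-1/4})$ as $z \to b$. First I would analyze $\Y_1$: the jump in Item \ref{enu:RHP:Svartilde:2} of RH problem \ref{RHP:original_q} is upper triangular with trivial $(1,1)$ entry, so $\Y_1$ has no jump across $(0,+\infty)$ and extends analytically to $\halfH \setminus \{0, b\}$. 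Set $f(w) := \Y_1(w^{1/\theta})$ with principal branch on $\compC \setminus (-\infty, 0]$. The boundary identification $\Y_1(e^{\pi i/\theta} x) = \Y_1(e^{-\pi i/\theta}x)$ glues the two sides of $(-\infty,0)$, so $f$ extends analytically across the negative axis. The bounds $\Y_1(z) = \bigO(1)$ as $z \to 0$ and $\Y_1(z) = \bigO((z-b)^{-1/4})$ as $z\to b$ translate into removable singularities for $f$ at $0$ and $b^\theta$. Hence $f$ is entire, and the asymptotic condition in Item (3) of RH problem \ref{RHP:original_q} forces $f$ to be a monic polynomial of degree $n$; in particular $\Y_1(z) = f(z^\theta)$.

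Next consider $\Y_3(z) := \Y_2(z) - \widetilde{C} f(z)$. Since $\widetilde{C} f$ satisfies exactly the jump in \eqref{eq:jump_Y_realR}-type form with $\Y_1 = f(z^\theta)$ on $(0,\infty)$, the function $\Y_3$ has trivial jump across $(0,\infty)$ and so extends analytically to $\compC \setminus \{0, b\}$. The weakened boundary behavior at $b$ gives $\Y_3(z) = \bigO((z-b)^{-1/4})$, which is removable; the limit behavior at $0$ in Item (6) of RH problem \ref{RHP:original_q} (combined with the well-known $\bigO(\log z)$ or $\bigO(z^{\alpha})$ behavior of $\widetilde{C} f$ for $\alpha \in (-1, 0)$, $\bigO(\log z)$ for $\alpha = 0$, and $\bigO(1)$ for $\alpha > 0$) is likewise of at most integrable singularity type, hence removable. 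Therefore $\Y_3$ extends to an entire function. Because $\Y_2(z) = \bigO(z^{-(n+1)})$ and $\widetilde{C} f(z) \to 0$ as $z \to \infty$, Liouville's theorem gives $\Y_3 \equiv 0$, i.e.\ $\Y_2 = \widetilde{C} f$.

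Finally, expanding $\widetilde{C} f(z) = -\frac{1}{2\pi i} \sum_{k \geq 0} z^{-k-1} \int_0^\infty x^k f(x^\theta) x^\alpha e^{-nV(x)} dx$ and comparing with $\Y_2(z) = \bigO(z^{-(n+1)})$ forces the first $n$ moments $\int_0^\infty x^k f(x^\theta) x^\alpha e^{-nV(x)} dx$, $k = 0, \ldots, n-1$, to vanish. Since these moments against $x^k$ ($k = 0,\ldots,n-1$) are equivalent to moments against $p_0,\ldots,p_{n-1}$, the polynomial $f$ satisfies the same biorthogonality conditions as $q_n$, and the uniqueness of $q_n$ established in \cite[Theorem 1.4]{Claeys-Romano14} yields $f = q_n$. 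The main (very minor) obstacle is merely bookkeeping the limit exponents at $0$ in the three $\alpha$-regimes to confirm that the singularity of $\Y_3$ there is always removable; every other step is a direct transcription of the argument for Proposition \ref{prop_uniqueness} with the roles of $\compC$ and $\halfH$ swapped.
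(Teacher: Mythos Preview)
Your proposal is correct and follows precisely the route the paper intends: the paper omits the proof, stating only that it is analogous to that of Proposition~\ref{prop_uniqueness}, and what you have written is exactly that analogous argument with the roles of $\compC$ and $\halfH$ (and of the polynomial versus Cauchy-transform components) interchanged. The only cosmetic difference is that in the $Y$-case the paper passes to $Y_3(z) = Y_2(z^{1/\theta}) - CY_1(z^{1/\theta})$ to bring $Y_2$ from $\halfH$ down to $\compC$, whereas in your $\Y$-case this change of variable is unnecessary since $\Y_2$ already lives on $\compC$; you have handled this correctly.
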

The proof is omitted since it is analogous to that of Proposition \ref{prop_uniqueness}.

\subsection{Local parametrix around $0$}

\subsubsection{Transformation of $Q$ into function space $V^{(n), \dressing}_{\alpha}(r)$}
Let the transform $\transform$ be defined by \eqref{eq:defn_trans_T}. We define a function $\U(s)$ from $\Q = (\Q_1, \Q_2)$ in \eqref{eq:Qtilde_from_Ytilde}, analogous to $U(s)$ in \eqref{eq:defn_U_from_Q}, by
\begin{equation} \label{eq:defn_U_from_Q_tilde}
  \U = \transform(\Q_2, \Q_1).
\end{equation}
Throughout this paper, we only consider $U(s)$ with $\lvert s \rvert < 10 r_n$. Like $U(s)$, $\U(s)$ is well defined by analytic continuation on the rays $\{ \arg z = 0 \}$ and $\{ \arg s = \pm \pi/(\theta + 1) \}$, while it has jumps along the rays $\{ \arg s = \pm \frac{\theta^{-1} \pi - \gamma}{1 + \theta^{-1}} \}$, see Figure \ref{fig:U_tilde_in_Q_tilde}. Moreover, $\U(s)$ is continuous up to the boundary at the two rays, and satisfies

\begin{equation} \label{eq:U_jump_tilde}
  \U_+(z) - \U_-(z) =
  \begin{cases}
    J_{\U}(z) \U(z e^{\frac{2\pi}{1 + \theta^{-1}} i}), & \arg z = \frac{\theta^{-1} \pi - \gamma}{1 + \theta^{-1}}, \\
    J_{\U}(z) \U(z e^{-\frac{2\pi}{1 + \theta^{-1}} i}), & \arg z = \frac{-\theta^{-1}\pi + \gamma}{1 + \theta^{-1}}, 
  \end{cases}
\end{equation}
where, with $J_{\Q}$ defined in \eqref{def:Jtildecals},
\begin{equation} \label{def:JU_tilde}
  J_{\U}(z) =
  \begin{cases}
    (J_{\Q}(z^{1 + \theta^{-1}} e^{-\frac{\pi}{\theta} i}))_{21}, & \arg z = \frac{\theta^{-1} \pi - \gamma}{1 + \theta^{-1}}, \\
    (J_{\Q}(z^{1 + \theta^{-1}} e^{\frac{\pi}{\theta} i}))_{21}, & \arg z = \frac{-\theta^{-1}\pi + \gamma}{1 + \theta^{-1}}.
  \end{cases}
\end{equation}
The limit behaviour of $\U(z)$ as $z \to 0$ can be derived from that of $\Q = \R^{\pre}$ in \eqref{eq:tildeasy_Q_1_in_lens} and \eqref{eq:tildeasy_Q_2_in_lens}.

Analogous to \eqref{eq:defn_n(z)}, we define
\begin{align} \label{eq:defn_n(z)_tilde}
  \n_1(z) = {}& \frac{\P^{(\infty), \pre}_1(z)}{\P^{(\infty)}_1(z)} \frac{e^{n(-\g(z) + \g_+(0))}}{e^{-n \g^{\pre}(z)}}, & \n_2(z) = \frac{\P^{(\infty), \pre}_2(z)}{\P^{(\infty)}_2(z)} \frac{e^{n(g(z) - V(z) - \ell + \g_-(0))}}{e^{n g^{\pre}(z)}}.
\end{align}
We have $\n_1(z)$ is well defined on $(D^*(0, b) \cap \halfH) \setminus \realR_+$ and $\n_2(z)$ is well defined on $D^*(0, b) \setminus \realR_+$. By \eqref{eq:P^infty_jump} and \eqref{eq:gequal}, we have, like \eqref{eq:m_n},
\begin{align} \label{eq:m_n_tilde}
  (\n_1)_+(x) = {}& (\n_2)_-(x), & (\n_2)_-(x) = {}& (\n_1)_+(x), & \n_1(x e^{\pi i/\theta}) = {}& \n_1(x e^{-\pi i/\theta}), & \text{for $x \in (0, b)$}. 
\end{align}
Then we define the function $\n(z)$ on $D^*(0, b^{\theta/(\theta + 1)}) \setminus \{ \arg z = 0, \pm \frac{\theta^{-1} \pi}{1 + \theta^{-1}} \}$, analogous to $n(z)$ defined in \eqref{eq:func_n}, by
\begin{equation}
  \n(z) =
  \begin{cases}
    \n_1(z^{1 + \theta^{-1}} e^{\frac{\pi i}{\theta}}), & \arg z \in (\frac{-\theta^{-1} \pi}{1 + \theta^{-1}}, 0), \\
    \n_1(z^{1 + \theta^{-1}} e^{-\frac{\pi i}{\theta}}), & \arg z \in (0, \frac{\theta^{-1} \pi}{1 + \theta^{-1}}), \\
    \n_2(-(-z)^{1 + \theta^{-1}}), & \arg (-z) \in (-\frac{\pi}{1 + \theta^{-1}}, \frac{\pi}{1 + \theta^{-1}}).
  \end{cases}
\end{equation}
Like $n(z)$, $\n(z)$ is naturally extended to be analytic on $D^*(0, b^{\theta/(\theta + 1)})$, and like \eqref{eq:est_n(z)},
\begin{equation}
  \n(z) - 1 =
  \begin{cases}
    \bigO(n^{1 - m_{\theta}}), & \lvert z \rvert \in [1, C], \\
    \bigO(n^{\frac{1 - m_{\theta}}{1 + m_{\theta}}}), & \lvert z \rvert \in [10^{-1} r_n, 10 r_n],
  \end{cases}
\end{equation}
where $C > 1$ is an arbitrary constant, and the estimates are uniform for all large enough $n$.

Let $f(z)$ be a function whose domain is in $D(0, b^{\theta/(\theta + 1)})$. Analogous to \eqref{eq:G_divided_dressed}, we define the transforms $\Dressingtilde$ and $\Dressinginvtilde$ on $f(z)$ by
\begin{align}
  \Dressingtilde(f)(z) = {}& \n(z) e^{\varrho nz} f(\varrho nz), & \Dressinginvtilde(f)(z) = {}& \n^{-1}(z) e^{-\varrho nz} f(\varrho nz),
\end{align}
Suppose $R \in (0, 10 \rho n r_n)$ and $f(z)$ is a function defined on $D^*(R) \setminus \{ \arg z = \pm \frac{\theta^{-1} \pi - \gamma}{1 + \theta^{-1}} \}$. so that $\Dressingtilde(f)(z)$ is a function on $D^*(0, (\varrho n)^{-1} R) \setminus \{ \arg z = \pm \frac{\theta^{-1} \pi - \gamma}{1 + \theta^{-1}} \}$. Hence for any $r \in \realR_+ \cup \{ \infty \}$, the inverse transform $\Dressingtilde^{-1}$ is well defined on functions on $D^*(0, r) \setminus \{ \arg z = \pm \frac{\theta^{-1} \pi + \gamma}{1 + \theta^{-1}} \}$. let
\begin{equation}
  \V^{(n), \dressing}_{\alpha}(r) = \{ \Dressingtilde(f)(z) : f(z) \in \V_{\alpha}(\varrho n r) \}.
\end{equation}
From Definition \ref{defn:V_alpha_tilde} of function space $\V_{\alpha}(R)$, we derive the definition of function space $\V^{(n), \dressing}_{\alpha}(r)$ as follows.
\begin{defn} \label{defn:V_dressing_tilde}
  $\V^{(n), \dressing}_{\alpha}(r)$ consists of functions $f(z)$ on $z \in D^*(0, r) \setminus \{ \arg z = \pm \frac{\theta^{-1} \pi - \gamma}{1 + \theta^{-1}} \}$, such that $r \in (0, 10 r_n)$ and
  \begin{itemize}
  \item
    $f(z)$ is analytic in the sector $\arg (-z) \in (\frac{-\pi - \gamma}{1 + \theta^{-1}}, \frac{\pi + \gamma}{1 + \theta^{-1}})$ and the sector $\arg z \in (\frac{-\theta^{-1} \pi + \gamma}{1 + \theta^{-1}}, \frac{\theta^{-1} \pi - \gamma}{1 + \theta^{-1}})$ separately, and $f(z)$ is continuous up to the boundary on the two rays $\{ \arg z = \pm \frac{\theta^{-1} \pi - \gamma}{1 + \theta^{-1}} \}$.
  \item
    Let the two rays be oriented from $0$ to $\infty$. The boundary values of $f$ on the sides of the two rays satisfy
    \begin{align}
      &
        \begin{aligned}[b]
          f_+(z) - f_-(z) = {}& -e^{\frac{2\alpha + 1}{\theta + 1} \pi i} e^{\varrho n z(1 - e^{\frac{2\pi}{\theta + 1} i})} \frac{\n(z)}{\n(z e^{-\frac{2\pi}{\theta + 1} i)})} f(z e^{-\frac{2\pi}{\theta + 1} i}) \\
          = {}& J_{\U}(z) f(z e^{-\frac{2\pi}{\theta + 1} i}), 
        \end{aligned} 
                              & \arg z = {}& \frac{\theta^{-1} \pi - \gamma}{1 + \theta^{-1}}, \label{eq:G^lk_jump_1_tilde} \\
      &
        \begin{aligned}[b]
          f_+(z) - f_-(z) = {}& e^{-\frac{2\alpha + 1}{\theta + 1} \pi i} e^{\varrho n z(1 - e^{-\frac{2\pi}{\theta + 1} i})} \frac{\n(z)}{\n(z e^{\frac{2\pi}{\theta + 1} i})} f(z e^{\frac{2\pi}{\theta + 1} i}) \\
          = {}& J_{\U}(z) f(z e^{\frac{2\pi}{\theta + 1} i}), 
        \end{aligned} 
                              & \arg z = {}& \frac{-\theta^{-1} \pi + \gamma}{1 + \theta^{-1}}. \label{eq:G^lk_jump_2_tilde}
    \end{align}
    where $J_{\U}$ defined in \eqref{def:JU_tilde}.
  \item
    As $z \to 0$, $f(z)$ has the limit behaviour depending on $\alpha$ and $\theta$ characterized by \eqref{eq:V_alpha(R)_at_0:1_tilde} -- \eqref{eq:V_alpha(R)_at_0:3_tilde} (for $\f(z)$ there).
\end{itemize}
\end{defn}

As $z \to 0$, the limit behaviour of $U(z)$ that can be found from the limit behaviour of $(\Q_1, \Q_2) = (\R^{\pre}_1, \R^{\pre}_2)$ in \eqref{eq:tildeasy_Q_1_in_lens} and \eqref{eq:tildeasy_Q_2_in_lens} via the transform formula \eqref{eq:defn_U_from_Q_tilde}. By comparing the discontinuity condition \eqref{eq:U_jump_tilde} and the limit behaviour at $0$ of $\U$ and Definition \ref{defn:V_dressing_tilde}, we find that $\U(z)$ belongs to $\V^{(n), \dressing}(r)$ for all $r \in (0, 10 r_n)$.

\subsubsection{Functions $\G^{(\ell)}$ and $\H^{(\ell)}$, and operators $\P^{(0)}$ and $\Q^{(0)}$}

Analogous to \eqref{G^ell_from_model}, we apply transform $\Dressingtilde$ to $\G^{(\ell), \model}(z)$ and transform $\Dressinginvtilde$ to $\H^{(\ell), \model}(z)$, and get
\begin{align} \label{G^ell_from_model_tilde}
  \G^{(\ell)}(z) = {}& (\varrho n)^{-\ell} \Dressingtilde(\G^{(\ell), \model})(z), & \H^{(\ell)}(z) = {}& (\varrho n)^{-\ell} \Dressinginvtilde(\H^{(\ell), \model})(z).
\end{align}
$\G^{(\ell)}$ is analytic on $D(0, b^{\theta/(\theta + 1)}) \setminus \{ \arg z = \pm \frac{\theta^{-1} \pi - \gamma}{1 + \theta^{-1}} \}$, continuous up to the boundary on the two rays, and if $\ell \in \natN$, then $\G^{(\ell)}(z) \in \V^{(n), \dressing}_{\alpha}(r)$ for all $r \in (0, 10 r_n)$. Similarly, $H^{(\ell)}(z)$ is analytic on $D(0, b^{\theta/(\theta + 1)}) \setminus \{ \arg z = \pm \frac{\theta^{-1} \pi + \gamma}{1 + \theta^{-1}} \}$, and is continuous up to the boundary on the two rays.

Similar to the estimates \eqref{eq:est_G^ell_central} and \eqref{eq:G^ell_at_0:5}, from the definitions \eqref{eq:G_divided_model}, \eqref{eq:H_divided_model}, \eqref{eq:G_H_ell_tilde}, and the estimates \eqref{eq:asy_IJK_at_0} and \eqref{eq:G_divided}, we have the estimate that for any constant $C > 1$, If $\zeta \in D(0, C) \setminus (D(0, 1) \cup \{ \arg z = \pm \frac{\theta^{-1} \pi - \gamma}{1 + \theta^{-1}} \})$, then
  \begin{equation} \label{eq:est_G^ell_central_tilde}
    \zeta^{-\ell} \left( (\varrho n)^{\ell} \G^{(\ell)}((\varrho n)^{-1} \zeta) - \G^{(\ell), \model}(\zeta) \right) = \bigO(n^{1 - m_{\theta}}),
  \end{equation}
and if $\zeta \in D(0, 10 r_n) \setminus (D(0, 10^{-1} r_n) \cup \{ \arg z = \pm \frac{\theta^{-1} \pi - \gamma}{1 + \theta^{-1}} \})$, then 
  \begin{align}
    \lvert z^{-\ell} (\G^{(\ell)}(z) - 1) \rvert \leq {}& M n^{\frac{m_{\theta} - 1}{m_{\theta} + 1}}, & \lvert z^{-\ell} (\H^{(\ell)}(z) - 1) \rvert \leq {}& M n^{\frac{m_{\theta} - 1}{m_{\theta} + 1}}.
  \end{align}

From the operators $\P^{\model}: H(R) \to \V_{\alpha}(R)$ and $\Q^{\model}(f)(z): \V_{\alpha}(R) \to H(R)$, we define the operators $\P^{(0)}: H(r) \to \V^{(n), \dressing}_{\alpha}(r)$ and $\Q^{(0)}(f)(z): \V^{(n), \dressing}_{\alpha}(r) \to H(r)$ as follows. For any $h(z) \in H(r)$, we denote the function $h^{\sharp}(z) \in H(\varrho n R)$ by $h^{\sharp}(z) = h((\varrho n)^{-1} z)$ as in \eqref{eq:defn_P^0}. For any $h(z) \in H(r)$ with $h(z) = \sum^{\infty}_{\ell = 0} a_{\ell} z^{\ell}$, we define, for $z \in D(0, r)$, analogous to \eqref{eq:defn_P^0},
\begin{equation}
  \P^{(0)}(h)(z) = \Dressingtilde(\P^{\model}(h^{\sharp}))(z) = \sum^{\infty}_{\ell = 0} a_{\ell} \G^{(\ell)}(z) = \frac{1}{2\pi i} \oint_{\lvert w \rvert = r'} h(w) \sum^{\infty}_{\ell = 0} w^{-\ell} \G^{(\ell)}(z) \frac{dw}{w},
\end{equation}
where $r' \in (\lvert z \rvert, r)$. On the other hand, for any $h(z) \in H(\varrho n r)$, we denote the function $h^{\flat} \in H(r)$ by $h^{\flat}(z) = h(\varrho n z)$, like in \eqref{eq:defn_Q^0}. For any $f(z) \in \V^{(n), \dressing}_{\alpha}(r)$, we have that $\Dressingtilde^{-1} f)(z) \in \V_{\alpha}(\varrho nr)$, and by Lemma \ref{lem:V_alpha(R)_series_rep}, it has a unique series representation $(\Dressingtilde^{-1} f)(z) = \sum^{\infty}_{\ell = 0} c_{\ell} (\varrho n)^{-\ell} \G^{(\ell, \model)}(z)$ for some coefficients $c_{\ell}$. Then $f(z)$ has a unique series representation $f(z) = \sum^{\infty}_{\ell = 0} c_{\ell} G^{(\ell)}(z)$. Then for such $f(z) \in \V^{(n), \dressing}_{\alpha}(r)$, we define, for $z \in D^*(0, r) \setminus \{ \arg z = \pm \frac{\theta^{-1} \pi - \gamma}{1 + \theta^{-1}} \}$
\begin{equation}
  \Q^{(0)}(f)(z) = \Q^{\model}(\Dressingtilde^{-1}(f))^{\flat}(z) = \sum^{\infty}_{\ell = 0} a_{\ell} z^{\ell} = \frac{1}{2\pi i} \oint_{\lvert w \rvert = r'} f(w) \sum^{\infty}_{\ell = 0} \H^{(-\ell)}(w) z^{\ell} \frac{dw}{w},
\end{equation}
where  $r' \in (\lvert z \rvert, r)$.

From Lemma \ref{lem:PQ_id}, we derive that that $\Q^{(0)} \P^{(0)} = I$ as an operator on $H(r)$, and $\P^{(0)}\Q^{(0)} = I$ as an operator on $\V^{(n), \dressing}_{\alpha}(r)$. Moreover, the latter has a reproducing kernel representation that for all $f(z) \in \V^{(n), \dressing}_{\alpha}(r)$ and $z \in D^*(0, r) \setminus \{ \arg z = \pm \frac{\theta^{-1} \pi - \gamma}{1 + \theta^{-1}} \}$
\begin{equation}
  f(z) = \P^{(0)}(\Q^{(0)}(f))(z) = \frac{1}{2\pi i} \oint_{\lvert w \rvert = R'} f(w) \sum^{\infty}_{k = 0} \G^{(\ell)}(z) \H^{(-\ell)}(w) \frac{dw}{w},
\end{equation}
where $r' \in (\lvert z \rvert, r)$.

\subsection{Final transform to $\R$ and $\tRtilde$}

Recall the contours $\Sigma^R, \Sigma^R_1, \Sigma^R_2$ defined in \eqref{def:sigmaR} and \eqref{def:SigmaiR}. Let $\U$ be the function analytic on $D^*(0, 10 r_n) \setminus \{ \arg z = \pm \frac{\theta^{-1} \pi - \gamma}{1 + \theta^{-1}} \}$, as defined in \eqref{eq:defn_U_from_Q_tilde}. We define $\V(z) \in H(10 r_n)$ by
\begin{equation}
  \V(z) = \Q^{(0)}(\U)(z), \quad z \in D(0, 10 r_n).
\end{equation}
Then let the array of functions $(\R_2, \R_1)$ on $(\compC \setminus \SigmaR, \halfH \setminus \SigmaR)$ be defined as
 \begin{equation} \label{eq:R_1_R_2_around_0_tilde}
   (\R_2, \R_1) =
   \begin{cases}
     (\R^{\pre}_2, \R^{\pre}_1), & \text{$\R_2(z)$ on $\compC \setminus (D(0,r^{1 + \theta^{-1}}_n) \cup \Sigma^R$}) \\
                               & \text{and $\R_1(z)$ on $\halfH \setminus (D(0,r^{1 + \theta^{-1}}_n) \cup \Sigma^R)$}, \\
     \transform^{-1}(\V), & \text{$\R_2(z)$ on $D^*(0, r^{1 + \theta^{-1}}_n) \setminus \realR_+$}  \\
                               & \text{and $\R_1(z)$ on $\halfH \cap D^*(0, r^{1 + \theta^{-1}}_n) \setminus \realR_+$}.
   \end{cases}
 \end{equation}
Recall the contours $\SigmaR^{(j)}_i$ ($i = 1, 2$ and $j = 1, 2, 3$) defined in \eqref{eq:defn_SigmaR^i}. We denote the inversion mapping $\omega: \compC \setminus \{ 0 \} \to \compC \setminus \{ 0 \}$ as $\omega(z) = z^{-1}$, and then define
\begin{equation}
  \SigmaRtilde^{(j)}_i := \omega(\SigmaR^{(j)}_i), \quad \text{$i = 1, 2$ and $j = 1, 2, 3$},
\end{equation}
and
\begin{equation}
  \SigmaRtilde = \SigmaRtilde' \cup  C^{\R}(r_n), \quad \text{and} \quad \SigmaRtilde = \SigmaRtilde(r_n), \quad \text{where} \quad \SigmaRtilde' = \omega(\SigmaR'), \quad C^{\R}(r_n) = \omega(C^R(r_n)).
\end{equation}

Recall the transform $\transJ$ defined in \eqref{eq:defn_transJ}. We denote the inversion transform $\Inversion$ as $\Inversion(f)(s) = f(s^{-1})$. We note that $\Inversion^{-1} = \Inversion$. Let
\begin{equation} \label{eq:tR_in_R_1_R_2_tilde}
  \tRtilde = \Inversion(\transJ(\R_2, \R_1)).
\end{equation}
Like $\tR$ defined in \eqref{eq:tR_in_R_1_R_2}, we have that $\tRtilde(s)$ is continuous and bounded up to boundary $\SigmaRtilde$. We denote, analogous to $f^R(s)$ in \eqref{eq:defn_f^R}, the function $f^{\R}(s)$ for $s \in \SigmaRtilde$ as
\begin{equation}
  f^{\R}(s) =
  \begin{cases}
    \tRtilde_+(s) - \tRtilde_-(s), & s \in \SigmaRtilde \text{ and $s$ is not an intersection point}, \\
    0, & \text{otherwise}.
  \end{cases}
\end{equation}
Then analogous to \eqref{eq:R_in_f^R}, we have
\begin{equation} \label{eq:R_in_f^R_tilde}
  \tRtilde(s) = 1 + \frac{1}{2\pi i} \int_{w \in \SigmaRtilde} f^{\R}(w) \frac{dw}{w - s}.
\end{equation}
We also denote, analogous to $g^R(s)$ in \eqref{defn_g^R}, the function $g^{\R}(s)$ for $s \in \SigmaRtilde(2 r_n)$ as
\begin{equation}
  g^{\R}(s) =
  \begin{cases}
    \tRtilde_-(s), & s \in \SigmaRtilde' \text{ and $s$ is not an intersection point}, \\
    \tRtilde(s), & s \in C^{\R}(2 r_n) \text{ and $s$ is not an intersection point}, \\
    0, & \text{otherwise}.
  \end{cases}
\end{equation}
We note that both $f^{\R}(s)$ and $g^{\R}(s)$ are bounded and continuous except for the intersection points. Then analogous to $\Delta'$ in \eqref{eq:defn_Delta_SigmaR}, we define the transform $\Deltatilde'$ that acts on functions defined on $\SigmaRtilde'$. Let $f(s)$ be a function in $s \in \SigmaRtilde'$, then
\begin{equation}\label{def:DeltatildeR}
  \Deltatilde' f(s) =
  \begin{cases}
    J_{\SigmaRtilde^{(1)}_2}(s) f(\s), & \text{$s \in \SigmaRtilde^{(1)}_2$ and $\s = \omega(I_1(J_c(\omega(s)))) \in \SigmaRtilde^{(1)}_1$}, \\
    J_{\SigmaRtilde^{(2)}_1}(s) f(\s), & \text{$s \in \SigmaRtilde^{(2)}_1$ and $\s = \omega(I_2(J_c(\omega(s)))) \in \SigmaRtilde^{(2)}_2$}, \\
    J^{1}_{\SigmaRtilde^{(3)}_1}(s) f(s) + J^{2}_{\SigmaRtilde^{(3)}_2}(s) f(\s), & \text{$s \in \SigmaRtilde^{(3)}_1$ and $\s = \omega(I_2(J_c(\omega(s)))) \in \SigmaRtilde^{(3)}_2$}, \\
    J^{1}_{\SigmaRtilde^{(3)}_2}(s) f(s) + J^{2}_{\SigmaRtilde^{(3)}_1}(s) f(\s), & \text{$s \in \SigmaRtilde^{(3)}_2$ and $\s = \omega(I_1(J_c(\omega(s)))) \in \SigmaRtilde^{(3)}_1$}, \\
    0, & \text{$s \in \SigmaRtilde^{(1)}_1 \cup \SigmaRtilde^{(2)}_2$ or $s$ is an intersection point},
  \end{cases}
\end{equation}
where all contours do not contain intersection points, and
\begin{align}
  J_{\SigmaRtilde^{(1)}_2}(s) = {}& (J_{\Q})_{21}(z), & s \in {}& \SigmaRtilde^{(1)}_2 \text{ and $z = J_c(\omega(s)) \in \Sigma^R_1 \cup \Sigma^R_2$}, \\
  J_{\SigmaRtilde^{(2)}_1}(s) = {}& (J_{\Q})_{12}(z), & s \in {}& \SigmaRtilde^{(2)}_1 \text{ and $z = \J(s) \in (b + \rb, +\infty)$.} \\
  J^{1}_{\SigmaRtilde^{(3)}_1}(s) = {}& \P^{(b)}_{22}(z) - 1, \quad J^{2}_{\SigmaRtilde^{(3)}_2}(s) = \P^{(b)}_{12}(z), & s \in {}& \SigmaRtilde^{(3)}_1 \text{ and $z = \J(s) \in \partial D(b,\rb)$}, \\
  J^{1}_{\SigmaRtilde^{(3)}_2}(s) = {}& \P^{(b)}_{11}(z) - 1, \quad J^{2}_{\SigmaRtilde^{(3)}_1}(s) = \P^{(b)}_{21}(z), & s \in {}& \SigmaRtilde^{(3)}_2 \text{ and $z = \J(s) \in \partial D(b,\rb)$}.
\end{align}
Analogous to $\Delta_{2 r_n, r_n}$ in \eqref{eq:defn_Delta_2rn_rn}, we define a transform $\Deltatilde_{2 r_n, r_n}$ that maps from a function $f$ on $C^{\R}(2 r_n)$ to a function $\Deltatilde_{2 r_n, r_n} f$ on $C^{\R}(r_n)$, such that, with $f_0$ defined in \eqref{eq:tR_in_R_1_R_2}, for $s \in C^{\R}(r_n)$,
\begin{equation}
  (\Deltatilde_{2 r_n, r_n}f)(s) = \frac{1}{2\pi i} \oint_{\lvert w \rvert = 2 r_n} f(\omega(f^{-1}_0(w))) \sum^{\infty}_{\ell = 0} H^{(-\ell)}(w) \left( G^{(\ell)}(f_0(\omega(s))) - (f_0(\omega(s)))^{\ell} \right) \frac{dw}{w}.
\end{equation}
Then analogous to \eqref{eq:defn_Delta}, for a function $f$ defined on $\SigmaRtilde(2r_n)$, we define the transform $\Deltatilde$ maps $f$ to a function $\Deltatilde(f)$ on $\SigmaRtilde = \SigmaRtilde(r_n)$, as 
\begin{equation}
  \Deltatilde(f)(s) =
  \begin{cases}
    (\Deltatilde' f)(s), & s \in \SigmaRtilde' \text{ and $s$ is not an intersection point}, \\
    (\Deltatilde_{2r_n, r_n} f)(s), & s \in C^{\R}(r_n) \text{ and $s$ is not an intersection point}, \\
    0, & \text{otherwise}.
  \end{cases}
\end{equation}
Analogous to $\Cauchy$ defined in \eqref{eq:defn_Cauchy}, for a function $g$ defined on $\SigmaRtilde$, the transform $\Cauchytilde$ maps $g$ to a function defined on $\SigmaRtilde(2r_n)$ as
\begin{multline}
  \Cauchytilde(g)(s) = \\
  \begin{cases}
    \frac{1}{2\pi i} \int_{\SigmaRtilde} g(w) \frac{dw}{w - s}, & s \in C^{\R}(2r_n) \text{ and $s$ is not an intersection point}, \\
    \lim_{s' \to s \text{ from $-$ side}} \frac{1}{2\pi i} \int_{\SigmaRtilde} g(w) \frac{dw}{w - s'}, & s \in \SigmaR' \text{ and $s$ is not an intersection point}, \\
    0, & \text{otherwise}.
  \end{cases}
\end{multline}

Like the $L^2$ spaces $L^2(\SigmaR')$,  $L^2(C^R(r))$, $L^2(\SigmaR)$ and $L^2(\SigmaR(r))$ defined in \eqref{eq:defn_inner_prod}, we introduce the $L^2$ spaces $L^2(\SigmaRtilde'), L^2(C^{\R}(r)), L^2(\SigmaRtilde), L^2(\SigmaRtilde(r))$ also by \eqref{eq:defn_inner_prod}, with $\star = \SigmaRtilde', C^{\R}(r), \SigmaRtilde, \SigmaRtilde(r)$. We have estimates for the operator norms of $\Deltatilde$ and $\Cauchytilde$. Since the derivation of the estimates is parallel to \eqref{eq:est_Delta'_norm} -- \eqref{eq:est_Cauchy_norm}, we only summarize the results below:
\begin{align}
  \frac{\lVert \Deltatilde(f) \rVert_{L^2(\SigmaRtilde)}}{\lVert f \rVert_{L^2(\SigmaRtilde(2r_n))}} = {}& \bigO(n^{-\frac{1 - m_{\theta}}{1 + m_{\theta}}}), & \frac{\lVert \Cauchytilde(g) \rVert_{L^2(\SigmaRtilde(2r_n))}}{\lVert g \rVert_{L^2(\SigmaRtilde)}} = \bigO(1).
\end{align}
for any $f \in L^2(\SigmaR(2r_n))$ with $\lVert f \rVert_{L^2(\SigmaR(2r_n))} \neq 0$ and any $g \in L^2(\SigmaR)$ with $\lVert f \rVert_{L^2(\SigmaR(2r_n))} \neq 0$. Also as examples, we have that $f^{\R}, \Deltatilde(1) \in L^2(\SigmaRtilde)$ and analogous to \eqref{eq:est_Delta(1)} we have $\lVert \Delta(1) \rVert_{L^2(\SigmaR)} = \bigO(n^{\frac{-m_{\theta}}{m_{\theta} + 1}})$.

Analogous to Lemma \ref{lem:tR_est}, we have the following result:
\begin{lemma} \label{lem:tRtilde_est}
  There exists $C > 0$, such that for all large enough $n$, $\lvert \tRtilde(s) - 1 \rvert < C n^{\frac{1 - m_{\theta}}{1 + m_{\theta}}}$ for $s \in D^{\R}(r_n/2)$, where $D^{\R}(r)$ is the region encircled by $C^{\R}$.
\end{lemma}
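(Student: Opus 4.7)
The plan is to mirror the proof of Lemma \ref{lem:tR_est}\ref{enu:lem:tR_est:1} verbatim, transferring each ingredient to its tilded counterpart. All the analogous building blocks have already been set up in Section \ref{sec:asy_RH_Y_tilde}: the integral representation \eqref{eq:R_in_f^R_tilde} for $\tRtilde$, the transforms $\Deltatilde$ and $\Cauchytilde$ with the operator-norm bounds $\lVert \Deltatilde \rVert = \bigO(n^{(1-m_\theta)/(1+m_\theta)})$ and $\lVert \Cauchytilde \rVert = \bigO(1)$ on the appropriate $L^2$ spaces, and the estimate $\lVert \Deltatilde(1) \rVert_{L^2(\SigmaRtilde)} = \bigO(n^{-m_\theta/(m_\theta+1)})$ recorded just above the statement of the lemma.

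First I would observe that $f^{\R}$ satisfies the fixed-point identity
\begin{equation*}
\Deltatilde \Cauchytilde(f^{\R}) + \Deltatilde(1) = f^{\R}
\end{equation*}
in $L^2(\SigmaRtilde)$, obtained exactly as in the untilded case by writing $g^{\R} = \Cauchytilde(f^{\R}) + 1$ and applying $\Deltatilde$. Since $\lVert \Deltatilde \Cauchytilde \rVert = \bigO(n^{(1-m_\theta)/(1+m_\theta)}) \to 0$, for all sufficiently large $n$ the operator $1 - \Deltatilde \Cauchytilde$ is invertible on $L^2(\SigmaRtilde)$ by Neumann series, whence $f^{\R} = (1 - \Deltatilde \Cauchytilde)^{-1} \Deltatilde(1)$ and
\begin{equation*}
\lVert f^{\R} \rVert_{L^2(\SigmaRtilde)} \leq \lVert (1 - \Deltatilde \Cauchytilde)^{-1} \rVert \, \lVert \Deltatilde(1) \rVert_{L^2(\SigmaRtilde)} = \bigO\bigl(n^{-m_\theta/(m_\theta + 1)}\bigr).
\end{equation*}

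Next, for $s \in D^{\R}(r_n/2)$ I would apply the Cauchy--Schwarz inequality to \eqref{eq:R_in_f^R_tilde}:
\begin{equation*}
\lvert \tRtilde(s) - 1 \rvert \leq \lVert f^{\R} \rVert_{L^2(\SigmaRtilde)} \, \left\lVert \frac{1}{2\pi(w - s)} \right\rVert_{L^2(\SigmaRtilde)},
\end{equation*}
where $(w - s)^{-1}$ is regarded as a function of $w \in \SigmaRtilde$. Decomposing $\SigmaRtilde = \SigmaRtilde' \cup C^{\R}(r_n)$, the point $s$ lies inside $C^{\R}(r_n/2)$, so it is separated from $\SigmaRtilde'$ by a distance bounded below by a positive constant and $(w-s)^{-1}$ decays sufficiently at infinity; hence the $L^2$ contribution from $\SigmaRtilde'$ is $\bigO(1)$. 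The dominant contribution comes from $C^{\R}(r_n)$, on which $\lvert w - s \rvert$ is of order $r_n$ and the arc length is also of order $r_n$, giving an $L^2$ contribution of order $r_n^{-1/2} = n^{1/(m_\theta + 1)}$. Multiplying the two factors yields $\lvert \tRtilde(s) - 1 \rvert = \bigO(n^{(1 - m_\theta)/(1 + m_\theta)})$, which is the desired estimate.

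I do not anticipate a serious obstacle: every operator-norm and source-term estimate needed for the Neumann inversion has already been recorded in Section \ref{sec:asy_RH_Y_tilde}, and, in contrast to Lemma \ref{lem:tR_est}\ref{enu:lem:tR_est:2}, no $s \to 0$ asymptotic is asserted here, so the extra splitting of the Cauchy integral across $\SigmaR^{(2)}_2$ used there has no counterpart to reproduce. The only care point is verifying that the fixed-point identity for $f^{\R}$ really holds in the indicated $L^2$ space, which reduces to the same jump/Plemelj bookkeeping already implicit in the proof of Lemma \ref{lem:tR_est}.
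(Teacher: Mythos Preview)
Your proposal is correct and follows exactly the paper's own sketch: Neumann inversion of $1 - \Deltatilde\Cauchytilde$ to bound $\lVert f^{\R}\rVert_{L^2(\SigmaRtilde)} = \bigO(n^{-m_\theta/(m_\theta+1)})$, then Cauchy--Schwarz in \eqref{eq:R_in_f^R_tilde}. (One small correction: the lens pieces $\SigmaRtilde^{(1)}_i \subset \SigmaRtilde'$ emanate from $C^{\R}(r_n)$ and so come within distance $\sim r_n$, not a fixed constant, of $s$, making their $L^2$ contribution to $(w-s)^{-1}$ also $\bigO(r_n^{-1/2})$ rather than $\bigO(1)$ --- but this is absorbed into your final $\bigO(n^{1/(m_\theta+1)})$ and does not affect the conclusion.)
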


\begin{proof}[Sketch of proof]
  The proof of the lemma is analogous to that of Part \ref{enu:lem:tR_est:1} of Lemma \ref{lem:tR_est}. We show by Proposition \ref{prop_uniqueness_tilde} that $f^{\R}$ is the unique solution of the equation $\Deltatilde \Cauchytilde f + \Deltatilde(1) = f$ in $L^2(\SigmaRtilde)$, and then $\lVert f^{\R} \rVert_{L^2(\SigmaRtilde)} = \lVert (1 - \Deltatilde\Cauchytilde)^{-1}(\Deltatilde(1)) \rVert_{L^2(\SigmaRtilde)} = \bigO(n^{\frac{-m_{\theta} }{m_{\theta} + 1}})$. Then the desired estimate of $\tRtilde - 1$ follows from this estimate directly through \eqref{eq:R_in_f^R_tilde}. We omit the detail.
\end{proof}

\section{Proof of main results} \label{sec:proof_main}

\paragraph{Proof of Theorem \ref{thm:pqkappa}}

 From the transform $Y \to Q$ given in \eqref{def:thirdtransform}, it is readily seen from \eqref{def:Y} that
\begin{equation} \label{eq:Y_1_in_Z_i}
  p_n(z)=Y_1(z) =
  \begin{cases}
    Z_1(z), & \text{$z \in \compC$ outside the lens}, \\
    Z_1(z) + Z_2(z), & \text{$z$ in the upper part of the lens}, \\
    Z_1(z) - Z_2(z), & \text{$z$ in the lower part of the lens},
  \end{cases}
  \end{equation}
where
\begin{equation} \label{def:Zi}
  Z_1(z) = Q_1(z) P^{(\infty)}_1(z) e^{ng(z)}, \quad Z_2(z) = Q_2(z) P^{(\infty)}_2(z) \theta z^{-\alpha - 1 + \theta} e^{n(V(z) - \g(z) + \ell)},
\end{equation}
with $g(z)$ and $\g(z)$ defined in \eqref{def:g} and \eqref{def:tildeg}, respectively. By tracing back the transformations $Q \to R^{\pre} \to R \to \tR$ given in \eqref{eq:defn_R^pre}, \eqref{eq:R_1_R_2_around_0} and \eqref{eq:tR_in_R_1_R_2}, we have that if we consider $Q_1(z)$ as a function on the domain $D^*(0, (r_n/2)^{1 + \theta^{-1}}) \setminus \Sigma$ and $Q_2(z)$ as a function on the domain $(D^*(0, (r_n/2)^{1 + \theta^{-1}}) \cap \halfH) \setminus \Sigma$, then they are expressed by $(Q_1, Q_2) = \transform^{-1}(U)$, and
\begin{equation} \label{eq:asy_U(s)_at_0}
  \begin{split}
    U(s) = {}& (P^{(0)}(\transform(\transJ^{-1}(\tR))))(s) \\
    = {}& \frac{1}{2\pi i} \oint_{\lvert w \rvert = r_n/2} \transform \transJ^{-1}(\tR)(w) \sum^{\infty}_{\ell = 0} w^{-\ell} G^{(\ell)}(s) \frac{dw}{w} \\
    = {}& G^{(0)}(s) + \frac{1}{2\pi i} \oint_{\lvert w \rvert = r_n/2} (\transform \transJ^{-1}(\tR)(w) - 1) \sum^{\infty}_{\ell = 0} w^{-\ell} G^{(\ell)}(s) \frac{dw}{w}.
  \end{split}
\end{equation}
Suppose $C > 1$ is a constant, and we consider the value of $U(s)$ for $s = (\varrho n)^{-1} \zeta$ and $\zeta \in D(0, C) \setminus (D(0, 1) \cup \{ \arg z = \pm \frac{\theta^{-1} \pi + \gamma}{1 + \theta^{-1}} \})$. Then by \eqref{eq:est_G^ell_central}, $G^{(0)}(s) = e^{\zeta} G^{(0), \model}(\zeta) + \bigO(n^{1 - m_{\theta}})$, and by estimate of $G^{(\ell)}(s)$ in \eqref{eq:est_G^ell_central} and the estimate of $\tR(w)$ by Lemma \ref{lem:tR_est}, the contour integral in \eqref{eq:asy_U(s)_at_0} is $\bigO(n^{\frac{1 - m_{\theta}}{1 + m_{\theta}}})$.

The estimate above of $U(s)$ implies the estimate of $(Q_1(z), Q_2(z))$, and using the formulas \eqref{eq:defn_G^ell_model} and \eqref{eq:G_divided_model}, the estimate of $(Q_1(z), Q_2(z))$ is expressed by $I^{(1)}_{\theta, a}(z)$ and $I^{(2)}_{\theta, a}(z)$. Then with the estimates of $P^{(\infty)}_1(z), P^{(\infty)}_2(z)$ in \eqref{eq:P_and_P^pre} and the estimates of $g(z), \g(z)$ in \eqref{eq:g_error} and \eqref{eq:g_tilde_error}, we have the estimate of $Z_1(t)$ and $Z_2(t)$ in \eqref{eq:Y_1_in_Z_i}, and conclude that ($C_n$ is defined in \eqref{eq:result_p_n})
\begin{equation} \label{eq:y_n_in_annulus}
  p_n(z) = (-1)^n C_n \left( J_{\frac{\alpha + 1}{\theta}, \frac{1}{\theta}}(\theta (\rho n)^{1 + \theta^{-1}} z) + \bigO(n^{\frac{1 - m_{\theta}}{1 + m_{\theta}}}) \right), \quad \frac{1}{(\varrho n)^{1 + \theta^{-1}}} < \lvert z \rvert < \frac{C^{1 + \theta^{-1}}}{(\varrho n)^{1 + \theta^{-1}}}.
\end{equation}
Since $p_n(z)$ is a polynomial and $J_{\frac{\alpha + 1}{\theta}, \frac{1}{\theta}}(z)$ is an entire function, we have that the approximation above holds in the disk $\lvert z \rvert < \frac{C^{1 + \theta^{-1}}}{(\varrho n)^{1 + \theta^{-1}}}$. Hence we prove \eqref{eq:result_p_n}.

We prove \eqref{eq:result_q_n} in the same way. From the transform $\Y \to \Q$ given in \eqref{eq:Qtilde_from_Ytilde}, we have, analogous to \eqref{eq:Y_1_in_Z_i},
\begin{equation}
  q_n(z^{\theta}) = \Y_1(z) =
  \begin{cases}
    \Z_1(z), & \text{$z \in \halfH$ outside the lens}, \\
    \Z_1(z) + \Z_2(z), & \text{$z$ in the upper part of the lens}, \\
    \Z_1(z) - \Z_2(z), & \text{$z$ in the lower part of the lens},
  \end{cases}
  \end{equation}
where
\begin{equation}
  \Z_1(z) = \Q_1(z) \P^{(\infty)}_1(z) e^{n\g(z)}, \quad Z_2(z) = \Q_2(z) \P^{(\infty)}_2(z) z^{\alpha} e^{n(V(z) - g(z) + \ell)},
\end{equation}
with $g(z)$ and $\g(z)$ defined in \eqref{def:g} and \eqref{def:tildeg}, respectively. By tracing back the transformations $\Q \to \R^{\pre} \to \R \to \tRtilde$ given in \eqref{eq:defn_R^pre_tilde}, \eqref{eq:R_1_R_2_around_0_tilde} and \eqref{eq:tR_in_R_1_R_2_tilde}, we have that if we consider $\Q_1(z)$ as a function on the domain $(D^*(0, (r_n/2)^{1 + \theta^{-1}}) \cap \halfH) \setminus \Sigma$ and $\Q_2(z)$ as a function on the domain $D^*(0, (r_n/2)^{1 + \theta^{-1}}) \setminus \Sigma$, then they are expressed by $(\Q_2, \Q_1) = \transform^{-1}(\U)$, and, analogous to \eqref{eq:asy_U(s)_at_0}
\begin{equation} \label{eq:asy_U(s)_at_0_tilde}
  \begin{split}
    \U(s) = {}& (\P^{(0)}(\transform(\transJ^{-1}(\Inversion(\tRtilde)))))(s) \\
    = {}& \frac{1}{2\pi i} \oint_{\lvert w \rvert = r_n/2} \transform \transJ^{-1} \Inversion(\tR)(w) \sum^{\infty}_{\ell = 0} w^{-\ell} \G^{(\ell)}(s) \frac{dw}{w} \\
    = {}& \G^{(0)}(s) + \frac{1}{2\pi i} \oint_{\lvert w \rvert = r_n/2} (\transform \transJ^{-1} \Inversion(\tR)(w) - 1) \sum^{\infty}_{\ell = 0} w^{-\ell} G^{(\ell)}(s) \frac{dw}{w}.
  \end{split}
\end{equation}
Suppose $C > 1$ is a constant, and we consider the value of $\U(s)$ for $s = (\varrho n)^{-1} \zeta$ and $\zeta \in D(0, C) \setminus (D(0, 1) \cup \{ \arg z = \pm \frac{\theta^{-1} \pi - \gamma}{1 + \theta^{-1}} \})$. Then by \eqref{eq:est_G^ell_central_tilde}, $\G^{(0)}(s) = e^{\zeta} \G^{(0), \model}(\zeta) + \bigO(n^{1 - m_{\theta}})$, and by estimate of $\G^{(\ell)}(s)$ in \eqref{eq:est_G^ell_central_tilde} and the estimate of $\tRtilde(w)$ by Lemma \ref{lem:tRtilde_est}, the contour integral in \eqref{eq:asy_U(s)_at_0_tilde} is $\bigO(n^{\frac{1 - m_{\theta}}{1 + m_{\theta}}})$. Hence analogous to \eqref{eq:y_n_in_annulus}, we have the estimate of $q_n(z^{\theta})$, and express the result in terms of $q_n(z)$ as
\begin{equation}
  q_n(z) = (-1)^n \C_n \left( J_{\alpha + 1, \theta}(\theta^{\theta} (\rho n)^{\theta + 1} z) + \bigO(n^{\frac{1 - m_{\theta}}{1 + m_{\theta}}}) \right), \quad \frac{1}{(\varrho n)^{\theta + 1}} < \lvert z \rvert < \frac{C^{\theta + 1}}{(\varrho n)^{\theta + 1}}.
\end{equation}
This estimate extends to the disk $\lvert z \rvert < \frac{C^{\theta + 1}}{(\varrho n)^{\theta + 1}}$ by the analyticity of $q_n$ and $J_{\alpha + 1, \theta}$. Hence we prove \eqref{eq:result_q_n}.

\paragraph{Proof of Lemma \ref{lem:kappa}}

The proof of Lemma \ref{lem:kappa} is the same as the proof of \cite[Equation (5.31)]{Wang-Zhang21}. For completeness we give it here. Since $Y_2(z) = Cp_n(z)$ in \eqref{def:Y} has the limit \eqref{eq:Y_2_asy_infty} at $\infty$, we only need to find the limit of $Y_2(z) z^{-(n + 1)\theta}$ as $z \to \infty$ in $\halfH$. Like \eqref{eq:Y_1_in_Z_i}, we have that for $z \in \halfH$ and $\lvert z \rvert$ large enough, by \eqref{def:thirdtransform}, \eqref{eq:defn_R^pre} and \eqref{eq:R_1_R_2_around_0}, $Y_2(z) = P^{(\infty)}_2(z) R_2(z) e^{-n(\g(z) - \ell)}$ (see \cite[Equation (5.33)]{Wang-Zhang21}). From the definition formulas \eqref{def:g} of $g(z)$ and \eqref{eq:P1} of $P^{(\infty)}_2(z)$, we have the limit of $P^{(\infty)}_2(z) e^{-n\g(z)} z^{-n\theta}$ as $z \to \infty$ (see \cite[Equation (5.34)]{Wang-Zhang21}). Using relation \eqref{eq:tR_in_R_1_R_2} to express $R_2$ by $\tR$, and using the limit of $\tR(s)$ as $s \to 0$ in Part \ref{enu:lem:tR_est:2} of Lemma \ref{lem:tR_est}, we have that $R_2(z) = 1 + \bigO(n^{\frac{-m_{\theta}}{m_{\theta} + 1}})$ as $z \to \infty$. (Here our result is slightly stronger than \cite[Equation (5.35)]{Wang-Zhang21}, because the corresponding estimate of $\tR(s)$ as $s \to 0$ in \cite[Lemma 3.22]{Wang-Zhang21} is offhand.) Hence we derive \eqref{eq:limit_kappa_n}.

\paragraph{Sketch of proof of Theorem \ref{thm:kernel}}

The proof is identical the that of \cite[Theorem 1.3]{Wang-Zhang21} in \cite[Section 5.2]{Wang-Zhang21}. We outline the strategy here and refer to \cite[Section 5.2]{Wang-Zhang21} for details.

We assume, without loss of generality, $V(0) = 0$, and then due to the analyticity of $V$ and the assumption \eqref{eq:one-cut_regular},
\begin{equation} \label{eq:vanishing_V}
  V(x) = x^r + \bigO(x^{r + 1}), \qquad x\to 0,
\end{equation}
for some positive integer $r$. Then we define a family of functions $V_{\tau}$ indexed by a continuous parameter $\tau \in [0, 1]$ as follows:
\begin{equation}\label{def:vtau}
  V_{\tau}(x): =
  \begin{cases}
    \tau^{-1} V(\tau^{1/r} x), & \tau \in (0, 1], \\
    x^r, & \tau = 0.
  \end{cases}
\end{equation}
Clearly, $V_{\tau}(x)$ is continuous in both $x$ and $\tau$, and our assumption on the external field $V$ implies that Theorem \ref{thm:pqkappa} still holds with $V$ replaced by $V_{\tau}$. 

By Theorem \ref{thm:pqkappa} and Lemma \ref{lem:kappa}, we have, as $n\to\infty$ and uniformly for $x,y$ in compact subsets of $[0,\infty)$, (see \cite[Lemma 5.1]{Wang-Zhang21})

\begin{equation}\label{eq:k_n_asy}
  \frac{e^{-nV(x/(\rho n)^{1 + 1/\theta})}}{(\rho n)^{\alpha(1 + 1/\theta) + 1/\theta}} k^{(V)}_{n, j}\left( \frac{x}{(\rho n)^{1 + 1/\theta}}, \frac{y}{(\rho n)^{1 + 1/\theta}} \right) = \theta c^{-\frac{\theta}{\theta + 1}} k^{(\alpha, \theta)}(x, y) + \bigO (n^{\frac{1 - m_{\theta}}{1 + m_{\theta}}}),
\end{equation}
where $K^{(V)}_{n, j}$ is defined in \eqref{eq:sum_form_K} and
\begin{equation}\label{def:kalphatheta}
  k^{(\alpha, \theta)}(x, y) = \theta^{\alpha} J_{\frac{\alpha + 1}{\theta}, \frac{1}{\theta}}(\theta x) J_{\alpha + 1, \theta}((\theta y)^{\theta}).
\end{equation}
When $\theta$ is an integer, $k^{(\alpha, \theta)}(x, y)$ in \eqref{def:kalphatheta} agrees with \cite[Equation (5.29)]{Wang-Zhang21}. Identity \eqref{eq:k_n_asy} holds for all $V$ satisfying the condition in Theorem \ref{thm:pqkappa} and it is straightforward to check that the error term in \eqref{eq:k_n_asy} is uniform for all $V_{\tau}$ ($\tau \in (0, 1]$) in place of $V$. Hence we have the following estimate (see \cite[Lemma 5.2]{Wang-Zhang21}):
\begin{lemma} \label{prop:generalized}
  Suppose $V$ satisfies \eqref{eq:loggrowth}, \eqref{eq:one-cut_regular} and \eqref{eq:vanishing_V}. With $V_{\tau}$ defined in \eqref{def:vtau}, we have, for any $M, \epsilon > 0$, there exists a positive integer $N_{M, \epsilon}$ such that if $n > N_{M, \epsilon}$, then
\begin{multline}
  \left\lvert n^{-\alpha(1 + \frac{1}{\theta}) - \frac{1}{\theta}} k^{(V_{\tau})}_{n, n} \left( \frac{x}{n^{1 + 1/\theta}}, \frac{y}{n^{1 + 1/\theta}} \right) \right. \\
  - \left. \theta (c^{(V_{\tau})})^{-\frac{\theta}{\theta + 1}} (\rho^{(V_{\tau})})^{\alpha(1 + \frac{1}{\theta}) + \frac{1}{\theta}} k^{(\alpha, \theta)}((\rho^{(V_{\tau})})^{1 + \frac{1}{\theta}} x, (\rho^{(V_{\tau})})^{1 + \frac{1}{\theta}} y) \right\rvert < \epsilon,
\end{multline}
uniformly for all $\tau \in [0, 1]$ and $x, y \in [0, M]$, where $k^{(V_{\tau})}_{n, n}(x,y)$ and $k^{(\alpha, \theta)}(x,y)$ are defined in \eqref{eq:sum_form_K} and \eqref{def:kalphatheta}, respectively.
\end{lemma}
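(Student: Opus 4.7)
\medskip

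The plan is to obtain uniformity by tracking how the various quantities produced by the Riemann--Hilbert analysis of Sections \ref{sec:asy_RH_Y} and \ref{sec:asy_RH_Y_tilde} depend on the potential, and then applying a compactness argument on the parameter $\tau \in [0,1]$. First I would verify that the one-parameter family $\{V_\tau\}_{\tau \in [0,1]}$ defined in \eqref{def:vtau} is continuous in $\tau$ in the sense that on every compact subset of $[0,\infty)$, $V_\tau(x)$ and all its derivatives depend continuously on $\tau$: for $\tau > 0$ this is immediate, and at $\tau = 0$ it follows from the expansion \eqref{eq:vanishing_V} together with the analyticity of $V$. Next, I would check that each $V_\tau$ satisfies the sufficient condition \eqref{eq:one-cut_regular}, and thus the one-cut regular property, so that Theorem \ref{thm:pqkappa} and Lemma \ref{lem:kappa} apply individually to $V_\tau$ for every $\tau \in [0,1]$.

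The key step is to verify that all quantities and error bounds produced in Sections \ref{sec:asy_RH_Y} and \ref{sec:asy_RH_Y_tilde} depend continuously on $\tau$ through $V_\tau$. Specifically, I would argue that the cut-edge $b^{(V_\tau)}$, the density coefficients $d_1^{(V_\tau)}$, $d_2^{(V_\tau)}$ in \eqref{eq:psiasy}, the constant $\ell^{(V_\tau)}$, and hence $c^{(V_\tau)}$, $\rho^{(V_\tau)}$, $\varrho^{(V_\tau)}$ in \eqref{eq:defn_rho} and \eqref{eq:value_r_n} are continuous functions of $\tau \in [0,1]$; this follows from the variational characterization of the equilibrium measure, the explicit expression for $\psi$ from \cite{Claeys-Romano14}, and the implicit function theorem applied at each $\tau_0$. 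In turn, the contour $\Sigma$ (with its bound \eqref{eq:Sigma_shape_fixed}), the disk radius $r^{(b)}$, the global parametrices $P^{(\infty)}$, $\P^{(\infty)}$ from \eqref{eq:P1}--\eqref{eq:Fs}, the Airy parametrix $P^{(b)}$, $\P^{(b)}$ from \eqref{def:Pb}, \eqref{def:tildePb}, and the function $n(z)$ from \eqref{eq:defn_n(z)} (and its analogue $\n(z)$) all depend continuously on $\tau$. Therefore the implicit constants in the estimates \eqref{eq:est_JQ_12}--\eqref{eq:est_Pb_row2}, \eqref{eq:est_n(z)}, the operator-norm bounds \eqref{eq:est_Delta'_norm}--\eqref{eq:est_Cauchy_norm}, and the conclusions of Lemmas \ref{lem:tR_est} and \ref{lem:tRtilde_est} can all be chosen continuously in $\tau$.

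By compactness of $[0,1]$, each continuous function of $\tau$ used above attains finite extrema, so every implicit constant can be replaced by a single $\tau$-independent constant. Carrying this uniformization through the proofs of Theorem \ref{thm:pqkappa} and Lemma \ref{lem:kappa}, and then through the derivation of \eqref{eq:k_n_asy}, yields the uniform statement
\begin{equation*}
  n^{-\alpha(1+\frac{1}{\theta})-\frac{1}{\theta}} k^{(V_\tau)}_{n,n}\!\left(\frac{x}{n^{1+1/\theta}}, \frac{y}{n^{1+1/\theta}}\right) = \theta (c^{(V_\tau)})^{-\frac{\theta}{\theta+1}} (\rho^{(V_\tau)})^{\alpha(1+\frac{1}{\theta})+\frac{1}{\theta}} k^{(\alpha,\theta)}\!\left((\rho^{(V_\tau)})^{1+\frac{1}{\theta}} x, (\rho^{(V_\tau)})^{1+\frac{1}{\theta}} y\right) + \bigO(n^{\frac{1-m_\theta}{1+m_\theta}}),
\end{equation*}
where the scaling factor $(\rho^{(V_\tau)})^{1+1/\theta}$ arises from rewriting \eqref{eq:k_n_asy} with the scaling $n^{1+1/\theta}$ in place of $(\rho n)^{1+1/\theta}$, and the error is uniform in $\tau \in [0,1]$ and $x, y \in [0,M]$. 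Choosing $n$ so large that the error is below $\epsilon$ gives the desired bound.

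The main obstacle is genuinely just the bookkeeping of continuity in $\tau$: one must ensure that the contour $\Sigma$ (whose shape is defined relative to the cut $[0,b^{(V_\tau)}]$ and the behaviour of $\Re \phi^{(V_\tau)}$) can be chosen locally constant in $\tau$, that the construction of the local parametrix at $0$ (especially the function $n(z)$ and the operators $P^{(0)}, Q^{(0)}$ built from $G^{(\ell)}, H^{(\ell)}$) behaves continuously, and that the boundary case $\tau = 0$ (monomial potential $V_0(x) = x^r$) is not degenerate. The last point requires only that $x^r$ itself is one-cut regular, which follows from \eqref{eq:one-cut_regular} since $(x^r)''x + (x^r)' = r^2 x^{r-1} > 0$ for $x > 0$. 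With these continuity verifications in place, the rest is a straightforward compactness argument.
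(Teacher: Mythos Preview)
Your proposal is correct and follows exactly the approach the paper takes: the paper asserts (just before stating the lemma) that ``it is straightforward to check that the error term in \eqref{eq:k_n_asy} is uniform for all $V_{\tau}$'' and refers to \cite[Lemma 5.2]{Wang-Zhang21}, which is precisely the continuity-in-$\tau$ plus compactness-of-$[0,1]$ argument you outline. Your sketch in fact supplies more detail than the paper does here, including the verification that each $V_\tau$ inherits condition \eqref{eq:one-cut_regular} and that the boundary case $\tau=0$ is nondegenerate.
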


The strategy of proof now is to split the summation in the correlation kernel $K_n$ into two parts, such that one part sums from $n = 0$ to $n = N$ with $N$ a large constant, and the other part is the remains, and estimate each part separately. For the first part, we have that uniformly for $x, y \in [0, M]$,
\begin{multline} \label{eq:est_k_sum_remainder}
  \lim_{n \to \infty} \frac{1}{n^{(\alpha + 1)(1 + \frac{1}{\theta})}} \sum^N_{k = 0} k^{(V)}_{n, k}\left( \frac{x}{n^{1 + 1/\theta}}, \frac{y}{n^{1 + 1/\theta}} \right) \\
  = \lim_{n \to \infty} \frac{1}{n^{(\alpha + 1)(1 + \frac{1}{\theta})}} \sum^N_{k = 0} k^{(V_0)}_{n, k}\left( \frac{x}{n^{1 + 1/\theta}}, \frac{y}{n^{1 + 1/\theta}} \right) = 0.
\end{multline}
For the second part, we have
\begin{multline}
  n^{-(\alpha + 1)(1 + \frac{1}{\theta}) +1 } \sum^n_{j = N + 1} k^{(V)}_{n, j} \left( \frac{x}{n^{1 + 1/\theta}}, \frac{y}{n^{1 + 1/\theta}} \right) =
 \\
  \sum^n_{j = N + 1} \left( \frac{j}{n} \right)^{(\alpha + 1)(\frac{1}{\theta} - \frac{1}{r}) + \alpha} j^{-\alpha(1 + \frac{1}{\theta}) - \frac{1}{\theta}} k^{(V_{j/n})}_{j, j} \left( \left( \frac{j}{n} \right)^{1 + \frac{1}{\theta} - \frac{1}{r}} \frac{x}{j^{1 + 1/\theta}}, \left( \frac{j}{n} \right)^{1 + \frac{1}{\theta} - \frac{1}{r}} \frac{y}{j^{1 + 1/\theta}} \right).
\end{multline}
Given $M, \epsilon > 0$, if we further take $N > N_{M, \epsilon}$ in the above formula with $N_{M, \epsilon}$ given in Lemma \ref{prop:generalized}, it then follows that
\begin{multline} \label{eq:est_K_first}
 \left\lvert \frac{1}{n^{(\alpha + 1)(1 + 1/\theta)}} \sum^n_{j = N + 1} k^{(V)}_{n, j} \left( \frac{x}{n^{1 + 1/\theta}}, \frac{y}{n^{1 + 1/\theta}} \right) - \frac{1}{n} \sum^n_{j = N + 1} f \left( \frac{j}{n} \right) \right\rvert
\\
\leq \frac{\epsilon }{n} \sum^n_{j = N + 1} \left( \frac{j}{n} \right)^{(\alpha + 1)(\frac{1}{\theta} - \frac{1}{r}) + \alpha}
< \epsilon',
\end{multline}
for $x, y \in [0, M]$, where
\begin{multline}
f(t) = t^{(\alpha + 1)(\frac{1}{\theta} - \frac{1}{r}) + \alpha} \theta (c^{(V_{t})})^{-\frac{\theta}{\theta + 1}} (\rho^{(V_{t})})^{\alpha(1 + \frac{1}{\theta}) + \frac{1}{\theta}}
\\
\times k^{(\alpha, \theta)} \left( t^{1 + \frac{1}{\theta} - \frac{1}{r}} (\rho^{(V_{t})})^{1 + \frac{1}{\theta}} x, t^{1 + \frac{1}{\theta} - \frac{1}{r}} (\rho^{(V_{t})})^{1 + \frac{1}{\theta}} y \right)
\end{multline}
and $\epsilon' = \epsilon/[(\alpha + 1)(1 + \frac{1}{\theta} - \frac{1}{r})]$. Since $k^{(\alpha, \theta)}(x, y)$ is a continuous function in $x, y \in [0, \infty)$, and $c^{(V_t)}$ and $\rho^{(V_t)}$ are continuous functions in $t$ with values in a compact subset of $(0, \infty)$ as $t \in [0, 1]$, we have that $f(t)$ is continuous for $t \in (0, 1]$ with $f(t) = \bigO(t^{(\alpha + 1)(\frac{1}{\theta} - \frac{1}{r}) + \alpha})$ as $t \to 0_+$. We note that for all $\alpha > -1$ and $r \geq 1$, $(\alpha + 1)(\theta^{-1} - r^{-1}) + \alpha > -1$. Thus, we observe that the summation involving $f(j/n)$ in \eqref{eq:est_K_first} is a Riemann sum of a definite integral as $n\to \infty$, that is,
\begin{equation} \label{eq:est_k_sum_main}
  \lim_{n \to \infty} \frac{1}{n} \sum^n_{j = N + 1} f \left( \frac{j}{n} \right) = \int^1_0 f(t)dt.
\end{equation}
Hence, we only need to show the equality
\begin{equation} \label{eq:integral_for_kernel}
  \int^1_0 f(t) dt = \theta^2 \int^{(\rho^{(V)})^{1 + 1/\theta}}_0 u^{\alpha} k^{(\alpha, \theta)}(ux, uy) du,
\end{equation}
to conclude Theorem \ref{thm:kernel}. The proof of \eqref{eq:integral_for_kernel} is technical and involves analysis of the equilibrium measure. Since the proof \eqref{eq:integral_for_kernel} is exactly the same as the proof of \cite[Equation (5.46)]{Wang-Zhang21} that is in \cite[Section 5.2]{Wang-Zhang21}, we omit it here. (The proof there does not require $\theta$ to be an integer or a rational, and is valid for all real $\theta > 0$.)


\end{document}